\documentclass[letterpaper,12pt]{article}
\pdfoutput=1

\textheight 22.4cm
\textwidth 15.5cm
\topmargin -1cm
\oddsidemargin 5mm
\evensidemargin 5mm

\usepackage{amssymb,amsmath,amsthm,amsfonts}
\usepackage{graphicx, xcolor, varwidth}
\usepackage[percent]{overpic}
\usepackage{mathtools}
\usepackage{setspace}
\usepackage{cite}
\usepackage{tikz}
\usepackage{enumitem}

\usetikzlibrary{math}
\usetikzlibrary{shapes.geometric}
\usetikzlibrary{shapes.misc}
\tikzstyle{none}=[inner sep=0pt]
\tikzstyle{simplethree}=[circle,fill=black,draw=black,line width=1.500,minimum size = 2pt,inner sep = 2pt]
\tikzstyle{simple}=[circle,fill=black,draw=black,line width=1.000,minimum size = 2pt,inner sep = 2pt]
\tikzstyle{simpletoo}=[circle,fill=black,draw=black,line width=1.000,minimum size = 1pt,inner sep = 1pt]
\tikzstyle{simpled}=[circle,fill=black,draw=black,dashed,line width=1.000,minimum size = 1pt,inner sep = 1pt]
\tikzstyle{simplesq}=[rectangle,fill=black,draw=black,line width=1.000,minimum size = 1pt,inner sep = 1.5pt]
\tikzstyle{simplestar}=[star,fill=black,draw=black,line width=1.000,minimum size = 2pt,inner sep = 1.5pt]
\tikzstyle{simplecirc}=[circle,fill=black,draw=black,line width=1.000,minimum size = 2pt,inner sep = 1.5pt]
\tikzstyle{simpletrapez}=[trapezium,fill=black,draw=black,line width=1.000,minimum size = 2pt,inner sep = 1.5pt]
\tikzstyle{simplesemicircle}=[semicircle,fill=black,draw=black,line width=1.000,minimum size = 2pt,inner sep = 1.5pt]
\tikzstyle{simplecrossout}=[cross out,fill=black,draw=black,line width=1.000,minimum size = 2pt,inner sep = 1.5pt]
\tikzstyle{simplekite}=[kite,fill=black,draw=black,line width=1.000,minimum size = 2pt,inner sep = 1.5pt]

\usepackage{xcolor}
\usepackage[dotinlabels]{titletoc}
\usepackage[colorlinks=true, urlcolor=dblue, linktocpage=true, linkcolor=dblue,citecolor=purple]{hyperref}

\newtheorem{defn}{Definition}

\newtheorem{thm}{Theorem}
\newtheorem{lemma}{Lemma}
\newtheorem{remark}{Remark}

\newcommand{\mrm}{\mathrm}

\newcommand{\nn}{\nonumber}
\newcommand{\pd}{\partial}
\newcommand{\lb}{\left(}
\newcommand{\rb}{\right)}
\newcommand{\lcb}{\left\{}

\newcommand{\DD}{\mathcal{D}}

\newcommand{\GG}{\mathcal{G}}

\newcommand{\LL}{\mathcal{L}}
\newcommand{\MM}{\mathcal{M}}

\newcommand{\OO}{\mathcal{O}}
\newcommand{\PP}{\mathcal{P}}

\newcommand{\UU}{\mathcal{U}}

\newcommand{\rcb}{\right\}}
\newcommand{\lsb}{\left[}
\newcommand{\rsb}{\right]}

\newcommand\be{\begin{equation}}
\newcommand\ba{\begin{eqnarray}}
\newcommand\ee{\end{equation}}
\newcommand\ea{\end{eqnarray}}

\newcommand\otoz{\omega_{\mu\to0}}

\numberwithin{equation}{section}

\colorlet{dblue}{blue!70!black}

\newcommand{\arxivold}[1]
  {\href{http://arxiv.org/abs/#1}{#1}}
\newcommand{\arxiv}[1]
  {\href{http://arxiv.org/abs/#1}{arXiv:#1}}

\newcommand{\ipj}{{\langle i j \rangle}}

\usepackage{graphicx}
\usepackage{amsmath}

\begin{document}
\begin{spacing}{1.3}
\begin{titlepage}

\begin{center}
\begin{spacing}{2}
{\LARGE{Massless $p2$-brane modes and the critical line}}
\end{spacing}
% Massless $p2$-brane spectrum and the critical Riemann zeta zeros

\vspace{0.5cm}
An Huang,$^{1}$ Bogdan Stoica,$^{2}$ and Xiao Zhong$^{3}$

\vspace{5mm}

{\small

\textit{
$^1$Department of Mathematics, Brandeis University, Waltham, MA 02453, USA}\\

\vspace{3mm}

\textit{$^2$Department of Physics \& Astronomy, Northwestern University, Evanston, IL 60208, USA}\\

\vspace{3mm}

\textit{$^3$Department of Pure Mathematics, University of Waterloo, Waterloo, Ontario, N2L 3G1, Canada}

\vspace{5mm}

{\tt anhuang@brandeis.edu, bstoica@northwestern.edu, xiao.zhong@uwaterloo.ca}

\vspace{1cm}
}

\end{center}

\begin{abstract}
\noindent We consider a $p2$-brane model as a theory of maps from the vertices of the Bruhat-Tits tree times $\mathbb{Z}$ into $\mathbb{R}^d$. We show that in order for the worldsheet time evolution to be unitary, a certain spectral parameter of the model must localize at special loci in the complex plane, which include the (0,1) interval and the critical line of real part $1/2$. The excitations of the model are supported at these loci, with commutation relations closely resembling those of the usual bosonic string. We show that the usual Hamiltonian, momentum, and angular momentum are conserved quantities, and the Poincar\'e algebra is obeyed. Assuming an Euler product relation for the spectrum, the Riemann zeta zeros on the critical line have spectral interpretation as the massless photon and graviton in the Archimedean~theory.
\end{abstract}

\vfill

\begin{flushleft}{\small nuhep-th/21-09}\end{flushleft}

\end{titlepage}

$ $
\vspace{-2cm}
\setcounter{tocdepth}{2}
\tableofcontents

\section{Introduction}

\noindent Analogues to the bosonic string have been investigated since the inception of $p$-adic physics. While various features of the strings have been considered in a $p$-adic setting (such as scattering amplitudes \cite{FreundOlson,FreundWitten,BFOW,adelicNpoint}, or a worldsheet description \cite{zabrodin,zabrodin2}), so far an analogue of the usual bosonic quantization treatment has been missing. In this paper we will address this, by introducing a $p$-adic system with oscillation modes closely resembling the usual bosonic oscillators of string theory.

The motivation for considering such $p$-adic models is to explore the landscape of target space theories which can be obtained starting from $p$-adic worldsheets. Because the worldsheet is non-Archimedean, these theories can be in regimes which are not easily accessible from conventional descriptions of physics. However, as we will explain in Section \ref{secconc}, these theories could still be  part of the lanscape of vacua in string theory and M-theory. Thus, the $p$-adic approach can provide a handle on these points in theory space. We should also emphasize that despite the $p$-adic nature of the worldsheet, the field content of the target space theory is Archimedean.

There is another motivation for our work, which may not be immediately apparent. It turns out that the types of models we will consider here are intimately related to the Riemann zeta function, and to the irreducible unitary representations of $\mrm{PGL}(2,\mathbb{Q}_p)$. In fact, the scalar field dynamics, zeta function and irreducible representations combine in a natural way into the \emph{physics}: the physical content of the models contains certain subspaces of the representations, which evolve according to scalar field equations of motion, with the zeta function encoding information on the spectrum of the theory.

In the case of the usual Archimedean critical bosonic string theory, demanding that the target theory contain a massless $U(1)$ gauge field and graviton fixes the dimension of the spacetime to $D=26$. However, in the case of the $p$-adic models considered in this paper the situation is different. Assuming a certain Euler product relation for the masses of the $p$-adic and Archimedean excitations (first anticipated in \cite{Huang:2019nog}), demanding that the spectrum contain a massless $U(1)$ field and graviton does not immediately fix the spacetime dimension, rather it enforces that the physical excitations live at the critical line zeros of the zeta function, in a way which will be made precise in the paper. Because our approach only applies to the critical line zeros of the zeta function, this result can be thought of as a converse to a modern version of the Hilbert-P\'olya conjecture, in that the critical line zeros obtain a spectral interpretation as the massless modes of our models, when considered across all places. With the benefit of hindsight, given the central nature of string theory in physics and of the Riemann zeta function in mathematics, perhaps such a deep connection should have been expected.\footnote{Recent works in non-Archimedean and $p$-adic physics include \cite{Gubser:2016guj,Bocardo-Gaspar:2016zwx,Gubser:2017vgc,Bhattacharyya:2017aly,Gubser:2017tsi,Dutta:2017bja,Gubser:2017qed,Bocardo-Gaspar:2017atv,2018arXiv180101189H,Marcolli:2018ohd,Gubser:2018bpe,Mendoza-Martinez:2018ktr,Gubser:2018yec,Jepsen:2018dqp,Gubser:2018cha,Gubser:2018ath,Heydeman:2018qty,Hung:2018mcn,Jepsen:2018ajn,Hung:2019zsk,Bocardo-Gaspar:2019pzk,Gubser:2019uyf,Garcia-Compean:2019jvk,Zuniga-Galindo:2020ypw,Chen:2021ipv,Chen:2021rsy,Chen:2021qah,Bocardo-Gaspar:2021ukf}.}

\subsection{The $p2$-brane}

Let's now briefly describe our setup. In \cite{Huang:2019nog}, building on the earlier work of Zabrodin \cite{zabrodin,zabrodin2}, a model of the $p$-adic string as maps $X:V\lb T_p\rb\to \mathbb{R}^D$ (that is from the vertices of the Bruhat-Tits tree into $D$-dimensional flat space) was considered. In the present paper, we will extend this model by including an extra time direction. Although the notion of dimensionality is somewhat fuzzy in the context of $p$-adic physics, we can interpret the number of dimensions of the worldsheet as follows. The (asymptotically hyperbolic) Bruhat-Tits worldsheet considered by Zabrodin \cite{zabrodin,zabrodin2} can be thought of as two-dimensional, for e.g. because it plays the same role in deriving the $n$-point Veneziano tree level amplitudes via tachyon vertex operator insertion as the usual bosonic string worldsheet. Alternatively, equipping the tree with edge degrees of freedom \cite{Gubser:2016htz} produces a close analogue of Euclidean two-dimensional gravity (\hspace{-0.002cm}\cite{Stoica:2018zmi}, based on \cite{unpub}). If the Bruhat-Tits tree can be roughly thought-of as two-dimensional, then adding an extra time direction will produce a three-dimensional worldsheet, i.e. a $p$-adic 2-brane, or \emph{$p2$-brane}. Unlike the usual bosonic string worldsheet, which is flat, the $p2$-brane preserves the hyperbolic asymptotics of the Bruhat-Tits tree.  The reason for the appearance of extended objects is simply that our theory will likely not be in a regime easily accessible from a string description.\footnote{Nonetheles, it is an interesting  open question whether a $p$-adic string description based on the Bruhat-Tits tree with no time direction exists for our models.}

We now describe the symmetries of the scalar fields $X$ on the worldsheet. Although in principle $X$ could have any dependence on the worldsheet coordinates, we will restrict the fields on a constant time slice to be invariant under the action of $B\cap \mrm{PGL}\lb 2,\mathbb{Z}_p\rb$, with $B$ the Borel subgroup, as in \cite{Huang:2019nog}. With this restriction, the time slice Laplacian eigenfunctions will be plane waves propagating from a boundary point $\infty$, with the eigenvalues obeying the zeta function symmetry $\mu\to 1-\mu$. Crucially, the eigenfunctions will satisfy orthogonality relations \eqref{eq11} below, first derived in \cite{Huang:2019nog}. As we will see, these orthogonality relations will be fundamental for our constructions in the rest of the paper. It is possible to relax the symmetry requirement of invariance under $B\cap \mrm{PGL}\lb 2,\mathbb{Z}_p\rb$, by considering at most $p+1$ points on the boundary from which the plane waves propagate, so that orthogonality relations \eqref{eq11} are still preserved (for details see Appendix~A in \cite{Huang:2019nog}), however we will not consider this in the present paper.

\subsection{Plan of the paper and outline: $p$-adic results}

We now present a brief summary of the paper. 

In Section \ref{sec2} we start by writing down the Lagrangian, which will be the Bruhat-Tits tree version of the Polyakov action introduced by Zabrodin \cite{zabrodin,zabrodin2}, but written for the tree times an extra time direction. We will normalize separately the time and spatial parts of the Lagrangian, and fix the relative sign to be negative so that the time evolution will be a $p$-adic analogue of Lorentzian evolution; our results will be sensitive to all of these choices. We will furthermore show that our system can be treated in either Lagrangian or Hamiltonian formalism.

In Section \ref{sec3} we will introduce the Fourier mode expansion, in the plane waves $\varphi_\mu(i)\coloneqq p^{\mu\langle i,\infty \rangle}$ of Zabrodin \cite{zabrodin,zabrodin2}. Throughout the discussion we root the tree at an $\infty$ point on the boundary and fix a center $O$. The parameter $\mu$ is complex, however in order for the quantum mechanical time evolution of the system to be unitary, $\mu$ will have to be restricted to certain loci in the complex plane, splitting the complex plane into sectors that will give different contributions to the $p2$-brane spectrum. Up to repetition of period $2\pi i /\ln p$, the sectors we will consider are i) $\mu=0,1$ (which will correspond to a zero mode), ii) the open interval $\mu\in (0,1)$, and iii) the critical line $\mu\in \{1/2+ it| t\in \mathbb{R}\}$. Depending on the value of a parameter in the action there can be a third sector where evolution is unitary, contained in the lines of imaginary part $\pm i\pi / \ln p$, however we will not consider it in this paper.

The modes $\varphi_\mu(i)$ obey the crucial orthogonality relation \cite{Huang:2019nog}
\be
\label{eq11}
\langle \varphi_\mu | \varphi_\nu \rangle \coloneqq \sum_{i\in V\lb T_p \rb} \varphi_\mu^*(i) \varphi_\nu(i) = \frac{1}{1-p}\lb \delta_{\mu^*+\nu} + \delta_{\mu^*+\nu-1} \rb,
\ee
with the star denoting complex conjugation and the $L^2$ norm sum regularized by analytic continuation. Furthermore, a highly nontrivial coincidence is that the orthogonality relation is compatible with the $\mu$-plane sectors, in that Eq. \eqref{eq11} pairs points in the same sector (with the $\delta_{\mu*+\nu-1}$ pairing in all three sectors, and $\delta_{\mu^*+\nu}$ giving an additional contribution for the zero mode). This coincidence will be crucial for the construction of oscillation modes. Note that Eq. \eqref{eq11} and the $\mu$-plane sectors are derived in logically independent ways, from the regularized $L^2$ tree norm and equation of motion respectively.

In Section \ref{sec4} we introduce position-momentum Poisson brackets, which will be promoted to canonical commutators. Eq. \eqref{eq11} can then be used to Fourier decompose the position-momentum commutators into commutators for the mode raising and lowering operators. This decomposition is closely analogous to the usual bosonic string mode decomposition, except that in the case of the $p2$-brane the modes will belong to sectors i)--iii) above. Note however that the $p$-adic raising and lowering operators will not obey a Virasoro algebra. This is because for the usual bosonic string, the Virasoro algebra arises from a remnant worldsheet diffeomorphism invariance (local conformal transformations), after the worldsheet metric is locally fixed. However, the $p2$-brane is discrete, and there are no infinitesimal conformal transformations to begin with.

There is another subtlety that occurs in the construction of Poisson brackets and canonical commutators. In the Archimedean world, the position-momentum commutator $[x,p]=i$ is equivalent to the momentum operator having a presentation as derivative. On a graph, due to its discrete nature, the momentum operator does not have a representation as a local (infinitesimal) derivative, which in general will affect the existence  of the commutation relations. Nonetheless, we will see that, in the case of the Bruhat-Tits tree, the orthogonality relations \eqref{eq11} will provide a way around this obstruction (see Section \ref{sec36intep}). Note that the orthogonality relations depend on the graph geometry, and will not readily generalize to other types of graphs. Thus, the existence of the commutation relations is a special feature of Bruhat-Tits trees.

Section \ref{sec5} contains the expansion of the Hamiltonian, momentum and mass operators in the Fourier modes, taking into account the three sectors. In Section \ref{sec8} we will check that the target space Poincar\'e algebra is obeyed at the semi-classical level. This is necessary because our discrete worldsheet model is not guaranteed to obey target space Poincar\'e invariance, and in fact we will see that a subtlety arises with the zero~mode, in that the $ \delta_{\mu^*+\nu}$ term in the inner product \eqref{eq11} is crucial in order for the zero sector contributions to the momentum and angular momentum to obey the Poincar\'e algebra. 

\subsection{Plan of the paper and outline: Adelic results}

Section \ref{sec7} contains results on the Archimedean spectrum, obtained via the Euler product construction, reliant on three additional conditions. These three additional conditions that we impose (and which were not needed for the $p$-adic results of the previous sections) are:
\begin{enumerate}
\item $H-\mathfrak{a}=0$, with $H$ the Hamiltonian and $\mathfrak{a}$ a normal ordering constant. This condition is analogous to the bosonic string mass-shell condition $L_0-\mathfrak{a}=0$ when acting on the physical states. We will give a worldsheet geometric symmetry interpretation for it.
\item We will impose the lightcone gauge, and gauge fix one null creation operator as $b_\mu^+=0$.
\item That the Euler product for the masses of the first excited level in the spectrum holds, that is
\be
\label{eq12}
\prod_p m^2_{(p)} = m^{-2}_{(\infty)}.
\ee
\end{enumerate}  

With these three conditions, the spectral interpretation of the critical line zeta zeros as massless excitations follows.

Let's now comment on the Euler product \eqref{eq12}. Although we will not prove this equation here, there are two pieces of evidence to support it. Firstly, in the case of field theory constructed from Vladimirov derivatives, a similar product holds for the free field Green's functions, which can be established rigorously and in fact follows from Tate's thesis \cite{Huang:2020aao}. Secondly, it can be explicitly checked in the case of the quantum mechanical particle-in-a-box, defined $p$-adically and at the usual Archimedean place, that the spectrum obeys precisely product formula \eqref{eq12} \cite{Huang:2020vwx}. Note however that, even in the simple case of the particle-in-a-box, a deeper understanding of why product formula \eqref{eq12} holds is currently missing (apart from explicitly checking that both sides are equal).

In Section \ref{secconc} we will propose a conjectural way to connect our construction to the usual landscape of string theory.

\begin{remark}[Zeta function from the Bruhat-Tits tree Laplacian eigenvalues]
Although in this paper we will consider \emph{one} particular model, it is important to emphasize that the zeta function arising from the eigenvalues of the Bruhat-Tits tree Laplacian is a \emph{general mechanism}, which can be adapted to many other models. This is because the minimal requirement for the zeta function to emerge is to have the Bruhat-Tits tree with one marked point on the boundary. As we will explain below, moving towards this vertex changes the Laplacian eigenfunction by $p^\mu$, and moving away from it will give a net change of $p^{1-\mu}$, because there are $p$ vertices pointing away from the marked point, for a complex parameter $\mu$. This simple fact is how the fundamental $\mu\leftrightarrow 1-\mu$ symmetry of the zeta function appears in our context. The factors of $p^{\mu}$ and $p^{1-\mu}$ get encoded in the Laplacian eigenvalues as local zeta factors, which will then Euler product into the zeta function at the Archimedean~place.
\end{remark}

\section{Lagrangians, Hamiltonians, and Fourier mode expansion on the Bruhat-Tits tree}
\label{sec2}

\subsection{Lagrangian and Hamiltonian formalisms}

In this section we set up the Lagrangian and Hamiltonian formalisms in the context of the $p2$-brane. The end goal is to define Poisson brackets and canonical commutators, however in order to do so we will first need to introduce worldsheet notions of \emph{time} and \emph{space}.

There is a well-known philosophical question regarding the nature of time. While in the Archimedean setting such a question may be more or less well-defined, depending on context, in the $p$-adic setting the question on the nature of time is automatically sharp, because there is a priori no natural way to introduce time. In the present paper we will simply take time to be $\mathbb{Z}$ (or equivalently the vertices of the line graph $T_1$), times an unit of length $\ell_\tau$. While it may be possible to understand from deeper principles such as double-coset constructions why time should be $\mathbb{Z}$ (at least in certain classes of models), our reason for doing so here is simply that it will allow for interesting $p$-adic models.

\begin{remark}
We introduce a time direction, such that our maps are defined as
\be
X: V\lb T_p \rb \times V\lb T_1 \rb \to \mathbb{R}^{1,d}.
\ee
Here $T_1$ is the tree with $p\to 1$, i.e. the line graph.
\end{remark}

\textbf{Notation and conventions:} We denote the target space directions by indices $a,b,\dots$ (mostly, but not always, written as a superscript), and the vertices on a constant time slice $T_p$ of the worldsheet by indices $i,j,\dots$ written as subscripts. We write $i\sim j$ when vertices $i$ and $j$ are neighbors on a time slice. We denote the worldsheet time by $\tau$. The spacetime indices $a,b,\dots$ are raised and lowered with the Minkowski metric $\eta^{ab}$.

\textbf{Conventions on units of length:} In principle all edges in $E\lb T_1 \rb\times E\lb T_p \rb$ can have different edge lengths, however in this paper we will consider only one length $\ell$ for all the edges in $E\lb T_p \rb$, and another length $\ell_\tau$ for all the edges in $E\lb T_1 \rb$ (this could be relaxed in future work). Furthermore, we will adopt the convention that $2\ell_\tau=\ell$ (this convention does not change the physics but is convenient). 

\begin{defn}
For functions $\psi: V\lb T_p \rb \times V\lb T_1 \rb \to \mathbb{R}$, we define the derivative in the time direction as
\be
\pd_\tau \psi_i \coloneqq \frac{1}{2\ell_\tau} \lsb \psi_i\lb \tau + \ell_\tau\rb - \psi_i\lb \tau - \ell_\tau\rb \rsb,
\ee
so that the second order derivative in the time direction is the same as the time direction Laplacian $\Delta_\tau$,
\be
\pd_\tau^2 \psi_i = \Delta_\tau \psi_i = \frac{1}{4\ell_\tau^2} \lsb \psi_i\lb \tau+2\ell_\tau \rb - 2\psi_i\lb \tau \rb + \psi_i\lb \tau-2\ell_\tau \rb \rsb .
\ee 
\end{defn}

\begin{defn}
We postulate the p2-brane action to be
\be
\label{Seq}
S \coloneqq \sum_{T_1} \lb \frac{1}{2} \sum_{i\in V\lb T_p \rb} \lb \pd_\tau X_i^a \rb^2  - \sum_{\ipj \in E\lb T_p \rb} \frac{\lb X^a_i-X^a_j\rb^2}{v_p \ell^2}\rb.
\ee
\end{defn}
We normalize the potential term separately, by introducing dimensionless coefficient $v_p$. In order to obtain the restriction to the $(0,1)$ interval and critical line in the $\mu$ plane, the value of $v_p$ will have to be constrainted, and the relative sign between the kinetic and potential contributions must be negative; we will discuss these shortly below. The relative minus sign between the kinetic and potential contributions in Eq. \eqref{Seq} can be thought of as indicating the ``Lorentzian'' nature of the worldsheet.

Note that coefficient $v_p$ can alteratively be thought of as a uniform rescaling of the spatial edge lengths relative to the time direction edges.

With the two definitions above, introducing the Lagrangian and Hamiltonian formalisms proceeds as expected.

\begin{remark}[Conjugate momenta and the Hamiltonian] 
The momentum density conjugate to $X_i$ is
\be
P_{ia} \coloneqq \frac{\delta S}{\delta \lb \pd_\tau X^a_i \rb} = \pd_\tau X_{ia}.
\ee
The Hamiltonian is
\ba
\label{eqhami}
H &\coloneqq& \sum_{i\in V\lb T_p \rb} \lsb \lb \pd_\tau X_{ia} \rb P^a_i - \LL\rsb \\
&=& \sum_{i\in V\lb T_p \rb} \frac{\lb P_i^a \rb^2 }{2} + \sum_{\ipj \in E\lb T_p \rb} \frac{\lb X^a_i-X^a_j\rb^2}{v_p \ell^2}.\nn
\ea
\end{remark}

In Section \ref{secHamicons} we will show that the Hamiltonian is worldsheet time $\tau$-independent.%, and in Appendix \ref{appB} we will show that the Hamiltonian is $\tau$-independent under a more general class of models than the one considered in the main part of the paper.

\begin{remark}[Hamilton's equations]
In classical mechanics, Hamilton's equations translate to
\be
\frac{d p_i}{dt} = - \frac{\pd H}{\pd q_i}, \quad \frac{d q_i}{dt} = \frac{\pd H}{\pd p_i}.
\ee
In our setup these equations are
\ba
\label{eq2p9}
\pd_\tau P_i^a &=& - \frac{2}{v_p} \sum_{j\sim i} \frac{X_i^a-X_j^a}{\ell^2}, \\
\pd_\tau X_i^a &=& P_i^a.\nn
\ea
Thus the equation of motion is
\be
\label{EOM}
\pd_\tau^2 X_i^a = - \frac{2}{v_p} \sum_{j\sim i} \frac{X_i^a-X_j^a}{\ell^2},
\ee
i.e.
\be
\label{EOM2}
 X_i^a(\tau+\ell) + X_i^a(\tau-\ell) - 2 X_i^a(\tau) = - \frac{2}{v_p} \sum_{j\sim i} \lsb X_i^a(\tau)-X_j^a(\tau) \rsb.
\ee
\end{remark}

In Eq. \eqref{EOM2} above we have used that $2\ell_\tau=\ell$. Note that although we have not independently justified Hamilton's equations in our setup, the end result (equation of motion \eqref{EOM2}) is valid, because it follows directly in the Lagrangian formalism from action \eqref{Seq}, by taking variation $\delta X_i^a(\tau)$.

\begin{defn}[Poisson bracket]
The classical mechanics Poisson bracket is
\be
\label{eq212}
\{f,g\}\coloneqq \sum_\alpha\lb \frac{\pd f}{\pd q_\alpha}\frac{\pd g}{\pd p_\alpha} - \frac{\pd g}{\pd q_\alpha}\frac{\pd f}{\pd p_\alpha} \rb.
\ee
\end{defn}

Index $\alpha$ in Eq. \eqref{eq212} above runs over the generalized coordinates. Applying this definition to our case (and noting that $X^a_i$ and $P^b_j$ are independent even when $a=b$, $i=j$, because $P_j^b$ only has information about the advanced and previous time steps), we can compute the position-momentum Poisson brackets:

\begin{remark} The position-momentum Poisson brackets in our setup are
\ba
\label{eeqq2p14}
\{X^a_i, P^b_j\} &=& \sum_{k\in V\lb T_p \rb} \eta^{cd}  \lb \frac{\pd X_i^a}{\pd X^c_k}\frac{\pd P^b_j}{\pd P^d_k} - \frac{\pd P^b_j}{\pd X^c_k}\frac{\pd X_i^a}{\pd P^d_k} \rb\\
&=& \eta^{ab} \delta_{ij}, \nn
\ea
and
\be
\label{eeqq2p15}
\{X^a_i, X^b_j\} = \{P^a_i, P^b_j\} = 0.
\ee
\end{remark}

\subsection{Plane waves on the Bruhat-Tits tree}

The Fourier modes $\phi_{\mu,\alpha}(i)$ on the Bruhat-Tits tree were first introduced by Zabrodin \cite{zabrodin,zabrodin2}, and are defined w.r.t. a boundary vertex $\alpha\in V\lb\pd T_p\rb$, a complex parameter $\mu\in\mathbb{C}$, and a choice of the center $O$ of the tree. In the present paper we will only use one choice of boundary vertex $\alpha$ for all the Fourier modes, which we denote by $\infty$; consequently we will write $\phi_{\mu}$ instead of $\phi_{\mu,\alpha}$.

\begin{defn}
\label{deffm}
For parameter $\mu\in\mathbb{C}$ and $i\in V\lb T_p \rb$, the Fourier mode $\phi_\mu(i)$ is
\be
\label{eq216}
\phi_\mu(i) \coloneqq \sqrt{p-1} p^{\mu\langle i,\infty \rangle},
\ee
with the bracket $\langle \cdot ,\infty \rangle: V\lb T_p \rb\to \mathbb{Z}$ defined as
\be
\langle i,\infty \rangle \coloneqq \begin{cases} 
\ell\lb i, O \rb, \qquad \qquad \,\;\; \mrm{if\ } \infty,O,i \mrm{\ are\ on\ the\ same\ geodesic,\ in\ this\ order} \\
\ell\lb O,i' \rb - \ell\lb i',i \rb, \ \, \mrm{if\ not}
\end{cases}.
\ee
$\ell\lb\cdot,\cdot\rb: V\lb T_p \rb \times V\lb T_p \rb \to \mathbb{Z}_{\geq 0} $ is the tree distance between the two vertices, and $i'$ is the vertex at which the tree paths $\infty\to i$ and $\infty \to O$ stop coinciding.
\end{defn}

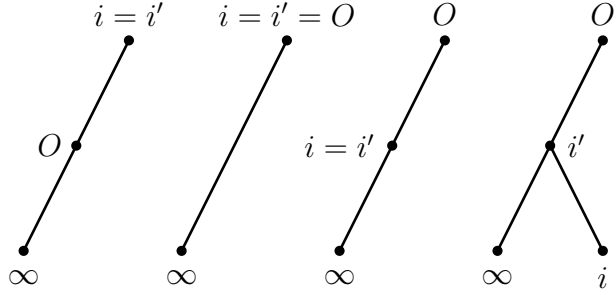
\begin{figure}[t]
\centering
\begin{tikzpicture}[scale=0.7]
		\node [style=simpletoo] (0) at (-6, -0) {};
		\node [style=simpletoo] (1) at (-5, 2) {};
		\node [style=simpletoo] (4) at (-4, 4) {};
		\node [style=simpletoo] (2) at (-3, -0) {};
		\node [style=simpletoo] (3) at (-1, 4) {};
		\node [style=simpletoo] (5) at (0, -0) {};
		\node [style=simpletoo] (6) at (2, 4) {};
		\node [style=simpletoo] (7) at (1, 2) {};
		\node [style=simpletoo] (8) at (3, -0) {};
		\node [style=simpletoo] (9) at (5, 4) {};
		\node [style=simpletoo] (10) at (4, 2) {};
		\node [style=simpletoo] (11) at (5, -0) {};
		\node [style=none] (12) at (-6, -0.5) {$\infty$};
		\node [style=none] (13) at (-3, -0.5) {$\infty$};
		\node [style=none] (14) at (0, -0.5) {$\infty$};
		\node [style=none] (15) at (3, -0.5) {$\infty$};
		\node [style=none] (16) at (5, -0.5) {$i$};
		\node [style=none] (17) at (-4, 4.5) {$i=i'$};
		\node [style=none] (18) at (-1, 4.5) {$i=i'=O$};
		\node [style=none] (19) at (2, 4.5) {$O$};
		\node [style=none] (20) at (5, 4.5) {$O$};
		\node [style=none] (21) at (-5.5, 2) {$O$};
		\node [style=none] (22) at (0, 2) {$i=i'$};
		\node [style=none] (23) at (4.5, 2) {$i'$};
		\draw [style=simple] (0.center) to (1.center);
		\draw [style=simple] (1.center) to (4.center);
		\draw [style=simple] (2.center) to (3.center);
		\draw [style=simple] (5.center) to (7.center);
		\draw [style=simple] (7.center) to (6.center);
		\draw [style=simple] (8.center) to (10.center);
		\draw [style=simple] (10.center) to (9.center);
		\draw [style=simple] (10.center) to (11.center);
\end{tikzpicture}
\caption{All possible relative positions of vertices $i,i',O,\infty$, for computing the bracket~$\langle i,\infty \rangle$.}
\label{fig1}
\end{figure}

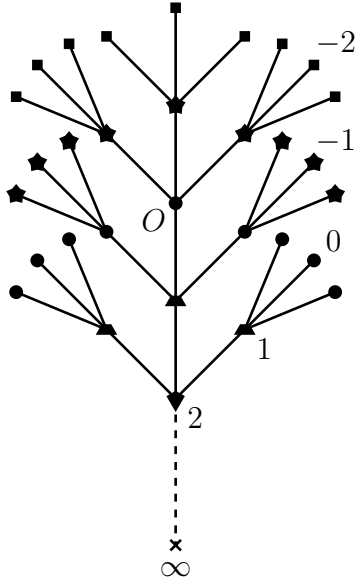
\begin{figure}[t]
\centering
\begin{tikzpicture}[scale=0.65]

\tikzmath{
\r=2;
\eps=0.5;
\cossin45=0.707107;
}

\node [style=simplesq] (lev1vert0) at (0, \r*2) {};
\node [style=simplesq] (lev1vert1) at (\r*\cossin45, \r+\r*\cossin45) {};
\node [style=simplesq] (lev1vert2) at (-\r*\cossin45, \r+\r*\cossin45) {};

\node [style=simplestar] (lev0vert0) at (0, \r) {};
\node [style=simplestar] (lev0vert1) at (\r*\cossin45, \r*\cossin45) {};
\node [style=simplestar] (lev0vert2) at (-\r*\cossin45, \r*\cossin45) {};

\node [style=simplecirc] (levm1vert0) at (0, 0) {};
\node [style=simplecirc] (levm1vert1) at (\r*\cossin45, -\r+\r*\cossin45) {};
\node [style=simplecirc] (levm1vert2) at (-\r*\cossin45, -\r+\r*\cossin45) {};

\node [style=simplesq] (levm1vert3) at (2*\r*\cossin45, 2*\r*\cossin45) {};
\node [style=simplesq] (levm1vert4) at (\r*\cossin45+\r*0.382683, \r*\cossin45+\r*0.92388) {};
\node [style=simplesq] (levm1vert5) at (\r*\cossin45+\r*0.92388, \r*\cossin45+\r*0.382683) {};
\node [style=simplesq] (levm1vert6) at (-\r*\cossin45-\r*0.382683, \r*\cossin45+\r*0.92388) {};
\node [style=simplesq] (levm1vert7) at (-\r*\cossin45-\r*0.92388, \r*\cossin45+\r*0.382683) {};
\node [style=simplesq] (levm1vert8) at (-2*\r*\cossin45, 2*\r*\cossin45) {};

\node [style=simpletrapez] (levm2vert0) at (0, -\r) {};

\node [style=simplestar] (levm2vert3) at (2*\r*\cossin45, -\r+2*\r*\cossin45) {};
\node [style=simplestar] (levm2vert4) at (\r*\cossin45+\r*0.382683, -\r+\r*\cossin45+\r*0.92388) {};
\node [style=simplestar] (levm2vert5) at (\r*\cossin45+\r*0.92388, -\r+\r*\cossin45+\r*0.382683) {};
\node [style=simplestar] (levm2vert6) at (-\r*\cossin45-\r*0.382683, -\r+\r*\cossin45+\r*0.92388) {};
\node [style=simplestar] (levm2vert7) at (-\r*\cossin45-\r*0.92388, -\r+\r*\cossin45+\r*0.382683) {};
\node [style=simplestar] (levm2vert8) at (-2*\r*\cossin45, -\r+2*\r*\cossin45) {};

\node [style=simplekite] (levm3vert0) at (0, -2*\r) {};
\node [style=simpletrapez] (levm3vert1) at (\r*\cossin45, -2*\r+\r*\cossin45) {};
\node [style=simpletrapez] (levm3vert2) at (-\r*\cossin45, -2*\r+\r*\cossin45) {};

\node [style=simplecirc] (levm3vert3) at (2*\r*\cossin45, -2*\r+2*\r*\cossin45) {};
\node [style=simplecirc] (levm3vert4) at (\r*\cossin45+\r*0.382683, -2*\r+\r*\cossin45+\r*0.92388) {};
\node [style=simplecirc] (levm3vert5) at (\r*\cossin45+\r*0.92388, -2*\r+\r*\cossin45+\r*0.382683) {};
\node [style=simplecirc] (levm3vert6) at (-\r*\cossin45-\r*0.382683, -2*\r+\r*\cossin45+\r*0.92388) {};
\node [style=simplecirc] (levm3vert7) at (-\r*\cossin45-\r*0.92388, -2*\r+\r*\cossin45+\r*0.382683) {};
\node [style=simplecirc] (levm3vert8) at (-2*\r*\cossin45, -2*\r+2*\r*\cossin45) {};

\node [style=simplecrossout] (levminfvert0) at (0, -3.5*\r) {};

\draw [style=simple] (lev0vert0.center) to (lev1vert0.center);
\draw [style=simple] (lev0vert0.center) to (lev1vert1.center);
\draw [style=simple] (lev0vert0.center) to (lev1vert2.center);

\draw [style=simple] (lev0vert0.center) to (levm1vert0.center);
\draw [style=simple] (lev0vert1.center) to (levm1vert0.center);
\draw [style=simple] (lev0vert2.center) to (levm1vert0.center);

\draw [style=simple] (levm1vert3.center) to (lev0vert1.center);
\draw [style=simple] (levm1vert4.center) to (lev0vert1.center);
\draw [style=simple] (levm1vert5.center) to (lev0vert1.center);
\draw [style=simple] (levm1vert6.center) to (lev0vert2.center);
\draw [style=simple] (levm1vert7.center) to (lev0vert2.center);
\draw [style=simple] (levm1vert8.center) to (lev0vert2.center);

\draw [style=simple] (levm2vert3.center) to (levm1vert1.center);
\draw [style=simple] (levm2vert4.center) to (levm1vert1.center);
\draw [style=simple] (levm2vert5.center) to (levm1vert1.center);
\draw [style=simple] (levm2vert6.center) to (levm1vert2.center);
\draw [style=simple] (levm2vert7.center) to (levm1vert2.center);
\draw [style=simple] (levm2vert8.center) to (levm1vert2.center);

\draw [style=simple] (levm3vert3.center) to (levm3vert1.center);
\draw [style=simple] (levm3vert4.center) to (levm3vert1.center);
\draw [style=simple] (levm3vert5.center) to (levm3vert1.center);
\draw [style=simple] (levm3vert6.center) to (levm3vert2.center);
\draw [style=simple] (levm3vert7.center) to (levm3vert2.center);
\draw [style=simple] (levm3vert8.center) to (levm3vert2.center);

\draw [style=simple] (levm1vert0.center) to (levm2vert0.center);
\draw [style=simple] (levm1vert1.center) to (levm2vert0.center);
\draw [style=simple] (levm1vert2.center) to (levm2vert0.center);

\draw [style=simple] (levm2vert0.center) to (levm3vert0.center);

\draw [style=simple] (levm3vert0.center) to (levm3vert1.center);
\draw [style=simple] (levm3vert0.center) to (levm3vert2.center);

\draw [style=simpled] (levm3vert0.center) to (levminfvert0.center);

\node [style=none] (labelinf) at (0,-3.5*\r-\eps) {$\infty$};

\node [style=none] (labelO) at (-\eps*0.9,-\eps*0.7) {$O$};

\node [style=none] (labelm2) at (2*\r*\cossin45+0.9*\eps, 2*\r*\cossin45+0.9*\eps) {$-2$};
\node [style=none] (labelm1) at (2*\r*\cossin45+0.9*\eps, -\r+2*\r*\cossin45+0.9*\eps) {$-1$};
\node [style=none] (label0) at (2*\r*\cossin45+0.8*\eps, -2*\r+2*\r*\cossin45+0.8*\eps) {$0$};
\node [style=none] (label1) at (\r*\cossin45+0.8*\eps, -2*\r+\r*\cossin45-0.8*\eps) {$1$};
\node [style=none] (label2) at (0.8*\eps, -2*\r-0.8*\eps) {$2$};

\end{tikzpicture}
\caption{Values of the bracket $\langle \cdot,\infty \rangle$ for a subset of the Bruhat-Tits tree. The origin $O$ has $\langle O,\infty\rangle=0$, and vertices denoted with the same geometric shape have the same value of the bracket, as depicted in the figure: $\langle {\scriptstyle\blacksquare},\infty\rangle=-2$, and so on. All vertices with a given value of $\langle \cdot,\infty \rangle$ have the same value of $X^a$. Note that $\langle \infty,\infty\rangle =\infty$, and the bracket values correspond to the argument of a plane wave starting at $\infty$, offset to point~$O$.}
\label{fig1p}
\end{figure}

Note that $i=i'$ when vertices $\infty,O,i$ are on the same geodesic. The factor of $1/\sqrt{p-1}$ in Eq. \eqref{eq216} is a normalization convention.

The possible relative positions of vertices $\infty,O,i$ are shown in Figure \ref{fig1}, and some values of the function $\langle\cdot,\infty\rangle$ are shown in Figure \ref{fig1p}. 

The meaning of the functions $p^{\mu\langle \cdot,\infty \rangle}$ is that they are \emph{plane waves} coming in from the point at $\infty$, analogous to functions $e^{ikx}$ which are plane waves in the Archimedean world. The bracket $\langle \cdot,\infty \rangle$ counts the number of steps starting from $\infty$, just as $x$ is a measure of Archimedean distance; because the distance from $\infty$ to any vertex in $T_p$ is infinite, $\langle\cdot,\infty\rangle$ is offset by an infinite amount to the center $O$, so that $\langle O,\infty\rangle =0$.

The following two results (Remark \ref{rmk5} and Theorem \ref{thm1}) justify interpreting the functions $p^{\mu\langle \cdot ,\infty \rangle}$ as plane waves. It is also possible to define a translation operator such that $p^{\langle \cdot,\infty\rangle} $ are momentum eigenfunctions, see \cite{Huang:2019nog}.

\begin{remark}
\label{rmk5}
The functions $p^{\mu\langle \cdot,\infty \rangle}$ are tree Laplacian eigenvalues, that is
\be
\Delta p^{\mu\langle i,\infty \rangle} = \lambda_\mu p^{\mu \langle i,\infty \rangle},
\ee
for eigenvalue
\be
\label{eqq220}
\lambda_\mu = p^{\mu}+p^{1-\mu} - p -1,
\ee
and with the Laplacian $\Delta$ acting on an arbitrary function $\psi:V\lb T_p \rb\to \mathbb{C}$ as
\be
\Delta \psi(i) = \lb \sum_{j\sim i } \psi(j) \rb - (p+1) \psi(i).
\ee
\end{remark}

\begin{defn}
For functions $\psi_{1,2}:V\lb T_p \rb \to \mathbb{C}$, the $L^2$ inner product is
\be
\langle \psi_1 | \psi_2 \rangle \coloneqq \sum_{i \in V\lb T_p \rb} \psi_1^*(i) \psi_2(i),
\ee
with the star denoting complex conjugation.
\end{defn}

\begin{thm}
\label{thm1}
For $\mu,\nu\in\mathbb{C}$ and $\varphi_\lambda(i)=p^{\lambda\langle i,\infty \rangle}$, we have the orthonormality relations
\be
\label{eq221}
\langle \varphi_\mu | \varphi_\nu \rangle = \frac{1}{1-p}\lb \delta_{\mu^*+\nu} + \delta_{\mu^*+\nu-1} \rb,
\ee
with $\delta$ the Kronecker delta function.
\end{thm}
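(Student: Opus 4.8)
The plan is to reduce the statement to the evaluation of the single scalar sum
\be
Z(s) \coloneqq \sum_{i\in V\lb T_p\rb} p^{s\langle i,\infty\rangle}, \qquad s\coloneqq \mu^*+\nu ,
\ee
using that $\langle i,\infty\rangle\in\mathbb{Z}$, so $\varphi_\mu^*(i)\varphi_\nu(i)=p^{(\mu^*+\nu)\langle i,\infty\rangle}$, and that $Z(s)$ depends on $s$ only modulo $2\pi i/\ln p$. The claim is then equivalent to showing $Z(s)=\frac{1}{1-p}$ when $s\equiv 0$ or $s\equiv 1$, and $Z(s)=0$ otherwise.

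First I would pin down the support. The tree Laplacian $\Delta$ is a real symmetric operator, hence formally self-adjoint for the $L^2$ pairing, and formal summation by parts gives $\sum_i \Delta h(i)=0$; applying this to $h=\varphi_s=p^{s\langle\cdot,\infty\rangle}$ and using $\Delta\varphi_s=\lambda_s\varphi_s$ with $\lambda_s=p^s+p^{1-s}-p-1$ (Remark \ref{rmk5}) gives $\lambda_s\,Z(s)=0$. Since $\lambda_s=p^{-s}(p^s-1)(p^s-p)$, this forces $Z(s)=0$ unless $p^s\in\{1,p\}$, i.e. $s\equiv 0$ or $s\equiv 1$ modulo $2\pi i/\ln p$. (Equivalently $\lambda_\mu^*\langle\varphi_\mu|\varphi_\nu\rangle=\langle\Delta\varphi_\mu|\varphi_\nu\rangle=\langle\varphi_\mu|\Delta\varphi_\nu\rangle=\lambda_\nu\langle\varphi_\mu|\varphi_\nu\rangle$.) It remains to compute the value on those two loci.

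For the explicit computation I would decompose $V\lb T_p\rb$ horospherically relative to $\infty$: fix the geodesic ray $O=v_0,v_1,v_2,\dots\to\infty$ (so $\langle v_k,\infty\rangle=k$) and write $V\lb T_p\rb=\bigsqcup_{k\ge0}B_k$, where $B_0$ is the rooted $p$-ary subtree of vertices separated from $\infty$ by $O$ (it has $p^d$ vertices of bracket $-d$ at depth $d$), and for $k\ge1$, $B_k$ consists of $v_k$ (bracket $k$) together with the $p-1$ rooted $p$-ary subtrees hanging off the lateral neighbours of $v_k$ (contributing $(p-1)p^{d-1}$ vertices of bracket $k-d$ at each depth $d\ge1$). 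Summing the geometric series over each piece (in its domain of convergence, then continued) gives
\be
Z(s)=\frac{1}{1-p^{1-s}}+\frac{1-p^{-s}}{1-p^{1-s}}\sum_{k\ge1}p^{sk},
\ee
and away from $s\equiv0,1$ the two terms cancel, reproducing $Z(s)=0$ in agreement with the previous step.

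The genuinely delicate point — and the one I expect to be the main obstacle — is the evaluation on the two special loci, where the intermediate geometric series hit poles and naive analytic continuation of the combined expression is an indeterminate $0/0$. Here one must regularize before cancelling: I would return to a finite truncation $Z_R(s)=\sum_{\ell(i,O)\le R}p^{s\langle i,\infty\rangle}$, which the decomposition above renders as a closed-form Laurent polynomial in $p^s$, split it into an $R$-independent part (reproducing the analytic-continuation value $0$) and boundary pieces governed by $p^{\pm sR}$, and show that in the distributional $R\to\infty$ limit — or under a regularization that respects the symmetry $s\leftrightarrow 1-s$ inherited from $\lambda_s=\lambda_{1-s}$ — the boundary pieces deposit exactly the constant $\frac{1}{1-p}$ at $s\equiv0$ and at $s\equiv1$, matching the two zeros $p^s=1$ and $p^s=p$ of $\lambda_s$. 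Carefully tracking these residues, and in particular verifying that the symmetric treatment yields the same constant on \emph{both} loci rather than only on one of them, is the crux of the argument; the resulting identity coincides with the one established in \cite{Huang:2019nog}, which may also be invoked directly.
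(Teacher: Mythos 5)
Your proposal follows essentially the same route as the paper: the horospherical decomposition into $B_0$ and the branches $B_k$ is precisely the nested geometric-sum formula \eqref{eq224}, the regularization is the same series-by-series analytic continuation, and the cancellation you exhibit away from $\mu^*+\nu\equiv 0,1$ is the same algebra the paper alludes to before citing \cite{Huang:2019nog}. What you add is the support observation $\lambda_s Z(s)=0$, which is a nice complement (it explains why the deltas sit exactly at the zeros of $\lambda_s$); but your parenthetical ``equivalently'' is not equivalent: formal self-adjointness would localize the support on $\lambda_{\mu^*}=\lambda_\nu$, i.e. $\mu^*\equiv\nu$ or $\mu^*+\nu\equiv 1$, and this would wrongly forbid the $\delta_{\mu^*+\nu}$ term — for instance for $\mu=\nu$ purely imaginary the summand is identically $1$ and the regularized sum is $\tfrac{1}{1-p}$, not $0$. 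Only the ``$\sum_i\Delta h(i)=0$'' version is compatible with the regularized pairing, so the aside should be dropped or flagged as purely formal; it is a good illustration that these manipulations are regularization-sensitive.

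On the crux you identify: with the paper's prescription (continue each geometric series, then evaluate at the given $s$) the value at $s\equiv 0$ requires no further work, since the lateral coefficient $(1-p^{-s})/(1-p^{1-s})$ vanishes identically there, every $B_k$ with $k\ge 1$ contributes zero, and $B_0$ alone gives $\tfrac{1}{1-p}$. The genuinely delicate point is $s\equiv 1$, where this nesting is singular (both $1/(1-p^{1-s})$ and the lateral coefficient have poles), so extra input — a different grouping of the sum, or a regularization respecting $s\leftrightarrow 1-s$ — is unavoidable; your truncation-plus-boundary-terms plan is plausible but not carried out, so as written the constant $\tfrac{1}{1-p}$ on that locus is not actually derived. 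The paper does not derive it either: it defers to \cite{Huang:2019nog}, which is exactly the fallback you allow yourself, so your proposal matches the paper's level of completeness while being more explicit about where the real work (and the prescription-dependence) lies.
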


Note that due to normalization Eq. \eqref{eq221} implies 
\be
\label{eq223}
\langle \phi_\mu | \phi_\nu \rangle = - \delta_{\mu^*+\nu} - \delta_{\mu^*+\nu-1}.
\ee

Theorem \ref{thm1} was proven in \cite{Huang:2019nog} (see therein for the details). Schematically, the proof reduces to computing the $L^2$ sum over all vertices of the tree, which given the definition of the bracket $\langle \cdot,\infty \rangle$ is just a combination of geometric sums,
\ba
\label{eq224}
\langle \varphi_\mu | \varphi_\nu \rangle = \sum_{i=0}^\infty p^{\lb 1-\mu^*-\nu \rb i } + \sum_{i=1}^\infty p^{\lb \mu^*+\nu\rb i} \lb 1 + \frac{p-1}{p} \sum_{j=1}^\infty p^{\lb 1-\mu^*-\nu\rb j} \rb.
\ea
The first term corresponds to the branches ``above'' $O$ in the right panel of Figure \ref{fig1p} and the the second term corresponds to the branches ``below'' $O$. Eq. \eqref{eq224} is naively divergent, and must be regularized by analytically continuing each geometric sum to the entire complex plane in parameter $\mu^*+\nu$, minus the poles (see \cite{Huang:2019nog} for the details). With this regularization, sum \eqref{eq224} evaluates to Eq. \eqref{eq221} once the geometric sums are computed.

\subsection{Mode expansion on the Bruhat-Tits tree}
\label{sec3}

So far, the parameter $\mu$ entering the Fourier modes has been an arbitrary complex number. Our task in the rest of the paper will be to restrict the allowed values of $\mu$.

\begin{remark}[The pre-state plane]
The complex $\mu$ plane can be thought of as ``almost'' a plane of states, which we will call the \emph{pre-state plane}, in the following sense. Each $\mu\in \mathbb{C}$ is associated to a plane wave $\phi_\mu$; in the Fourier decomposition of the maps $X$ the Fourier modes $\phi_\mu$ will correspond to creation operators $c_\mu^{a\dagger}$ and to states $c_\mu^{a\dagger}| 0 \rangle$ once spatial index $a$ is introduced. Thus, the naive interpretation of the complex $\mu$ plane is that each pair $\lb \mu,a \rb$ corresponds to a state $c_\mu^{a\dagger} |0\rangle$. Of course, not all the states $c_\mu^{a\dagger} |0\rangle$ introduced in this way are physical: if the index $a$ is timelike the states will have negative norm, and if $\mu$ is off certain loci in the complex plane then the worldsheet time evolution will not be unitary. We will discuss restricting the values of $\mu$ and target space index $a$ later in the paper.
\end{remark}

\begin{remark}
We will show below that for $v_p$ above a certain value, the special loci in the $\mu$ plane for which the worldsheet time evolution is unitary are $\mu\in\lsb 0,1 \rsb$, $\mu\in \lcb 1/2 + it | t\in \mathbb{R} \rcb$ and $\mu\in\lsb t_1,t_2 \rsb + i\pi/\ln p$ (the values of parameters $t_{1,2}\in\mathbb{R}$ are given below), as well as their periodic repetitions with period $2\pi i/\ln p$ in the imaginary direction. These periodic repetitions correspond to the same values of $p^\mu$ and are thus not physical.
\end{remark}

\begin{remark} 
The orthogonality condition \eqref{eq221} for the inner product $\langle \phi_\mu | \phi_\nu \rangle$ is highly sensitive to the geometry of the tree. Later in the paper this condition will lead to state~orthogonality conditions.
\end{remark}

\begin{remark}[Symmetries of the maps $X$ and the $p2$-brane] 
As mentioned above, we fix the same boundary vertex $\infty$ for all the plane waves entering the Fourier decomposition of maps $X$. With $\infty$ and the center $O$ fixed, the plane waves $\phi_\mu$, and thus the maps $X$, are invariant under the action of the stabilizer $S(O,\infty)$ of vertices $\{O,\infty\}$. This stabilizer~is
\be
S(O,\infty) = B \cap \mrm{PGL}\lb 2,\mathbb{Z}_p\rb,
\ee
where $B$ is the standard Borel of upper triangular matrices (see \cite{Huang:2019nog} for the details). Thus, all vertices with a given value of the bracket $\langle i,\infty \rangle$  have \emph{the same} value of $X_i^a$ (see Figure \ref{fig1p}), and we don't have one direction's worth of degrees of freedom in the direction orthogonal to the direction of change in $\langle i,\infty \rangle$. Because of this, the theory on the $p2$-brane with vertices $V\lb T_1 \rb\times V\lb T_p \rb $ will be close in behavior to a theory with two independent directions on the worldsheet, i.e. to a string theory, as we will show later in the paper.

\end{remark}

Note that the symmetries of the maps $X$ considered in this paper are different from those of the underlying graph.

We will now expand the position $X_i^a$ in Fourier modes, as

\be
\label{exp0}
X^a_i = \mathfrak{p}^a \tau +  \sum_{\mu\in\mathbb{C}} \lsb c^a_\mu \lb r_+\rb ^{\tau/\ell} \phi_{\mu}(i) + \tilde{c}^a_\mu \lb r_-\rb ^{\tau/\ell} \phi_{\mu}(i)  \rsb.
\ee
Here $\mathfrak{p}^a \tau$ is the zero mode (just as in the Archimedean case, a linear term in worldsheet time solves the equation of motion), and the terms in the square brackets are the Fourier modes. The constant in the linear piece is included in the Fourier modes at $\mu=0$.

\begin{remark}
As a techical point, by the sum over $\mu\in\mathbb{C}$ in Eq. \eqref{exp0} we really mean a \emph{discrete} sum over a discrete subset of complex values that include certain special values (such at the zeta zeros), although we will not write this explicitly. Note that in line with this, the delta functions in Eq. \eqref{eq221} are Kronecker delta functions, and not Dirac. Later in the paper, when we will split the sum over modes into a sum over the $(0,1)$ interval and the critical line, we will likewise mean a sum over discrete subsets included in the $(0,1)$ interval and the critical line. 
\end{remark}

Coefficients $r_\pm(\mu)$ are the two solutions to the equation of motion \eqref{EOM2} applied to an eigenmode, and $c_\mu^a$, $\tilde c_\mu^a$ are Fourier coefficients that will later on become the creation and annihilation operators. Eq. \eqref{EOM2} applied to an eigenmode gives
\be
\label{eq120}
r_\pm + r_\pm^{-1} - 2 = \frac{2}{v_p} \lambda_\mu,
\ee
so that
\be
\label{eq229}
r_\pm = \frac{\lambda_\mu +v_p \pm \sqrt{\lambda_\mu (\lambda_\mu +2v_p)}}{v_p},
\ee
with tree Laplacian eigenvalue $\lambda_\mu$ given by Eq. \eqref{eqq220}.

From Vi\`ete's relations $r_+r_-=1$, thus in order for the worldsheet time evolution to be unitary we must have $|r_\pm|=1$, otherwise one mode will always blow up to $\infty$ as $\tau$ increases.

Theorem \ref{thm2}, Remark \ref{remm10} and Theorem \ref{thm3} below characterize when $\left|r_\pm\right|=1$.

\begin{thm}
\label{thm2}
We have $\left| r_\pm\right|=1$ for:
\begin{enumerate}
\item $\mu\in \lsb 0,1 \rsb$ and $v_p \geq \lb \sqrt{p}-1 \rb^2/2 $.
\item $\mu \in \lcb 1/2 + it | t\in\mathbb{R} \rcb$ and $v_p \geq \lb \sqrt{p}+1 \rb^2/2 $.
\end{enumerate}
\end{thm}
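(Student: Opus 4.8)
The plan is to reduce the statement to an elementary analysis of the quadratic whose roots are $r_\pm$. By Eq.~\eqref{eq120} the pair $r_\pm$ consists exactly of the roots of $r^2 - S r + 1 = 0$ with $S := 2 + \tfrac{2}{v_p}\lambda_\mu$, and Vi\`ete already gives $r_+ r_- = 1$. The first step is a general lemma: for a pair of complex numbers with product $1$, one has $|r_+| = |r_-| = 1$ if and only if their sum $S$ is real with $|S| \le 2$. Indeed, if $|r_+| = 1$ then $r_- = 1/r_+ = \bar r_+$, so $S = 2\,\mathrm{Re}\,r_+ \in [-2,2]$; conversely, if $S \in [-2,2]$ is real then $S^2 - 4 \le 0$ and $r_\pm = \tfrac{1}{2}\bigl(S \pm i\sqrt{4-S^2}\bigr)$, whence $|r_\pm|^2 = 1$ (the degenerate case $S = \pm 2$, $r_\pm = \pm 1$, is included). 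Using $v_p > 0$ (which follows from either hypothesis), this translates to: $|r_\pm| = 1$ precisely when $\lambda_\mu$ is real and $-2v_p \le \lambda_\mu \le 0$.

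It then remains to compute the range of $\lambda_\mu = p^\mu + p^{1-\mu} - p - 1$ on each locus and compare with $[-2v_p, 0]$. For $\mu \in [0,1]$ real, $\lambda_\mu$ is manifestly real; the function $g(\mu) := p^\mu + p^{1-\mu}$ is convex and symmetric about $\mu = 1/2$, so on $[0,1]$ it ranges between its minimum $g(1/2) = 2\sqrt p$ and its endpoint value $g(0) = g(1) = p+1$. Hence $\lambda_\mu$ sweeps the interval $[-(\sqrt p - 1)^2, 0]$, which is contained in $[-2v_p, 0]$ exactly when $2v_p \ge (\sqrt p - 1)^2$, giving part~1.

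For $\mu = 1/2 + it$, write $p^\mu + p^{1-\mu} = \sqrt p\,(p^{it} + p^{-it}) = 2\sqrt p\,\cos(t \ln p)$, which is real; as $t$ ranges over $\mathbb{R}$ the cosine covers $[-1,1]$, so $\lambda_\mu$ ranges over $[-(\sqrt p + 1)^2, -(\sqrt p - 1)^2]$. The upper endpoint is automatically $\le 0$, and the lower endpoint satisfies $-(\sqrt p + 1)^2 \ge -2v_p$ exactly when $2v_p \ge (\sqrt p + 1)^2$, giving part~2.

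I do not anticipate a genuine obstacle; the whole argument is bookkeeping once the characterization lemma is in place. The one place warranting a word of care is the boundary of the admissible range, $S = \pm 2$: there $r_+ = r_-$ and the two ``independent'' solutions $r_\pm^{\tau/\ell}$ of the difference equation \eqref{EOM2} collapse, so one should note that this does not affect the assertion $|r_\pm| = 1$ that Theorem~\ref{thm2} actually makes (the resulting subtlety for genuine unitarity of the $\tau$-evolution is the zero-mode issue treated separately). Optionally, one may also record why reality of $\lambda_\mu$ fails away from these loci — writing $\mu = \sigma + it$ gives $\mathrm{Im}(p^\mu + p^{1-\mu}) = (p^\sigma - p^{1-\sigma})\sin(t\ln p)$, which vanishes only for $\sigma = 1/2$ or $t \in (\pi/\ln p)\,\mathbb{Z}$ — although this is not needed for the sufficiency claim and merely motivates the restriction to the $(0,1)$ interval, the critical line, and the excluded lines of imaginary part $\pm\pi/\ln p$.
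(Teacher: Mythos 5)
Your proof is correct and follows essentially the same route as the paper: both reduce the claim to the statement that $\lambda_\mu$ is real with $-2v_p\le\lambda_\mu\le 0$ on the two loci, the paper then verifying $\left|r_\pm\right|=1$ by computing $\left|r_\pm\right|^2$ directly from Eq.~\eqref{eq229} (the square root being purely imaginary there), while you package the same fact as the unit-circle criterion for the roots of $r^2-Sr+1=0$ with $S=2+\tfrac{2}{v_p}\lambda_\mu$ real and $|S|\le 2$. The only substantive addition on your side is the explicit determination of the range of $\lambda_\mu$, namely $[-(\sqrt p-1)^2,0]$ on $[0,1]$ and $[-(\sqrt p+1)^2,-(\sqrt p-1)^2]$ on the critical line, which the paper asserts without detail; your computation is correct and reproduces the stated thresholds for $v_p$.
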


Theorem \ref{thm2} above is the result we will be interested in the rest of the paper. However, we can fully characterize the regions in the pre-state plane for which $|r_\pm|=1$, for any value of $v_p$:

\begin{remark}
\label{remm10} 
For arbitrary $v_p\in\mathbb{R}$, up to repetition by $2\pi i /\ln p$ in the imaginary direction, there are the following loci in the complex $\mu$ plane where $|r_\pm|=1$:
\begin{enumerate}
\setcounter{enumi}{-1}
\item \label{case0} If $v_p<0$, then $|r_\pm|=1$ for $\mu\in [\mu_1,0]\cup[1,\mu_2]$, where
\be
\label{twopoints}
\mu_{1,2} = \frac{1}{\ln p } \ln \lb \frac{p+1}{2}- v_p \pm \sqrt{v_p^2 - (p+1) v_p + \frac{(p-1)^2}{4}} \rb.
\ee
When $v_p=0$ we have a degeneracy, since $\mu_{1,2}=0,1$.
\item \label{case1} When $v_p$ increases past $0$ the real line locus becomes $[0,\mu_1]\cup[\mu_2,1]$, and it encompasses the entire $[0,1]$ interval at $v_p=(\sqrt{p}-1)^2/2$, but does not extend past the $[0,1]$ interval as $v_p$ is increased further. Endpoints $\mu_{1,2}$ are given by Eq. \eqref{twopoints}, so that $\mu_1(v_p=0)=0$, $\mu_2(v_p=0)=1$, and $\mu_{1,2}(v_p=(\sqrt{p}-1)^2/2)=1/2$.
\item \label{case2} The second locus, contained in the critical line, starts appearing as $v_p$ increases past $(\sqrt{p}-1)^2/2$, as $\mu\in \{ 1/2 +iT_1, 1/2 + iT_2 \}$, with
\be
T_{1,2}= -\frac{i}{\ln p} \ln \lb\frac{p+1-2 v_p}{2 \sqrt{p}} \pm \sqrt{\frac{(p-2 v_p+1)^2}{4p}-1} \rb ,
\ee
so that it encompasses the entire critical line at $v_p=(\sqrt{p}+1)^2/2$.
\item \label{case3} If $v_p > \lb \sqrt{p}+1 \rb^2/2$ then there is a third locus for which $\left| r_\pm\right|=1$, when
\be
\label{eeq230}
\mu = \frac{i\pi}{\ln p} +t,
\ee
with $t\in\mathbb{R}$ a parameter such that $t_1\leq t\leq t_2$, with the endpoints given by
\be
t_{1,2} = \frac{1}{\ln p}\ln \left( v_p-\frac{p+1}{2} \pm \sqrt{v_p^2-(p+1) v_p+\frac{(p-1)^2}{4}} \right).
\ee
\end{enumerate}
\end{remark}
Note that $t_1=1-t_2$ when $v_p\geq  \lb \sqrt{p}+1 \rb^2/2$.

Some of the cases in Remark \ref{remm10} are represented graphically in Figure \ref{figmuplane}.

\begin{figure}[t]
\centering
\begin{tikzpicture}[scale=1]
\node [style=none] (0) at (-2, 0) {};
\node [style=none] (1) at (2,0) {};
\node [style=none] (2) at (0, 2) {};
\node [style=none] (3) at (0,-2) {};
\draw [style=simple,-stealth] (0.center) to (1.center);
\draw [style=simple,stealth-] (2.center) to (3.center);
\node [style=none] (origin) at (0, 0) {};
\node [style=none] (one) at (1, 0) {};
\draw [line width=0.9mm] (origin.center) to (one.center);
\node [style=none] (originlabel) at (-0.2, -0.25) {$0$};
\node [style=none] (originlabel) at (1.05, -0.25) {$1$};

\node [style=none] (0p) at (3, 0) {};
\node [style=none] (1p) at (7,0) {};
\node [style=none] (2p) at (5, 2) {};
\node [style=none] (3p) at (5,-2) {};
\draw [style=simple,-stealth] (0p.center) to (1p.center);
\draw [style=simple,stealth-] (2p.center) to (3p.center);
\node [style=none] (originp) at (5, 0) {};
\node [style=none] (onep) at (6, 0) {};
\draw [line width=0.9mm] (originp.center) to (onep.center);
\node [style=none] (originlabelp) at (4.8, -0.25) {$0$};
\node [style=none] (originlabelp) at (6.05, -0.25) {$1$};
\node [style=none] (criticalbot) at (5.5,-2) {};
\node [style=none] (criticaltop) at (5.5,2) {};
\draw [line width=0.9mm] (criticalbot.center) to (criticaltop.center);

\node [style=none] (0s) at (8, 0) {};
\node [style=none] (1s) at (12,0) {};
\node [style=none] (2s) at (10, 2) {};
\node [style=none] (3s) at (10,-2) {};
\draw [style=simple,-stealth] (0s.center) to (1s.center);
\draw [style=simple,stealth-] (2s.center) to (3s.center);
\node [style=none] (origins) at (10, 0) {};
\node [style=none] (ones) at (11, 0) {};
\draw [line width=1mm] (origins.center) to (ones.center);
\node [style=none] (originlabels) at (9.8, -0.25) {$0$};
\node [style=none] (originlabels) at (11.05, -0.25) {$1$};
\node [style=none] (criticalbotp) at (10.5,-2) {};
\node [style=none] (criticaltopp) at (10.5,2) {};
\draw [line width=0.9mm] (criticalbotp.center) to (criticaltopp.center);
\node [style=none] (pilogp1) at (10.5-1,1) {};
\node [style=none] (pilogp2) at (10.5+1,1) {};
\draw [line width=0.9mm] (pilogp1.center) to (pilogp2.center);
\node [style=none] (mpilogp1) at (10.5-1,-1) {};
\node [style=none] (mpilogp2) at (10.5+1,-1) {};
\draw [line width=0.9mm] (mpilogp1.center) to (mpilogp2.center);
\node [style=none] (pilogpt1) at (10.5-1.3,1) {$t_1$};
\node [style=none] (pilogpt2) at (10.5+1.3,1) {$t_2$};
\end{tikzpicture}
\caption{Complex $\mu$ plane loci (in thickened lines) of unitary worldsheet time evolution for: i)~$v_p=(\sqrt{p}-1)^2/2$, ii) $v_p=(\sqrt{p}+1)^2/2$, iii) $v_p>(\sqrt{p}+1)^2/2$.}
\label{figmuplane}
\end{figure}
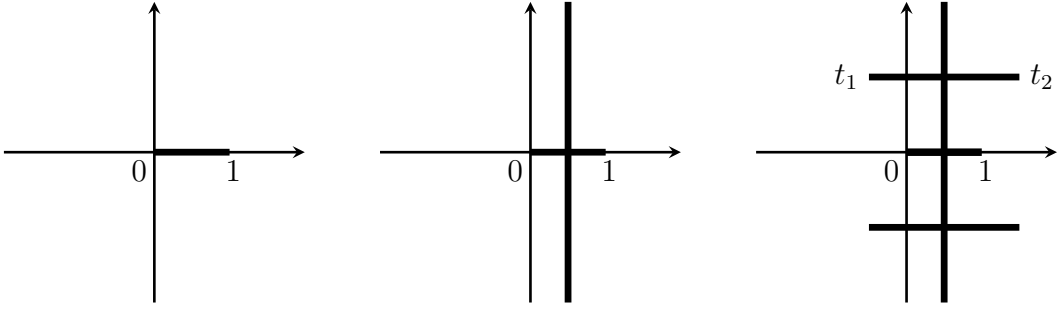

\begin{thm}
\label{thm3}
The loci in Remark \ref{remm10} are the only points in the complex $\mu$ plane for which $|r_\pm|=1$, for all values of the parameter $v_p\in\mathbb{R}$, up to periodicity by $2\pi i k/\ln p,\ k\in \mathbb{Z} $ in the imaginary direction.
\end{thm}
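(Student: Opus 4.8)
\emph{Proof plan.} The plan is to reduce the condition $|r_\pm|=1$ to two purely real conditions on the Laplacian eigenvalue $\lambda_\mu$, and then to solve these conditions line-by-line in the $\mu$ plane, matching the solution sets against Remark~\ref{remm10}. First I would record the elementary fact that, since $r_+ r_- = 1$ by Vi\`ete, the pair $\{r_+,r_-\}$ lies on the unit circle if and only if the sum $r_+ + r_-$ is real and lies in $[-2,2]$: the ``if'' direction is immediate from the quadratic formula applied to $r^2 - (r_++r_-)r + 1 = 0$, while for ``only if'' one notes that $|r_+| = 1$ forces $r_- = 1/r_+ = \overline{r_+}$, whence $r_+ + r_- = 2\,\mathrm{Re}(r_+) \in [-2,2]$. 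Combining this with \eqref{eq120}, which gives $r_+ + r_- = 2 + \frac{2}{v_p}\lambda_\mu$, the condition $|r_\pm| = 1$ becomes: $\lambda_\mu \in \mathbb{R}$ and $\lambda_\mu/v_p \in [-2,0]$, i.e. $\lambda_\mu$ lies in the closed real interval $I_{v_p}$ with endpoints $0$ and $-2v_p$. (The degenerate value $v_p = 0$ is treated separately from \eqref{eq120} directly, which forces $\lambda_\mu = 0$; this reproduces the $\mu_{1,2} = 0,1$ degeneracy noted in Remark~\ref{remm10}, and the double-root endpoints $r_\pm = \pm 1$ of $I_{v_p}$ cause no trouble since $\pm 1$ lie on the unit circle.)

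Next I would determine where $\lambda_\mu = p^\mu + p^{1-\mu} - p - 1$ is real. Writing $\mu = \sigma + it$ with $\sigma,t\in\mathbb{R}$, one computes $\mathrm{Im}\,\lambda_\mu = (p^\sigma - p^{1-\sigma})\sin(t\ln p)$, which vanishes exactly when $\sigma = 1/2$ or $t\ln p \in \pi\mathbb{Z}$. Reducing modulo the period $2\pi i/\ln p$, the reality locus of $\lambda_\mu$ is precisely the union of the three lines $L_0 = \mathbb{R}$, $L_{1/2} = \{1/2 + it\}$, and $L_\pi = \{t + i\pi/\ln p\}$. The final step is to intersect each of these lines with the constraint $\lambda_\mu \in I_{v_p}$. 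On $L_0$ the function $\lambda_\sigma = p^\sigma + p^{1-\sigma} - (p+1)$ is strictly convex and symmetric about $\sigma = 1/2$, with minimum $-(\sqrt p - 1)^2$ there, value $0$ at $\sigma \in \{0,1\}$, and $\lambda_\sigma \to +\infty$ as $|\sigma|\to\infty$; on $L_{1/2}$ one has $\lambda = 2\sqrt p\cos(t\ln p) - (p+1) \in [-(\sqrt p+1)^2, -(\sqrt p - 1)^2]$; and on $L_\pi$ one has $p^\mu = -p^t$, $p^{1-\mu} = -p^{1-t}$, so $\lambda = -(p^t + p^{1-t}) - (p+1) \in (-\infty, -(\sqrt p+1)^2]$. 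For each sign regime of $v_p$ the sublevel set $\{\lambda \in I_{v_p}\}$ on each line is then an explicit closed interval (or empty), whose endpoints are obtained by solving a single quadratic — in $p^\sigma$, in $\cos(t\ln p)$, or in $p^t$ respectively. Checking that these endpoints reproduce exactly formula \eqref{twopoints} and the expressions for $T_{1,2}$ and $t_{1,2}$ in Remark~\ref{remm10} (including $t_1 + t_2 = 1$, which comes from the product of roots being $p$) shows that $\{\mu : |r_\pm| = 1\}$ coincides, up to periodicity, with the locus of Remark~\ref{remm10}, which is the claim.

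The only place where care is genuinely needed is the last step: the bookkeeping across the sign of $v_p$ and the two threshold values $v_p = (\sqrt p \mp 1)^2/2$, where the solution interval on a given line either first becomes nonempty or swallows the whole line. Everything upstream — the unit-circle characterization, the reduction via \eqref{eq120}, and the reality analysis — is a short and robust computation; in particular, no input beyond Vi\`ete, \eqref{eq120}, \eqref{eqq220}, and elementary convexity of $p^\sigma + p^{1-\sigma}$ is required, and Theorems~\ref{thm2} and the statements of Remark~\ref{remm10} serve only as a target to match the explicit endpoints against.
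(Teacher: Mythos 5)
Your proposal is correct and takes essentially the same route as the paper: Vi\`ete's relation makes $r_++r_-=2\lb 1+\lambda_\mu/v_p\rb$ real, forcing $\lambda_\mu\in\mathbb{R}$ and hence the three lines of Lemma \ref{lemmalambdareal}, and your unit-circle criterion $\lambda_\mu/v_p\in[-2,0]$ is exactly the paper's condition \eqref{eeqq240}, after which solving the quadratics line by line reproduces the endpoints in Remark \ref{remm10}. The only difference is packaging (your single iff-lemma versus the paper's separate necessity and sufficiency arguments), not substance.
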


Note that when $v_p=\lb \sqrt{p}+1 \rb^2/2$ we have $\mu_{1,2}=1/2$, so that the two endpoints for the $\mu$ interval in Eq. \eqref{eeq230} are coincident on the critical line, and the third locus (in this case just a point) is contained in the critical line. When $v_p=p+1$ we have $[t_1,t_2]=[0,1]$, however as $v_p$ keeps increasing the length of the $[t_1,t_2]$ interval will keep increasing as well (unlike the behavior of the $\mu$ interval in the first locus, which is bounded to $[0,1]$ as $v_p$ is increased).

We should also remark that Theorems \ref{thm2}, \ref{thm3} imply that the relative sign between the kinetic and potential terms in the action \eqref{Seq} must be negative, otherwise there is no value for $\mu\in\mathbb{C}$ on the critical line for which the worldsheet time evolution is unitary.

In order to prove Theorems \ref{thm2}, \ref{thm3}, we will need the following lemma.
\begin{lemma}
\label{lemmalambdareal}
We have $\lambda_\mu\in\mathbb{R}$ iff any one of the following three conditions holds:
\begin{itemize}
\item $ \Re \lb \mu\rb =1/2$,
\item $\Im \lb \mu\rb = 0$ (up to $2\pi i /\ln p$ periodicity),
\item $\Im \lb \mu\rb = \pi/\ln p$ (up to $2\pi i /\ln p$ periodicity).
\end{itemize}
\end{lemma}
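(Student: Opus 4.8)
The plan is a direct computation of the imaginary part of $\lambda_\mu$. Write $\mu=\sigma+it$ with $\sigma,t\in\mathbb{R}$. Since $p+1\in\mathbb{R}$, the condition $\lambda_\mu\in\mathbb{R}$ is equivalent to $\Im\lb p^\mu+p^{1-\mu}\rb=0$. Using $p^\mu=p^\sigma e^{it\ln p}$ and $p^{1-\mu}=p^{1-\sigma}e^{-it\ln p}$ and setting $\theta\coloneqq t\ln p$, one gets
\be
\Im\lb p^\mu+p^{1-\mu}\rb = \lb p^\sigma-p^{1-\sigma}\rb\sin\theta .
\ee
Hence $\lambda_\mu\in\mathbb{R}$ iff $p^\sigma=p^{1-\sigma}$ or $\sin\theta=0$.

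Next I would translate each of these two alternatives into the three conditions in the statement. Since $p>1$, the equation $p^\sigma=p^{1-\sigma}$ holds iff $\sigma=1/2$, i.e. $\Re(\mu)=1/2$. The equation $\sin(t\ln p)=0$ holds iff $t\ln p\in\pi\mathbb{Z}$, i.e. $t\in\frac{\pi}{\ln p}\mathbb{Z}$; separating even from odd multiples, this is the union of $t\in\frac{2\pi}{\ln p}\mathbb{Z}$, which is $\Im(\mu)=0$ up to $2\pi/\ln p$ periodicity, and $t\in\frac{\pi}{\ln p}+\frac{2\pi}{\ln p}\mathbb{Z}$, which is $\Im(\mu)=\pi/\ln p$ up to $2\pi/\ln p$ periodicity. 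This proves both directions at once: the displayed formula shows that each of the three listed conditions makes $\Im(\lambda_\mu)$ vanish, and conversely that $\Im(\lambda_\mu)=0$ forces the product $(p^\sigma-p^{1-\sigma})\sin\theta$ to vanish, hence one of the three conditions to hold.

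There is essentially no serious obstacle here; the computation is elementary once $\mu$ is split into real and imaginary parts. The only point requiring a little care is the bookkeeping of the ``up to $2\pi i/\ln p$ periodicity'' clauses, i.e. correctly matching even versus odd multiples of $\pi/\ln p$ to the $\Im(\mu)=0$ and $\Im(\mu)=\pi/\ln p$ cases respectively; this is consistent with the fact that shifting $\mu$ by $2\pi i/\ln p$ leaves $p^\mu$, and hence $\lambda_\mu$, unchanged.
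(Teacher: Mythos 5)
Your proposal is correct and follows essentially the same route as the paper: write $\mu=\sigma+it$, observe that $\Im\lb\lambda_\mu\rb=\lb p^\sigma-p^{1-\sigma}\rb\sin\lb t\ln p\rb$, and read off the three loci from the vanishing of either factor. You are in fact slightly more careful than the paper in spelling out the converse direction and the even/odd bookkeeping of the periodicity, but the underlying computation is identical.
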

\begin{proof}
We have $\lambda_\mu = p^\mu+p^{1-\mu}-p-1$. Write $\mu=a+ib$, $a,b\in\mathbb{R}$, then
\be
\Im \lb \lambda_\mu \rb = p^a \sin\lb b \ln p \rb - p^{1-a} \sin\lb b \ln p \rb.
\ee
Therefore $\Im \lb \lambda_\mu \rb=0$ if $a=1/2$, or if $b=0,\pi/\ln p $ (up to $2\pi/\ln p$ repetition).
\end{proof}

We now prove Theorems \ref{thm2}, \ref{thm3}, and Remark \ref{remm10}.

\begin{proof}

We have that $\lambda_\mu \in \mathbb{R}$, $\lambda_\mu + 2v_p \geq 0$ when  $\mu \in \mathbb{R}$ and $v_p\geq \lb\sqrt{p}-1\rb^2/2$, or when $\mu\in\{1/2+it|t\in\mathbb{R}\}$, $v_p\geq \lb\sqrt{p}+1\rb^2/2$. Furthermore $\lambda_\mu\leq 0$ for $\mu \in [0,1]$ or $\mu\in\{1/2+it|t\in\mathbb{R}\}$.
Therefore the square root in Eq. \eqref{eq229} is purely imaginary (or zero) on
\be
\mathcal{D} \coloneqq \lcb\mu\ |\ \mu\in [0,1] \rcb \cup \lcb \mu\ |\ \Re\lb\mu\rb = \frac{1}{2} \rcb.
\ee
Then the norm of $r_\pm$ on $\mathcal{D}$ is
\be
\label{eeeq235}
|r_\pm|^2 = \left| \frac{\lambda_\mu +v_p \pm \sqrt{\lambda_\mu (\lambda_\mu +2v_p)}}{v_p} \right|^2 = \frac{\lb \lambda_\mu + v_p \rb^2 - \lambda_\mu\lb \lambda_\mu+2v_p \rb}{v_p^2} =1.
\ee
This proves Theorem \ref{thm2}.

To prove Remark \ref{remm10} and Theorem \ref{thm3}, note that from Vi\`ete's relations we have
\be
\label{eq235}
r_+ + r_- = 2 \lb 1 + \frac{\lambda_\mu}{v_p} \rb,
\ee
and for parameter $\mu$ such that $|r_\pm|=1$ we have $r^*_+=r_-$, so that Eq. \eqref{eq235} becomes
\be
r_+ + r_+^* = 2 \lb 1 + \frac{\lambda_\mu}{v_p} \rb.
\ee
Therefore, for the values of the parameter $\mu\in\mathbb{C}$ such that $|r_\pm|=1$ we must have $\lambda_\mu\in\mathbb{R}$, so $\mu$ must be contained in the loci in Lemma \ref{lemmalambdareal}. Now, suppose that 
\be
\label{eq2337}
\lambda_\mu \lb \lambda_\mu + 2v_p \rb > 0, 
\ee
then from Eq. \eqref{eq229} we have 
\be
\label{eq239}
\left| r_\pm \right| = \pm_1 \frac{\lambda_\mu +v_p \pm \sqrt{\lambda_\mu (\lambda_\mu +2v_p)}}{v_p}, 
\ee
with the sign $\pm_1$ chosen so that the norm is positive. Since the inequality in Eq. \eqref{eq2337} is strict, there are no solutions to the equation $|r_\pm|=1$, with $|r_\pm|$ given by Eq. \eqref{eq239}. If instead 
\be
\label{eeqq240}
\lambda_\mu \lb \lambda_\mu + 2v_p \rb \leq 0,
\ee
then from Eq. \eqref{eeeq235} we have $|r_\pm|=1$ for all $\mu$ such that Eq. \eqref{eeqq240} holds. If $\lambda_\mu<0$, then direct computation gives Cases \ref{case1}--\ref{case3} in Remark \ref{remm10}, and if $\lambda_\mu>0$ (which can only happen when $v_p<0$) we obtain Case \ref{case0}. This proves Remark \ref{remm10} and Theorem \ref{thm3}. 
\end{proof}

Now, let $\UU_{v_p}\subset \mathbb{C}$ be a subset of the three loci above such that $|r_\pm|=1$ for $\mu\in\UU_{v_p}$, for a fixed choice of $v_p$. The $p$-adic time evolution of the brane is unitary for $\mu\in\UU_{v_p}$, and we restrict expansion~\eqref{exp0} to these modes, that is
\be
X^a_i(\tau) = \mathfrak{p}^a\tau + \sum_{\mu\in\UU_{v_p}} \lsb c^a_\mu e^{i\omega_\mu\tau} + \tilde{c}^a_\mu e^{-i\omega_\mu\tau} \rsb \phi_{\mu}(i) ,
\ee
where now $e^{i\omega_\mu\tau}\coloneqq \lb r_+\rb^{\tau/\ell}$ (note that then $e^{-i\omega_\mu\tau}= \lb r_-\rb^{\tau/\ell}$ ). Angular frequency $\omega_\mu\in\mathbb{R}$ is given by
\be
\label{eeqq235}
\omega_\mu = - \frac{i}{\ell} \ln \lb \frac{\lambda_\mu +v_p +\sqrt{\lambda_\mu^2+2 \lambda_\mu v_p}}{v_p} \rb.
\ee

\begin{lemma}[Properties of $\omega_\mu$]
\label{lemmaomeganuprop}
\begin{enumerate}
\item \label{lpoint1} Eq. \eqref{eeqq235} above implies that $0\leq \omega_\mu \leq \pi$, for $\mu$ anywhere in the three loci (including endpoints), and any $v_p\geq \lb \sqrt{p}-1 \rb^2/2$. Furthermore, for the three loci we have the bounds:
\begin{enumerate}
\item $0\leq \omega_\mu \leq \omega_{\mu=1/2}$ for $\mu\in [0,1]$ and  $v_p\geq \lb \sqrt{p}-1 \rb^2/2$.
\item $\omega_{\mu=1/2} \leq \omega_\mu \leq \omega_{\mu=\pi/\ln p}$ for $\mu\in \{1/2+it| t\in \mathbb{R}\}$ and  $v_p\geq \lb \sqrt{p}+1 \rb^2/2$.
\item $\omega_{\mu=\pi/\ln p} \leq \omega_\mu \leq \pi/\ell$ for $\mu = t + i\pi/\ln p$, $t \in \lsb t_1 , t_2 \rsb$ and $v_p\geq \lb \sqrt{p}+1 \rb^2/2$.
\end{enumerate}
\item \label{lpoint2} We have that
\be
\omega_{\mu=0}=\omega_{\mu=1} = 0,
\ee
so that $\mu=0,1$ in the pre-state plane correspond to the zero mode. Furthermore,~$\forall\  v_p$ we have
\be
\omega_\mu \neq 0 \mrm{\ for\ all\ } \mu\neq 0,1,
\ee 
(up to repetition by $2\pi i k/\ln p$, $k\in\mathbb{Z}$, in the imaginary direction), so no other points in the pre-state plane have zero frequency. Note that because $\omega_{\mu=0}=0$, the constant piece in the mode expansion of $X^a$ is $c_0^a+\tilde c_0^{a}$.

\item \label{lpoint3} Symmetries of $\omega_\mu$ in the pre-state plane:
\begin{enumerate}
\item $\omega_\mu=\omega_{1-\mu}$, $\forall\mu\in[0,1]$.
\item $\omega_{\frac{1}{2}+it} = \omega_{\frac{1}{2}-it}$, $\forall t\in \mathbb{R}$.
\item $\omega_{t+i\pi/\ln p}=\omega_{t-i\pi/\ln p}=-\omega_{1-t+i\pi/\ln p}=-\omega_{1-t-i\pi/\ln p}$, $\forall t\in (t_1,t_2)$, and $\omega_{t\pm i\pi/\ln p}=\omega_{1-t\pm i\pi/\ln p}$ for $t=t_{1,2}$.
\end{enumerate}
\end{enumerate}
\end{lemma}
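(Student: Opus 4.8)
The plan is to reduce all three points to a single identity expressing $\omega_\mu$ through the Laplacian eigenvalue $\lambda_\mu$ of \eqref{eqq220}. On each of the three loci $|r_\pm|=1$, so Vi\`ete's relation $r_+r_-=1$ forces $r_-=r_+^*$, and since $e^{i\omega_\mu\tau}=(r_+)^{\tau/\ell}$ (so $r_+=e^{i\omega_\mu\ell}$), taking the real part of \eqref{eq235} gives
\be
\label{eqcosom}
\cos(\omega_\mu\ell)=1+\frac{\lambda_\mu}{v_p},
\ee
and squaring reproduces $\sin^2(\omega_\mu\ell)=-\lambda_\mu(\lambda_\mu+2v_p)/v_p^2$ of \eqref{eeeq235}. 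Fixing the branch with $r_+$ in the closed upper half-plane, i.e.\ $\omega_\mu=\tfrac{1}{\ell}\arccos(1+\lambda_\mu/v_p)\in[0,\pi/\ell]$, already gives the global bound of Point \ref{lpoint1}; the hypotheses $v_p\ge(\sqrt p-1)^2/2$ (resp.\ $(\sqrt p+1)^2/2$) are precisely what keep $\lambda_\mu+2v_p\ge0$ on the relevant locus, so $\omega_\mu$ stays real there, consistently with Lemma \ref{lemmalambdareal}.

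For the nested per-locus bounds I would combine \eqref{eqcosom} with monotonicity of $\arccos$ and track $\lambda_\mu$: on $\mu\in[0,1]$, $\lambda_\mu=p^\mu+p^{1-\mu}-p-1$ is symmetric under $\mu\mapsto1-\mu$ and decreases on $[0,\tfrac{1}{2}]$ from $\lambda_0=0$ to $\lambda_{1/2}=2\sqrt p-p-1=-(\sqrt p-1)^2$, so $\omega_\mu$ rises from $0$ to $\omega_{\mu=1/2}$; on the critical line $\lambda_{1/2+it}=2\sqrt p\cos(t\ln p)-p-1$ sweeps $[-(\sqrt p+1)^2,-(\sqrt p-1)^2]$, giving \ref{lpoint1}(b); and on the third locus $p^{i\pi/\ln p}=-1$ gives $\lambda_{t+i\pi/\ln p}=-p^t-p^{1-t}-p-1$, which is maximal at $t=\tfrac{1}{2}$ and equals $-2v_p$ at $t=t_{1,2}$ (the radicand in \eqref{eeqq235} vanishes there, which is exactly how $t_{1,2}$ were defined), so $\cos(\omega_\mu\ell)$ runs from its value at the midpoint down to $-1$, i.e.\ $\omega_\mu$ from the bound in \ref{lpoint1}(c) up to $\pi/\ell$. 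Point \ref{lpoint2} then follows: $\omega_\mu=0\iff\cos(\omega_\mu\ell)=1\iff\lambda_\mu=0\iff p^\mu+p^{1-\mu}=p+1$, and setting $z=p^\mu$ this is $z^2-(p+1)z+p=(z-1)(z-p)=0$, so $z\in\{1,p\}$ and $\mu\in\{0,1\}$ up to the period $2\pi i/\ln p$; since $\lambda_\mu<0$ strictly on the critical line and third locus those contribute no zero, and as $\phi_0(i)=\sqrt{p-1}$ is $i$-independent with $e^{\pm i\omega_0\tau}=1$, the constant piece of the $X^a$-expansion is $(c_0^a+\tilde{c}_0^a)\sqrt{p-1}$, i.e.\ $c_0^a+\tilde{c}_0^a$ in the normalization of \eqref{eq216}.

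For the symmetries of Point \ref{lpoint3} the relevant facts are that $\lambda_\mu$ is invariant under $\mu\mapsto1-\mu$, that $\lambda_{\mu^*}=\overline{\lambda_\mu}$, and that $p^\mu$ has period $2\pi i/\ln p$. Parts (a) and (b) are then immediate from \eqref{eqcosom} and injectivity of $\arccos$ on $[0,\pi]$, since $\mu$ and $1-\mu$ (resp.\ $\tfrac{1}{2}+it$ and $\tfrac{1}{2}-it$) sit on the same locus with the same $\lambda_\mu$, hence the same $\omega_\mu$. Part (c) uses $\lambda_{t+i\pi/\ln p}=\lambda_{(1-t)+i\pi/\ln p}$ to get equal cosines, so $\omega_{t+i\pi/\ln p}=\pm\omega_{(1-t)+i\pi/\ln p}$ modulo $2\pi/\ell$, and the content is pinning down the sign.

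I expect this last sign determination to be the main obstacle. The radicand $\lambda_\mu(\lambda_\mu+2v_p)$ in \eqref{eeqq235} is a negative real number all along the third locus, so it lies exactly on the branch cut of the square root, and one must assign $\sqrt{\lambda_\mu(\lambda_\mu+2v_p)}$ --- equivalently, choose which root is $r_+$ --- consistently as the real parameter $t$ passes through $\tfrac{1}{2}$. The plan is to fix $r_+$ by analytic continuation in $t$ on each side of $t=\tfrac{1}{2}$ and read off the sign of $\Im r_+$; this makes $\omega_{t+i\pi/\ln p}$ have opposite signs on the two halves $t<\tfrac{1}{2}$ and $t>\tfrac{1}{2}$ (giving the minus sign for $t\in(t_1,t_2)$), while at the endpoints the radicand vanishes and $\omega_\mu=\pm\pi/\ell$, the two values agreeing modulo $2\pi/\ell$ (giving the plus sign there). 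Everything else --- the global bound $\omega_\mu\in[0,\pi/\ell]$, the three $\lambda_\mu$-monotonicity computations, and the quadratic $(z-1)(z-p)=0$ --- is routine once \eqref{eqcosom} is established.
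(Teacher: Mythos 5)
Your reduction to the single relation $\cos(\omega_\mu\ell)=1+\lambda_\mu/v_p$ (the real part of \eqref{eq235} on the unitary loci), combined with tracking the range and monotonicity of $\lambda_\mu$ on each locus, correctly handles point \ref{lpoint1}, point \ref{lpoint2}, and parts (a), (b) of point \ref{lpoint3}; in substance this is the same route as the paper, which argues via $\omega'(\mu)$ and the equivalence $\omega_\mu=0\Leftrightarrow\lambda_\mu=0$ (the paper dismisses point \ref{lpoint3} as ``direct computation'').

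The genuine gap is in part (c) of point \ref{lpoint3}, which you flag as the main obstacle but then propose to settle with an argument that does not work. On the third locus the radicand $\lambda_\mu\lb\lambda_\mu+2v_p\rb$ is smooth and strictly negative for all $t\in(t_1,t_2)$ --- its only zeros are at the endpoints $t_{1,2}$ --- so there is no branch point at $t=1/2$, and any continuous or analytic continuation of $\sqrt{\lambda_\mu(\lambda_\mu+2v_p)}$ along the interval keeps the sign of its imaginary part fixed; your plan of ``continuing on each side of $t=1/2$'' therefore cannot produce a sign flip there. With the branch you fixed for point \ref{lpoint1} (the principal one, $\omega_\mu=\tfrac{1}{\ell}\arccos(1+\lambda_\mu/v_p)\in[0,\pi/\ell]$), the symmetry $\lambda_{t+i\pi/\ln p}=\lambda_{(1-t)+i\pi/\ln p}$ forces $\omega_{t+i\pi/\ln p}=+\,\omega_{(1-t)+i\pi/\ln p}$, and moreover any negative values of $\omega_\mu$ on half the interval would contradict the bounds $0\leq\omega_\mu\leq\pi$ and 1(c) that you yourself just derived. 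So the stated antisymmetry $\omega_{t+i\pi/\ln p}=-\omega_{(1-t)+i\pi/\ln p}$ cannot follow from \eqref{eeqq235} with one fixed continuous branch along the locus: it requires a separate convention about which root of \eqref{eq229} is labelled $r_+$ on the two halves $t<1/2$ and $t>1/2$ (equivalently a relabelling $r_+\leftrightarrow r_-$ under $t\to1-t$), and your proposal neither identifies that convention nor derives the sign from it. The remaining pieces of (c) --- $\omega_{t+i\pi/\ln p}=\omega_{t-i\pi/\ln p}$ from the $2\pi i/\ln p$ periodicity of $p^\mu$, and the endpoint statement where the radicand vanishes and $\omega_\mu\ell=\pi$ --- are fine.
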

\begin{proof}
To prove point \ref{lpoint1}, by looking at $\omega'(\mu)$ it is immediate to see that for $v_p\geq \lb \sqrt{p}-1 \rb^2/2$ and $\mu\in[0,1]$ frequency $\omega_\mu$ has a maximum at $\mu=1/2$, for $v_p\geq \lb \sqrt{p}-1 \rb^2/2$ and $\mu\in \{1/2+it|t\in\mathbb{R}\}$ $\omega(\mu)$ oscillates between the values $\omega_{\mu=1/2}$ and $\omega_{\mu=\pi/\ln p}$, and for $\mu\in \{t + i\pi/\ln p\ |\ t\in [t_1,t_2] \}$, $\omega(\mu)$ has a minimum at $t=0$, maxima at $t=t_{1,2}$ where the value $\pi/\ell$ is reached, and is monotonic between $0$ and~$t_{1,2}$.

To prove point \ref{lpoint2}, note that $\omega_\mu=0$ iff
\be
\frac{\lambda_\mu +v_p +\sqrt{\lambda_\mu^2+2 \lambda_\mu v_p}}{v_p} = 1,
\ee
which is equivalent to $\lambda_\mu=0$, i.e. $\mu=0,1$ up to repetition by $2\pi i k/\ln p$, $k\in\mathbb{Z}$.

Point \ref{lpoint3} follows by direct computation. 
\end{proof}

\begin{remark}
In the rest of the paper we will only consider the $[0,1]$ and critical line loci, that is we pick $v_p \geq (\sqrt{p}+1)^2/2$. The locus at $i\pi/ln p$ can be analyzed analogously, but we will not discuss it explicitly in the paper.  Furthermore, it is currently unclear what the representation theory content associated to the $i\pi/ln p$ locus should be, if anything.
\end{remark}

\begin{remark}[From loci to sectors: representation content of the $p2$-brane]
We will split the $[0,1]$ and $\{1/2+it|t\in\mathbb{R}\}$ loci into \emph{three} sectors, according to the unitary representation content: 1) the zero mode sector consisting of points $\{0,1\}$, 2) the complementary series sector $(0,1)$, and 3) the unitary principal series (or critical line sector) $\{1/2+it|t\in\mathbb{R}\}$.
\end{remark}

For the zero mode, $(0,1)$ interval, and critical line sector, the Fourier expansion of $X^a$ explicitly reads
\ba
X_i^a(\tau) &=& \mathfrak{p}^a\tau + \sum_{\mu\in[0,1]} \lb c^a_\mu e^{i\omega_\mu\tau} + \tilde{c}^a_\mu e^{-i\omega_\mu\tau} \rb \sqrt{p-1} p^{\mu\langle i,\infty \rangle} \\
&+& \sum_{t>0} \lb c^a_{\frac{1}{2}+it} e^{i\omega_\mu\tau} + \tilde{c}^a_{\frac{1}{2}+it} e^{-i\omega_\mu\tau} \rb \sqrt{p-1}  p^{\lb\frac{1}{2}+it\rb\langle i,\infty \rangle} \nn\\
&+& \sum_{t>0} \lb c^a_{\frac{1}{2}-it} e^{i\omega_\mu\tau} + \tilde{c}^a_{\frac{1}{2}-it} e^{-i\omega_\mu\tau} \rb \sqrt{p-1} p^{\lb\frac{1}{2}-it\rb\langle i,\infty \rangle}.\nn
\ea
Demanding $X^a$ be real determines the $\tilde c_\mu^a$ Fourier coefficients in terms of the $ c_\mu^a$ coefficients. We complex conjugate,
\ba
X_i^{a*}(\tau) &=& \mathfrak{p}^a\tau +  \sum_{\mu\in[0,1]} \lb c^{a*}_\mu e^{-i\omega_\mu\tau} + \tilde{c}^{a*}_\mu e^{i\omega_\mu\tau} \rb\sqrt{p-1} p^{\mu\langle i,\infty \rangle} \\
&+& \sum_{t>0} \lb \tilde{c}^{a*}_{\frac{1}{2}-it} e^{i\omega_\mu\tau} + c^{a*}_{\frac{1}{2}-it} e^{-i\omega_\mu\tau}  \rb \sqrt{p-1} p^{\lb\frac{1}{2}+it\rb\langle i,\infty \rangle}\nn\\
&+& \sum_{t>0} \lb \tilde{c}^{a*}_{\frac{1}{2}+it} e^{i\omega_\mu\tau} + c^{a*}_{\frac{1}{2}+it} e^{-i\omega_\mu\tau} \rb \sqrt{p-1} p^{\lb\frac{1}{2}-it\rb\langle i,\infty \rangle}.\nn
\ea
Then the reality condition $X_i^{a*}=X_i^a$ simplifies the Fourier expansion to
\ba
\label{exp249}
X_i^a(\tau) &=& \mathfrak{p}^a\tau +  \sum_{\mu\in[0,1]} \lb c^a_\mu e^{i\omega_\mu\tau} + c^{a*}_\mu e^{-i\omega_\mu\tau} \rb \phi_\mu(i) \\
&+& \sum_{\mu\in\{1/2+it|t\in\mathbb{R}\}} \lb c^a_{\mu} e^{i\omega_\mu\tau} + c^{a*}_{1-\mu} e^{-i\omega_\mu\tau} \rb \phi_\mu(i). \nn
\ea
With the reality condition, the momentum expansion is (remember that $\ell=2\ell_\tau$)
\ba
P^a_{i}(\tau) &=& \pd_\tau X^a_{i} \\
&=& \mathfrak{p}^a +  \sum_{\mu\in [0,1]} \lb e^{\frac{i\omega_\mu\ell}{2}} - e^{-\frac{i\omega_\mu\ell}{2}} \rb \lb c^a_\mu e^{i\omega_\mu\tau} - c^{a*}_\mu e^{-i\omega_\mu\tau} \rb\frac{1}{\ell} \phi_\mu(i) \nn \\
& & + \sum_{\mu\in \{1/2+it|t\in\mathbb{R}\}} \lb e^{\frac{i\omega_\mu\ell}{2}} - e^{-\frac{i\omega_\mu\ell}{2}} \rb \lb c^a_\mu e^{i\omega_\mu\tau} - c^{a*}_{1-\mu} e^{-i\omega_\mu\tau} \rb\frac{1}{\ell} \phi_\mu(i).\nn
\ea

\section{Poisson brackets and canonical commutators}
\label{sec4}

In this section we will derive the Poisson brackets for the Fourier coefficients $c_\mu^a$, $c_\mu^{a*}$, starting from the position-momentum Poisson brackets \eqref{eeqq2p14} -- \eqref{eeqq2p15}. We will then promote the Fourier coefficients to raising and lowering operators, and the Poisson brackets to commutators.

As we will see below, the position-momentum commutators will determine all the Fourier expansion commutators we are interested in, except for one of the zero mode commutators, which we will parameterize by a parameter denoted $\alpha$.

From here on throughout the rest of the paper we will set the length $\ell=1$.

\begin{remark}
The three sectors all commute with one another, because the $\omega_\mu$ frequency ranges of the three sectors do not overlap, according to Lemma \ref{lemmaomeganuprop}.
\end{remark}

Our aim will be to prove the following theorems.

\begin{thm}
\label{andthisisthm4}
For $\mu,\nu \in (0,1) $ we have
\ba
& &\lcb c_{\mu}^{a*}, c_{\nu}^b\rcb = \frac{1}{2} \frac{\eta^{ab}}{e^{-\frac{i\omega_\mu}{2}} - e^{\frac{i\omega_\mu}{2}} } \delta_{\mu + \nu-1}, \\
& &\lcb c^a_{\mu}, c^b_{\nu} \rcb = \lcb c^{a*}_{\mu}, c^{b*}_{\nu} \rcb = 0, \nn
\ea
for $\mu,\nu\in \lcb 1/2+it,t\in\mathbb{R} \rcb$ we have
\ba
& &\lcb c_{\mu}^{a*}, c_{\nu}^b\rcb = \frac{1}{2} \frac{\eta^{ab}}{e^{-\frac{i\omega_\mu}{2}} - e^{\frac{i\omega_\mu}{2}} } \delta_{\mu - \nu}, \\
& &\lcb c^a_{\mu}, c^b_{\nu} \rcb = \lcb c^{a*}_{\mu}, c^{b*}_{\nu} \rcb = 0, \nn
\ea
and for $\mu,\nu\in\{0,1\}$ we have
\ba
& &\{c_0^{a+},\mathfrak{p}^b \} = - \alpha\sqrt{p-1}\eta^{ab},\\
& &\lcb c^{a+}_{0} , c^{b+}_{0} \rcb = \lcb c^{a+}_{0} , c^{b+}_{1} \rcb = \lcb c^{a+}_{1} , c^{b+}_{1}\rcb = \lcb \mathfrak{p}^a,\mathfrak{p}^b \rcb = \lcb \mathfrak{p}^a , c_1^{b+}\rcb = 0,\nn\\
& & \Big\{c^{a+}_0 , c^{b-}_0 \Big\}  = \frac{-i\alpha\eta^{ab}}{\otoz} , \nn\\
& & \Big\{ c^{a+}_{0} , c^{b-}_{1} \Big\} = \Big\{c^{a+}_{1} , c^{b-}_{0} \Big\} = - \Big\{c_1^{a+}, c^{b-}_1 \Big\} = \frac{i\eta^{ab}}{\otoz}.\nn
\ea
\end{thm}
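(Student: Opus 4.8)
The plan is to decompose the position--momentum Poisson brackets \eqref{eeqq2p14}--\eqref{eeqq2p15} into brackets for the Fourier coefficients by exploiting the orthogonality relation \eqref{eq221}. Concretely, I would start from the expansions \eqref{exp249} for $X_i^a$ and the companion expansion for $P_i^a$, and apply $\langle \varphi_\sigma | \cdot\, \rangle$ (a sum over $i\in V(T_p)$ against $\varphi_\sigma^*(i)=p^{\sigma^*\langle i,\infty\rangle}$, recalling $\phi_\mu=\sqrt{p-1}\,\varphi_\mu$) to both sides. Using Theorem \ref{thm1}, the sum over $i$ collapses the $\phi_\mu(i)$ factors to Kronecker deltas pairing $\mu$ with $\sigma^*$ or $1-\sigma^*$: in the complementary-series sector this isolates the coefficient $c_{1-\sigma}^a$ (since $\mu\in(0,1)$ real means $\mu^*=\mu$), and on the critical line it isolates $c_{\sigma}^a$ and $c_{1-\sigma}^a$ (since for $\mu=1/2+it$ one has $1-\mu^*=1/2+it=\mu$). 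This gives each $c_\mu^a$, $c_\mu^{a*}$ as an explicit linear combination of the ``Fourier-transformed'' fields $\langle \varphi_\mu | X \rangle$ and $\langle \varphi_\mu | P \rangle$ evaluated at fixed $\tau$, with coefficients built from $e^{\pm i\omega_\mu\tau}$ and $e^{\pm i\omega_\mu/2}-$type factors. Then $\{c_\mu^{a*},c_\nu^b\}$ is computed by bilinearity from $\{X_i^a,P_j^b\}=\eta^{ab}\delta_{ij}$ and $\{X,X\}=\{P,P\}=0$, with a second application of \eqref{eq221} contracting the remaining sum over $i$; the $\tau$-dependence must cancel, which is a consistency check on the whole scheme.

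For the nonzero-mode sectors I would carry this out as follows: first solve the $2\times 2$ linear system relating $(c_\mu^a e^{i\omega_\mu\tau}, c_{1-\mu}^{a*}e^{-i\omega_\mu\tau})$ (resp.\ the critical-line analogue) to $\langle\varphi_\mu|X\rangle$ and $\langle\varphi_\mu|P\rangle$; the determinant of this system is proportional to $e^{i\omega_\mu/2}-e^{-i\omega_\mu/2}$, which explains the denominator appearing in the claimed brackets and is nonvanishing precisely because Lemma \ref{lemmaomeganuprop} guarantees $0<\omega_\mu<\pi$ strictly in the interiors of these loci (so the inversion is legitimate). Then substituting into $\{c_\mu^{a*},c_\nu^b\}$, the $\{X,P\}$ term produces $\eta^{ab}$ times $\sum_i \varphi_\mu^*(i)\varphi_\nu^*(i)$ or similar, which by \eqref{eq221} yields $\delta_{\mu+\nu-1}$ in the $(0,1)$ case and $\delta_{\mu-\nu}$ on the critical line; tracking the normalization factors of $\sqrt{p-1}$, the $1/(1-p)$ from \eqref{eq221}, and the $1/\ell=1$ from the momentum expansion should assemble into the stated coefficient $\tfrac12\,\eta^{ab}/(e^{-i\omega_\mu/2}-e^{i\omega_\mu/2})$. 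The vanishing of $\{c^a,c^b\}$ and $\{c^{a*},c^{b*}\}$ follows because those combinations only ever involve $\{X,X\}$ and $\{P,P\}$, which are zero.

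The zero-mode sector $\mu\in\{0,1\}$ is the genuinely delicate part, and I expect it to be the main obstacle. Here $\omega_0=\omega_1=0$ by Lemma \ref{lemmaomeganuprop}, so the determinant $e^{i\omega_\mu/2}-e^{-i\omega_\mu/2}$ that powered the inversion above degenerates to zero; the $\mu=0$ and $\mu=1$ modes, together with the zero mode $\mathfrak{p}^a$ and the constant piece $c_0^a+\tilde c_0^a$, become entangled, and one cannot cleanly solve for all coefficients. The resolution is that one linear combination of the zero-mode brackets is not fixed by $\{X,P\}=\eta\delta$ and must be introduced as the free parameter $\alpha$ (equivalently, the $\delta_{\mu^*+\nu}$ term in \eqref{eq221}, which only contributes in this degenerate sector, becomes relevant here). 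Concretely I would: (i) pass to the $b_\mu^\pm$-type null/diagonal basis where $c_0^{a\pm}$ denote the appropriate combinations, (ii) take the limit $\omega\to0$ carefully, keeping $1/\omega_{\mu\to0}$ terms explicit since the claimed brackets literally contain $i\eta^{ab}/\omega_{\mu\to0}$, and (iii) declare $\{c_0^{a+},\mathfrak p^b\}=-\alpha\sqrt{p-1}\,\eta^{ab}$ as the one undetermined datum, then derive the remaining zero-mode brackets from consistency with $\{X,P\}=\eta\delta$, $\{X,X\}=\{P,P\}=0$ and the $\omega\to0$ expansions of the $r_\pm$ eigenmodes. Verifying that all the $1/\omega_{\mu\to0}$ singularities land exactly where the theorem says — and that the finite parts match — is where the real bookkeeping lies; I would double-check it against the requirement, flagged in the introduction, that this $\alpha$-dependent piece is precisely what makes the zero-sector momentum and angular momentum satisfy the Poincar\'e algebra in Section \ref{sec8}.
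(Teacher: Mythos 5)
Your plan is essentially the paper's own proof: project the canonical brackets \eqref{eeqq2p14}--\eqref{eeqq2p15} onto the plane waves using the orthogonality relation \eqref{eq221}, treat the $(0,1)$ and critical-line sectors by inverting for the mode coefficients (legitimate since $\omega_\mu\neq 0$ away from $\mu=0,1$, so $e^{i\omega_\mu/2}-e^{-i\omega_\mu/2}\neq 0$), and handle the degenerate zero-mode sector separately with the cutoff $\otoz$, one bracket left free as $\alpha$, and the $\delta_{\mu^*+\nu}$ term of \eqref{eq221} entering only there --- this matches the paper's Eqs. \eqref{commut1}--\eqref{commut3} and Section \ref{seczeromm}, the only mechanical difference being that you solve the $2\times 2$ system and substitute, whereas the paper matches coefficients of the independent $\tau$-exponentials in the smeared bracket relations.

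One step is mis-justified as written, though the conclusion is correct: you assert that $\lcb c^a_\mu,c^b_\nu\rcb=\lcb c^{a*}_\mu,c^{b*}_\nu\rcb=0$ ``because those combinations only ever involve $\{X,X\}$ and $\{P,P\}$.'' In your solve-and-substitute scheme each $c^a_\mu$ is a linear combination of \emph{both} the $X$-projection and the $P$-projection at fixed $\tau$, so $\lcb c^a_\mu,c^b_\nu\rcb$ does contain $\{X,P\}$ cross terms. They cancel, but only for a nontrivial reason: the Kronecker delta produced by \eqref{eq221} forces $\mu+\nu=1$ (respectively the critical-line pairing), whence $\omega_\mu=\omega_\nu$, and the two cross contributions --- which come with opposite signs and with the factors $1/\lb e^{i\omega_\mu/2}-e^{-i\omega_\mu/2}\rb$ and $1/\lb e^{i\omega_\nu/2}-e^{-i\omega_\nu/2}\rb$ --- are then equal and opposite. (In the paper these vanishing relations are instead read off from the $e^{\pm i(\omega_\mu+\omega_\nu)\tau}$ coefficients of the smeared $\{X,X\}=0$ relation, cf.\ Eq.\ \eqref{thisis310}.) With that cancellation supplied, and the zero-mode bookkeeping carried out as you outline (it is exactly the paper's derivation of Eqs.\ \eqref{eq161}, \eqref{eq164}--\eqref{eeq341} and \eqref{bigPoisson}), your argument reproduces the theorem.
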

Here we have used the notation $x^\pm\coloneqq x \pm x^*$, and $\alpha$ is an arbitrary parameter whose value will not matter for the physical results. The parameter $\otoz$ is a zero mode (infrared) cutoff corresponding to the angular frequency $\omega_{\mu=0}=\omega_{\mu=1}$ for the zero mode, so that the commutators above are formally divergent (see Section \ref{seczeromm} for details); this will not affect the physics.

\begin{thm}
\label{andthisisthm5}
For $\mu,\nu\in \{0,1\}$ we have 
\be
\label{eqsoft34}
\lcb \mathfrak{p}^a, c_0^{b-} \rcb= \lcb \mathfrak{p}^a, c_0^{b-} + c_1^{b-} \rcb= \lcb c_\mu^{a-}, c_\nu^{b-} \rcb =  \OO\lb 1 \rb.
\ee
\end{thm}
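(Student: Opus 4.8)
The plan is to derive all three brackets in Theorem \ref{andthisisthm5} as consequences of the zero-mode analysis already set up for Theorem \ref{andthisisthm4}. First I would recall that the zero sector of the mode expansion \eqref{exp249} is built from the modes $\mu=0$ and $\mu=1$, both of which have $\omega_\mu=0$; the only genuinely independent data there, besides $\mathfrak{p}^a$, are the combinations $c_0^{a\pm}$ and $c_1^{a\pm}$ (in the notation $x^\pm = x\pm x^*$). The position-momentum Poisson brackets \eqref{eeqq2p14}--\eqref{eeqq2p15}, Fourier-decomposed via the orthogonality relation \eqref{eq221} restricted to the $\{0,1\}$ sector — where \emph{both} $\delta_{\mu^*+\nu}$ and $\delta_{\mu^*+\nu-1}$ terms contribute — give a linear system for the zero-mode brackets. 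Theorem \ref{andthisisthm4} already records the solution of the ``$+/-$'' and ``$+/+$'' parts of that system; Theorem \ref{andthisisthm5} is the complementary statement that the remaining brackets, those among $\mathfrak{p}^a$ and the $c_\mu^{b-}$'s, are \emph{finite} (order $1$ in the $\otoz$ cutoff), rather than divergent like the $\{c_0^{a+},c_0^{b-}\}$ type brackets.

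Concretely I would proceed as follows. Start from the momentum expansion $P_i^a(\tau)$ displayed just before Section \ref{sec4}: in the zero sector the factor $e^{i\omega_\mu\ell/2}-e^{-i\omega_\mu\ell/2}$ vanishes as $\omega_\mu\to 0$, so the zero-mode part of $P_i^a$ is simply $\mathfrak{p}^a$ up to terms that carry an explicit factor of $\otoz$. Likewise the zero-mode part of $X_i^a$ contains the combination $c_0^a+c_0^{a*}+(c_1^a+c_1^{a*})\,p^{\langle i,\infty\rangle}$, i.e. $c_0^{a+}$ and $c_1^{a+}$. Next, contract the canonical brackets $\{X_i^a,P_j^b\}=\eta^{ab}\delta_{ij}$ and $\{X_i^a,X_j^b\}=\{P_i^a,P_j^b\}=0$ against $\phi_\mu^*(i)\phi_\nu^*(j)$ and sum over $i,j$ using \eqref{eq221}, keeping careful track of the powers of $\otoz$ coming from the $P$-expansion. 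The bracket $\{\mathfrak{p}^a,c_0^{b-}\}$ (and its sum with $\{\mathfrak{p}^a,c_1^{b-}\}$) then drops out of the $\{X,X\}=0$ relation, or equivalently out of the subleading ($\otoz^0$) piece of a $\{X,P\}$ relation, with a finite right-hand side; and $\{c_\mu^{a-},c_\nu^{b-}\}$ similarly appears at order $\otoz^0$ in $\{X,X\}=0$. The claim $\OO(1)$ is precisely the assertion that these particular combinations do not pick up the $1/\otoz$ enhancement.

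The main obstacle, as in the proof of Theorem \ref{andthisisthm4}, is bookkeeping the $\omega_{\mu\to 0}$ limit: one must expand $e^{\pm i\omega_\mu\ell/2}=1\pm i\omega_\mu\ell/2+\OO(\omega_\mu^2)$ consistently, identify which mode brackets must scale as $1/\otoz$ in order for the leading singular pieces of $\{X,P\}$ and $\{P,P\}$ to come out finite and correct, and then read off that the residual combinations $\{\mathfrak{p}^a,c_0^{b-}\}$, $\{\mathfrak{p}^a,c_0^{b-}+c_1^{b-}\}$, $\{c_\mu^{a-},c_\nu^{b-}\}$ sit at the next order and are hence $\OO(1)$. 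A secondary subtlety is that $c_0^{a-}$ and $c_1^{a-}$ are not fully fixed by the canonical brackets — this is the origin of the free parameter $\alpha$ — so the statement must be read as ``whatever finite choice is made, these brackets are $\OO(1)$'', which is automatic once one checks no term forces a $1/\otoz$. I would close by remarking, as the paper does for $\alpha$, that the precise $\OO(1)$ values are immaterial for the physical results of Sections \ref{sec5}--\ref{sec8}, since they enter the Hamiltonian, momentum and angular momentum only through combinations that are separately finite.
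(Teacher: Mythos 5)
Your handling of the first two brackets matches the paper's route: $\{\mathfrak{p}^a,c_0^{b-}\}$ and $\{\mathfrak{p}^a,c_0^{b-}+c_1^{b-}\}$ are read off from the zero-sector $\{X,P\}$ relation \eqref{commut3}, where they enter through the $\tau$-linear term multiplied by the explicit factor $e^{i\otoz/2}-e^{-i\otoz/2}\sim i\otoz$ from the momentum expansion; demanding a $\tau$-independent, finite right-hand side forces these brackets to have no $1/\otoz$ divergence, which is exactly Eq.~\eqref{eq338O1}, already obtained in the course of Theorem~\ref{andthisisthm4}. Your alternative attribution to $\{X,X\}=0$, however, does not work at the order you set it up: as you yourself state, the zero-mode content of $X_i^a$ at zeroth order in $\otoz$ is only $\mathfrak{p}^a\tau$, $c_0^{a+}$, $c_1^{a+}$, with no minus combinations, so Eq.~\eqref{commut1} constrains only the $+/+$ and $\mathfrak{p}$ brackets (Eq.~\eqref{eq161}) and is silent about $\{\mathfrak{p}^a,c_\mu^{b-}\}$ and $\{c_\mu^{a-},c_\nu^{b-}\}$.

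The genuine gap is the third bracket. The minus combinations enter $X$ only through the time dependence, $c^{a\pm}_\mu(\tau)=c^{a\pm}_\mu+i\otoz\tau\,c^{a\mp}_\mu+\OO(\otoz^2)$ (Eq.~\eqref{eq342cmu01}), so inside $\{X,X\}$ the bracket $\{c_\mu^{a-},c_\nu^{b-}\}$ first appears with an $\otoz^2\tau^2$ prefactor, and requiring $\{X,X\}=0$ at order $\otoz^0$ only bounds it by $o(1/\otoz^2)$ --- strictly weaker than the claimed $\OO(1)$; the same weak bound is all that $\{P,P\}=0$ yields, since there the prefactor is again $\otoz^2$. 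The paper's actual argument is the step missing from your sketch: expand the time-dependent commutators to linear order in $\otoz$ and note that in $\{c^{a+}_\mu(\tau),c^{b-}_\nu(\tau)\}$ --- the bracket whose $\tau$-independent leading piece is already fixed at order $1/\otoz$ by Eq.~\eqref{bigPoisson} --- the linear correction is $i\otoz\tau\big(\{c^{a+}_\mu,c^{b+}_\nu\}+\{c^{a-}_\mu,c^{b-}_\nu\}\big)=i\otoz\tau\,\{c^{a-}_\mu,c^{b-}_\nu\}$. Demanding that this $\tau$-dependent term not contaminate the matching that produced \eqref{bigPoisson} is what forces $\{c^{a-}_\mu,c^{b-}_\nu\}$ to carry no divergence, i.e.\ to be $\OO(1)$ in the sense stated after the theorem. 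Without the observation that the $-/-$ bracket sits at \emph{linear} order in $\otoz$ inside the $+/-$ commutator (rather than at quadratic order inside $\{X,X\}$ or $\{P,P\}$), your plan does not reach the stated strength of the result.
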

In Theorem \ref{andthisisthm5} the $O(1)$ term on the right-hand side is in $\omega$, meaning that the brackets should have no diverging dependence of $\omega$ as $\omega\to0$.

\subsection{Inversion formulas for the $[0,1]$ locus}

The first task is to isolate the Fourier modes from $X_i^a$ and $P_i^a$. For $\nu\in[0,1]$ we have the inversion formula
\ba
\label{eq31}
\sum_{i\in V\lb T_p \rb} X_i^a(\tau) \phi_\nu(i) &=& \sum_{i\in V\lb T_p \rb} \lb \mathfrak{p}^a\tau +  \sum_{\mu\in[0,1]} \lb c^a_\mu e^{i\omega_\mu\tau} + c^{a*}_\mu e^{-i\omega_\mu\tau} \rb \phi_\mu(i) \rb \phi_\nu(i) \nn\\
&=& -\frac{1}{\sqrt{p-1}} \mathfrak{p}^a\tau\lb \delta_{\nu} + \delta_{\nu-1} \rb\\
& & -\sum_{\mu\in[0,1]} \lb c^a_\mu e^{i\omega_\mu\tau} + c^{a*}_\mu e^{-i\omega_\mu\tau} \rb \lb \delta_{\mu+\nu} + \delta_{\mu+\nu-1} \rb \nn \\
&=& -\frac{1}{\sqrt{p-1}} \mathfrak{p}^a\tau\lb \delta_{\nu} + \delta_{\nu-1} \rb \nn \\
& & - \lb c^a_{0} e^{i\omega_\nu\tau}\delta_{\nu} + c^{a*}_{0} e^{-i\omega_\nu\tau}\delta_\nu + c^a_{1-\nu} e^{i\omega_\nu\tau} + c^{a*}_{1-\nu} e^{-i\omega_\nu\tau}\rb,\nn
\ea
(where we have used that $\omega_\mu=\omega_{1-\mu}$, and orthonormality relations \eqref{eq223}), and
\ba
\label{eq32}
\sum_{i\in V\lb T_p \rb} P_i^a(\tau) \phi_\nu(i) &=& -\frac{1}{\sqrt{p-1}} \mathfrak{p^a}\lb \delta_{\nu} + \delta_{\nu-1} \rb - \lb e^{\frac{i\omega_\nu}{2}}- e^{-\frac{i\omega_\nu}{2}} \rb \times \\ 
&\times& \lb c^a_{0} e^{i\omega_\nu\tau}\delta_\nu - c^{a*}_{0} \delta_\nu e^{-i\omega_\nu\tau} + c^a_{1-\nu}  e^{i\omega_\nu\tau} - c^{a*}_{1-\nu} e^{-i\omega_\nu\tau} \rb. \nn
\ea
With inversion formulas \eqref{eq31}, \eqref{eq32} we can compute the Poisson brackets. We have 
\ba
\label{commut1}
0 &=& \sum_{i,j\in V\lb T_p \rb} \{ X_i^a, X_j^b \} \phi_\mu(i) \phi_\nu(j) \\
  &=& \Big\{ \frac{1}{\sqrt{p-1}} \mathfrak{p}^a\tau\lb \delta_{\mu} + \delta_{\mu-1} \rb + c^a_{0} e^{i\omega_\mu\tau}\delta_{\mu} + c^{a*}_{0} e^{-i\omega_\mu\tau}\delta_\mu + c^a_{1-\mu} e^{i\omega_\mu\tau} + c^{a*}_{1-\mu} e^{-i\omega_\mu\tau},\nn\\
  & & \frac{1}{\sqrt{p-1}}\mathfrak{p}^b\tau\lb \delta_{\nu} + \delta_{\nu-1} \rb + c^b_{0} e^{i\omega_\nu\tau}\delta_{\nu} + c^{b*}_{0} e^{-i\omega_\nu\tau}\delta_\nu + c^b_{1-\nu} e^{i\omega_\nu\tau} + c^{b*}_{1-\nu} e^{-i\omega_\nu\tau} \Big\}, \nn
\ea
as well as
\ba
\label{commut2}
0 &=& \sum_{i,j \in V\lb T_p \rb } \{ P_i^a, P_j^b \} \phi_\mu(i) \phi_\nu(j) \\
  &=& \Big\{ \frac{1}{\sqrt{p-1}} \mathfrak{p^a}\lb \delta_{\mu} + \delta_{\mu-1} \rb + \lb e^{\frac{i\omega_\mu}{2}} - e^{-\frac{i\omega_\mu}{2}} \rb \Big( c^a_{0} e^{i\omega_\mu\tau}\delta_{\mu} - c^{a*}_{0} e^{-i\omega_\mu\tau}\delta_\mu\nn\\
& & + c^a_{1-\mu} e^{i\omega_\mu \tau} - c^{a*}_{1-\mu} e^{-i\omega_\mu\tau} \Big) ,\nn\\
& & \frac{1}{\sqrt{p-1}} \mathfrak{p^b}\lb \delta_{\nu} + \delta_{\nu-1} \rb + \lb e^{\frac{i\omega_\nu}{2}} - e^{-\frac{i\omega_\nu}{2}} \rb \Big( c^b_{0} e^{i\omega_\nu\tau} \delta_{\nu} - c^{b*}_{0}  e^{-i\omega_\nu\tau}\delta_\nu\nn\\
& & + c^b_{1-\nu} e^{i\omega_\nu\tau} - c^{b*}_{1-\nu} e^{-i\omega_\nu\tau} \Big) \Big\}. \nn
\ea
Finally, using Eq. \eqref{eeqq2p14} evaluates the following sum as
\ba
\sum_{i,j\in V\lb T_p \rb}\{ X_i^a, P_j^b \} \phi_\mu(i) \phi_\nu(j) &=& \sum_{i,j\in V\lb T_p \rb} \eta^{ab} \delta_{ij}\phi_\mu(i) \phi_\nu(j) \\
&=& \sum_{i\in V\lb T_p \rb} \eta^{ab}\phi_\mu(i) \phi_\nu(i) \nn\\
&=& - \eta^{ab} \lb \delta_{\mu+\nu} + \delta_{\mu+\nu-1} \rb,\nn
\ea
so that
\ba
\label{commut3}
& &-\eta^{ab} \lb \delta_{\mu+\nu} + \delta_{\mu+\nu-1} \rb = \Big\{ \frac{1}{\sqrt{p-1}} \mathfrak{p}^a\tau\lb \delta_{\mu} + \delta_{\mu-1} \rb + c^a_{0} e^{i\omega_\mu\tau}\delta_{\mu} \\
& & + c^{a*}_{0} e^{-i\omega_\mu\tau}\delta_\mu + c^a_{1-\mu} e^{i\omega_\mu\tau} + c^{a*}_{1-\mu} e^{-i\omega_\mu\tau} ,\nn\\ 
& &\frac{1}{\sqrt{p-1}} \mathfrak{p^b}\lb \delta_{\nu} + \delta_{\nu-1} \rb +
\lb e^{\frac{i\omega_\nu}{2}} - e^{-\frac{i\omega_\nu}{2}} \rb \Big(  c^b_{0}  e^{i\omega_\nu\tau}\delta_{\nu} - c^{b*}_{0}  e^{-i\omega_\nu\tau}\delta_\nu  \nn \\
& &+ c^b_{1-\nu} e^{i\omega_\nu\tau} - c^{b*}_{1-\nu}  e^{-i\omega_\nu\tau} \Big) \Big\}. \nn
\ea
We will use Eqs. \eqref{commut1}, \eqref{commut2}, \eqref{commut3} for the $(0,1)$ and zero sector commutation relations, and a slightly modified version for the critical line commutation relations.

\subsection{The $(0,1)$ sector}

If $\mu,\nu\neq 0,1$ then $e^{\frac{i\omega_{\mu,\nu}}{2}} - e^{-\frac{i\omega_{\mu,\nu}}{2}} \neq 0$, and Eqs. \eqref{commut1}, \eqref{commut2}, \eqref{commut3} become
\ba
\label{thisis310}
\lcb  c^a_{1-\mu} e^{i\omega_\mu\tau} + c^{a*}_{1-\mu} e^{-i\omega_\mu\tau},c^b_{1-\nu} e^{i\omega_\nu\tau} + c^{b*}_{1-\nu} e^{-i\omega_\nu\tau} \rcb &=& 0, \\
\lcb c^a_{1-\mu} e^{i\omega_\mu \tau} - c^{a*}_{1-\mu} e^{-i\omega_\mu\tau}, c^b_{1-\nu} e^{i\omega_\nu\tau} - c^{b*}_{1-\nu} e^{-i\omega_\nu\tau} \rcb &=& 0, \nn\\
\lcb  c^a_{1-\mu} e^{i\omega_\mu\tau} + c^{a*}_{1-\mu} e^{-i\omega_\mu\tau} ,
 c^b_{1-\nu} e^{i\omega_\nu\tau} - c^{b*}_{1-\nu}  e^{-i\omega_\nu\tau} \rcb &=& - \frac{\eta^{ab} \delta_{\mu+\nu-1}}{e^{\frac{i\omega_\nu}{2}} - e^{-\frac{i\omega_\nu}{2}}}. \nn
\ea

The terms in the first line above proportional to $e^{\pm i\lb \omega_\mu+\omega_\nu \rb \tau }$ give
\be
\lcb c_\mu^a, c_\nu^b \rcb = \lcb c_\mu^{a*}, c_{\nu}^{b*} \rcb =0 
\ee
for all $\mu,\nu\in (0,1)$, while the cross-terms in the first two lines of Eq. \eqref{thisis310} are equivalent~to
\ba
\label{eeqq313}
\lcb c_\mu^{a*},c_\nu^b \rcb &=& 0 \quad \mrm{\ if\ } \quad \omega_\mu\neq\omega_\nu,\\
\lcb c_\mu^{a*},c_\nu^b \rcb + \lcb c_\mu^{a},c_\nu^{b*} \rcb &=& 0 \quad \mrm{\ if\ } \quad \omega_\mu=\omega_\nu, \nn
\ea
so that for $\omega_\mu=\omega_\nu$ the third line in Eq. \eqref{thisis310} is equivalent to
\be
\label{eeqq314}
\lcb c_\mu^{a*}, c_\nu^{b} \rcb - \lcb c_\mu^a, c_\nu^{b*} \rcb = - \frac{\eta^{ab} \delta_{\mu+\nu-1}}{e^{\frac{i\omega_\nu}{2}} - e^{-\frac{i\omega_\nu}{2}}}.
\ee
Remembering that $\omega_\mu=\omega_\nu$ for $\mu=\nu$ or $\mu=1-\nu$, Eqs. \eqref{eeqq313}, \eqref{eeqq314} imply that
\ba
\lcb c_\mu^{a*},c_\mu^b \rcb &=& 0 \quad \mrm{\ for\ }\quad \mu\neq 1/2,\\
2\lcb c_\mu^{a*}, c_\nu^{b} \rcb  &=& - \frac{\eta^{ab} \delta_{\mu+\nu-1}}{e^{\frac{i\omega_\nu}{2}} - e^{-\frac{i\omega_\nu}{2}}}. \nn
\ea
Thus we conclude, for the $(0,1)$ sector,
\ba
\lcb c_{\mu}^{a*}, c_{\nu}^b\rcb &=& \frac{1}{2} \frac{\eta^{ab}}{e^{-\frac{i\omega_\mu}{2}} - e^{\frac{i\omega_\mu}{2}} } \delta_{\mu + \nu -1}, \\
\lcb c^a_{\mu}, c^b_{\nu} \rcb &=& \lcb c^{a*}_{\mu}, c^{b*}_{\nu} \rcb = 0. \nn
\ea

\subsection{Critical line sector}

Consider now $\nu\in \lcb 1/2 + it\, |\, t\in \mathbb{R} \rcb$. The mode expansions are

\ba
\sum_{i\in V\lb T_p \rb} X_i^a(\tau) \phi_\nu(i) &=& \sum_{i\in V\lb T_p \rb} \sum_{\mu\in \lcb 1/2 + it\, |\, t\in \mathbb{R} \rcb} \lb c^a_\mu e^{i\omega_\mu\tau} + c^{a*}_{1-\mu} e^{-i\omega_\mu\tau} \rb \phi_\mu(i) \phi_\nu(i) \nn\\
&=&  -\sum_{\mu\in\lcb 1/2 + it\, |\, t\in \mathbb{R} \rcb} \lb c^a_\mu e^{i\omega_\mu\tau} + c^{a*}_{1-\mu} e^{-i\omega_\mu\tau} \rb \delta_{\mu+\nu-1}  \\
&=&  - \lb  c^a_{1-\nu} e^{i\omega_\nu\tau} + c^{a*}_{\nu} e^{-i\omega_\nu\tau}\rb,\nn
\ea
and
\be
\sum_{i\in V\lb T_p \rb} P_i^a(\tau) \phi_\nu(i) = - \lb e^{\frac{i\omega_\nu}{2}} - e^{-\frac{i\omega_\nu}{2}} \rb \lb  c^a_{1-\nu} e^{i\omega_\nu\tau} - c^{a*}_{\nu}  e^{-i\omega_\nu\tau}\rb.
\ee
Then the commutation relations \eqref{commut1}, \eqref{commut2}, \eqref{commut3} modify into
\ba
0 &=& \sum_{i,j\in V\lb T_p \rb} \lcb X_i^a, X_j^b \rcb \phi_\mu(i)\phi_\nu(j)\nn\\
\label{eq414}
&=& \lcb c^a_{1-\mu} e^{i\omega_\mu\tau} + c^{a*}_{\mu} e^{-i\omega_\mu\tau} , c^b_{1-\nu} e^{i\omega_\nu\tau} + c^{b*}_{\nu} e^{-i\omega_\nu\tau} \rcb,\\
\label{eq415}
0 &=& \sum_{i,j \in V\lb T_p \rb } \{ P_i^a, P_j^b \} \phi_\mu(i) \phi_\nu(j)\nn \\
  &=& \lcb c^a_{1-\mu} e^{i\omega_\mu\tau} - c^{a*}_{\mu} e^{-i\omega_\mu\tau} , c^b_{1-\nu} e^{i\omega_\nu\tau} - c^{b*}_{\nu} e^{-i\omega_\nu\tau} \rcb, \\
\hspace{-3cm}-\eta^{ab} \delta_{\mu+\nu-1} &=& \sum_{i,j\in V\lb T_p \rb}\{ X_i^a, P_j^b \} \phi_\mu(i) \phi_\nu(j)\nn\\
\label{eq416}
&=& \lb e^{\frac{i\omega_\nu}{2}} - e^{-\frac{i\omega_\nu}{2}} \rb \lcb c^a_{1-\mu} e^{i\omega_\mu\tau} + c^{a*}_{\mu} e^{-i\omega_\mu\tau}, c^b_{1-\nu} e^{i\omega_\nu\tau} - c^{b*}_{\nu} e^{-i\omega_\nu\tau} \rcb.
\ea
The terms proportional to $e^{i\lb\omega_\mu+\omega_\nu\rb\tau}$ and $e^{-i\lb\omega_\mu+\omega_\nu\rb\tau}$ in Eq. \eqref{eq414} give (after relabeling)
\be
\lcb c^a_{\mu}, c^b_{\nu} \rcb = \lcb c^{a*}_{\mu}, c^{b*}_{\nu} \rcb = 0,
\ee
while the cross-terms give
\be
\lcb c^a_{1-\mu}, c^{b*}_{\nu} \rcb e^{i\lb \omega_\mu - \omega_\nu \rb \tau} + \lcb c^{a*}_{\mu}, c^{b}_{1-\nu} \rcb e^{i\lb -\omega_\mu + \omega_\nu \rb \tau} = 0,
\ee
which implies (here $\mu,\nu=1/2+it_{\mu,\nu}$)
\ba
\lcb c^{a*}_\mu, c^b_\nu \rcb &=& 0 \quad \mrm{if} \quad t_\mu \neq \pm t_\nu, \\
\lcb c^{a*}_{1-\mu}, c^b_\nu \rcb + \lcb c^a_\mu, c^{b*}_{1-\nu} \rcb  &=& 0 \quad \mrm{if} \quad t_\mu = \pm t_\nu. \nn
\ea
Eq. \eqref{eq415} is satisfied trivially by the above. Eq. \eqref{eq416} gives
\ba
-\eta^{ab} \delta_{\mu+\nu-1} = \lb e^{\frac{i\omega_\nu}{2}} - e^{-\frac{i\omega_\nu}{2}} \rb \lb -\lcb c^a_{1-\mu},  c^{b*}_{\nu} \rcb + \lcb c^{a*}_{\mu}, c^b_{1-\nu} \rcb \rb.
\ea
i.e.
\be
\eta^{ab} \delta_{\mu-\nu} = -2\lb e^{\frac{i\omega_\nu}{2}} - e^{-\frac{i\omega_\nu}{2}} \rb \lcb  c^{a*}_{\mu} , c^b_{\nu} \rcb,
\ee
so that we can conclude
\be
\lcb c_\mu^{a*}, c^b_{1-\mu} \rcb = 0, \quad \mu\neq \frac{1}{2}.
\ee
Therefore, for the entire critical line sector, the Poisson brackets are
\ba
\label{eq154}
& &\lcb c_{\mu}^{a*}, c_{\nu}^b\rcb = \frac{1}{2} \frac{\eta^{ab}}{e^{-\frac{i\omega_\mu}{2}} - e^{\frac{i\omega_\mu}{2}} } \delta_{\mu - \nu}, \\
& &\lcb c^a_{\mu}, c^b_{\nu} \rcb = \lcb c^{a*}_{\mu}, c^{b*}_{\nu} \rcb = 0.\nn
\ea

\subsection{Zero mode sector}
\label{seczeromm}
We have that $\omega_{\mu=0}=\omega_{\mu=1}=0$, however note that the denominator in Eq. \eqref{eq154} diverges when $\omega\to 0$, therefore we must be careful when computing the zero mode commutators. In particular, due to the appearance of an extra contribution at $\mu=0$ in the orthonormality relations \eqref{eq221}, one cannot take the $\omega\to0$ limit of the $(0,1)$ sector results.\\

\begin{remark}
The extra contribution $\delta_{\mu+\nu}$ in the inner product \eqref{eq221} is necessary in order for the zero mode contributions to momentum and angular momentum to obey the Poincar\'e algebra, as we will show below.
\end{remark}

We will take $\omega_{\mu=0}=\omega_{\mu=1}$ to equal a parameter $\otoz$, and compute the commutators to leading orders in $\otoz$. The structure of commutators will be a power series in $\otoz$, starting with a potentially diverging term. We will keep track of the diverging and constant terms, and ignore the rest of the terms, which vanish as $\otoz\to 0$.

For the zero mode sector only it will be useful to introduce the notation
\be
c_\mu^{a\pm} \coloneqq c_\mu^{a} \pm c_\mu^{a*},
\ee
where $a$ is the usual spacetime index and $\mu\in\{0,1\}$ (note that the $\pm$ superscript does not refer to the lightcone spacetime directions). Once we promote the Fourier coefficients to operators we will instead denote 
\be
c_\mu^{a\pm} \coloneqq c_\mu^{a} \pm c_\mu^{a\dagger}.
\ee
We now proceed with the computation. Eq. \eqref{commut1} is
\be
0 = \Big\{ \frac{\mathfrak{p}^a\tau}{\sqrt{p-1}} \lb \delta_{\mu} + \delta_{\mu-1} \rb + c^{a+}_{0} \delta_\mu + c^{a+}_{1-\mu}, \frac{\mathfrak{p^b\tau}}{\sqrt{p-1}} \lb \delta_{\nu} + \delta_{\nu-1} \rb + c^{b+}_{0} \delta_\nu + c^{b+}_{1-\nu} \Big\}, 
\ee
and for the parameter values $\mu,\nu\in\{0,1\}$ it splits into
\ba
0 &=& \lcb \frac{\mathfrak{p}^a\tau}{\sqrt{p-1}}  +  c^{a+}_{0} , \frac{\mathfrak{p}^b\tau}{\sqrt{p-1}}  + c^{b+}_{0} \rcb, \\
0 &=& \lcb \frac{\mathfrak{p}^a\tau}{\sqrt{p-1}}  +  c^{a+}_{0} , \frac{\mathfrak{p}^b\tau}{\sqrt{p-1}}  + c^{b+}_{0} + c^{b+}_{1} \rcb, \nn \\
%0 &=& \lcb c^a_{0} + c^{a*}_{0} + c^a_{1} + c^{a*}_{1} , c^b_{0} + c^{b*}_{0} \rcb, \\
0 &=& \lcb \frac{\mathfrak{p}^a\tau}{\sqrt{p-1}}  + c^{a+}_{0}  + c^{a+}_{1} , \frac{\mathfrak{p}^b\tau}{\sqrt{p-1}}  + c^{b+}_{0} + c^{b+}_{1} \rcb, \nn
\ea
so that
\ba
\label{eq161}
& &0=\lcb  c^{a+}_{0} , c^{b+}_{0} \rcb = \lcb  c^{a+}_{0} , c^{b+}_{1} \rcb = \lcb  c^{a+}_{1} , c^{b+}_{1}\rcb,\\
& & 0 = \lcb \mathfrak{p}^a,\mathfrak{p}^b \rcb = \lcb \mathfrak{p}^a , c_1^{b+} \rcb.\nn
\ea
And Eq. \eqref{commut3} is
\ba
\hspace{-1.9cm}&-&\eta^{ab} \lb \delta_{\mu+\nu} + \delta_{\mu+\nu-1} \rb = \Big\{ \frac{\mathfrak{p}^a\tau}{\sqrt{p-1}} \lb \delta_{\mu} + \delta_{\mu-1} \rb + c^{a+}_{0} \delta_\mu + c^{a+}_{1-\mu} ,\\
\hspace{-1.9cm}& & \frac{\mathfrak{p}^b}{\sqrt{p-1}} \lb \delta_{\nu} + \delta_{\nu-1} \rb + \lb e^{\frac{i \otoz}{2}} - e^{-\frac{i \otoz}{2}} \rb \lb c^{b-}_{0} \delta_{\nu} + c^{b-}_{1-\nu} \rb \Big\}. \nn
\ea
The $\tau$ term implies
\be
\label{eq338O1}
\lcb \mathfrak{p}^a, c_0^{b-} \rcb= \lcb \mathfrak{p}^a, c_0^{b-} + c_1^{b-} \rcb= \OO\lb 1 \rb,
\ee
where $\OO\lb 1 \rb$ refers to the fact that the bracket should have no diverging contribution in $\otoz$.\footnote{Strictly speaking, any divergence milder than $1/\otoz$ would also be allowed.} The equation then splits into
\ba
\label{eq336split}
0 &=& \frac{1}{\sqrt{p-1}} \lcb c_0^{a+},\mathfrak{p}^b \rcb + \lb e^{\frac{i \otoz}{2}} - e^{-\frac{i \otoz}{2}} \rb \lcb c_0^{a+} , c_0^{b-}  \rcb, \\
-\eta^{ab} &=& \frac{1}{\sqrt{p-1}} \lcb c_0^{a+},\mathfrak{p}^b \rcb + \lb e^{\frac{i \otoz}{2}} - e^{-\frac{i \otoz}{2}} \rb \lcb c_0^{a+} ,c_0^{b-} + c_1^{b-} \rcb, \nn \\
-\eta^{ab} &=& \frac{1}{\sqrt{p-1}} \lcb c_0^{a+},\mathfrak{p}^b \rcb + \lb e^{\frac{i \otoz}{2}} - e^{-\frac{i \otoz}{2}} \rb  \lcb  c_0^{a+} + c_1^{a+},c_0^{b-} \rcb, \nn\\
-\eta^{ab} &=& \frac{1}{\sqrt{p-1}} \lcb c_0^{a+},\mathfrak{p}^b \rcb + \lb e^{\frac{i \otoz}{2}} - e^{-\frac{i \otoz}{2}} \rb \lcb c_0^{a+} +c_1^{a+}, c_0^{b-} + c_1^{b-}  \rcb. \nn
\ea
These relations simplify to
\ba
\label{eq164}
0 &=& \frac{1}{\sqrt{p-1}} \{c_0^{a+},\mathfrak{p}^b \} + \lb e^{\frac{i \otoz}{2}} - e^{-\frac{i \otoz}{2}} \rb \lcb c_0^{a+} , c_0^{b-} \rcb  , \\
-\eta^{ab} &=& \lb e^{\frac{i \otoz}{2}} - e^{-\frac{i \otoz}{2}} \rb \lcb c_0^{a+} , c_1^{b-} \rcb , \nn \\
-\eta^{ab} &=& \lb e^{\frac{i \otoz}{2}} - e^{-\frac{i \otoz}{2}} \rb \lcb c_1^{a+}, c_0^{b-} \rcb  , \nn\\
0 &=& \lb e^{\frac{i \otoz}{2}} - e^{-\frac{i \otoz}{2}} \rb \lcb c_1^{a+}, c^{b-}_{0}  + c^{b-}_{1} \rcb, \nn
\ea
and the last equation can be further reduced to 
\be
\label{eeq341}
\eta^{ab} = \lb e^{\frac{i \otoz}{2}} - e^{-\frac{i \otoz}{2}} \rb \lcb c_1^{a+}, c_1^{b-} \rcb.
\ee
Now, let's parameterize the $\{c_0^{a+},\mathfrak{p}^b \}$ Poisson bracket by a parameter $\alpha$, i.e.
\be
\label{defalpha342}
\{c_0^{a+},\mathfrak{p}^b \} \eqqcolon - \sqrt{p-1}\alpha \eta^{ab}.
\ee
Equations \eqref{eq161} and \eqref{eq164} -- \eqref{eeq341}, together with the definition \eqref{defalpha342} and Eq. \eqref{eq338O1}, are necessary conditions on the zero sector brackets. We can rewrite them as
\ba
\label{bigPoisson}
& &\lcb c^{a+}_{0} , c^{b+}_{0} \rcb = \lcb c^{a+}_{0} , c^{b+}_{1} \rcb = \lcb c^{a+}_{1} , c^{b+}_{1}\rcb = \lcb \mathfrak{p}^a,\mathfrak{p}^b \rcb = \lcb \mathfrak{p}^a , c_1^{b+}\rcb = 0,\\
& &\{c_0^{a+},\mathfrak{p}^b \} = - \sqrt{p-1}\eta^{ab} \alpha, \nn\\
& & \Big\{c^{a+}_0 , c^{b-}_0 \Big\}  = -\frac{i\alpha\eta^{ab}}{\otoz} , \nn\\
& & \Big\{ c^{a+}_{0} , c^{b-}_{1} \Big\} = \Big\{c^{a+}_{1} , c^{b-}_{0} \Big\} = - \Big\{c_1^{a+}, c^{b-}_1 \Big\} = \frac{i\eta^{ab}}{\otoz},\nn\\
& & \lcb \mathfrak{p}^a, c_0^{b-} \rcb= \lcb \mathfrak{p}^a, c_0^{b-} + c_1^{b-} \rcb= \OO\lb 1 \rb. \nn
\ea

The momentum brackets in Eq. \eqref{commut2} are
\ba
0&=&\Big\{ \frac{\mathfrak{p}^a}{\sqrt{p-1}}  + \lb e^{\frac{i \otoz}{2}} - e^{-\frac{i \otoz}{2}} \rb c^{a-}_{0} , \frac{\mathfrak{p}^b}{\sqrt{p-1}} + \lb e^{\frac{i \otoz}{2}} - e^{-\frac{i \otoz}{2}} \rb c^{b-}_{0} \Big\},\nn\\
0&=& \Big\{ \frac{\mathfrak{p}^a}{\sqrt{p-1}} + \lb e^{\frac{i \otoz}{2}} - e^{-\frac{i \otoz}{2}} \rb c^{a-}_{0} , \frac{\mathfrak{p}^b}{\sqrt{p-1}} + \lb e^{\frac{i \otoz}{2}} - e^{-\frac{i \otoz}{2}} \rb \times\nn \\
& & \times \lb c^{b-}_{0} + c^{b-}_{1} \rb \Big\},\nn \\ 
0&=& \Big\{ \frac{\mathfrak{p}^a}{\sqrt{p-1}} + \lb e^{\frac{i \otoz}{2}} - e^{-\frac{i \otoz}{2}} \rb  \lb c^{a-}_{0} + c^{a-}_{1} \rb , \frac{\mathfrak{p}^b}{\sqrt{p-1}} + \nn\\
& &\lb e^{\frac{i \otoz}{2}} - e^{-\frac{i \otoz}{2}} \rb  \lb c^{b-}_{0} + c^{b-}_{1} \rb\Big\},
\ea
and are automatically satisfied in the limit ~$\omega\to 0$, given the last line of Eq. \eqref{bigPoisson}.

We have finished proving Theorem \ref{andthisisthm4}.

Note that the last line in Eq. \eqref{bigPoisson} only constrains the leading power of $\omega$ in the bracket. In fact, Eqs. \eqref{bigPoisson} put similar constraints on the $\{c^{a-}_\mu, c^{b-}_\nu\}$ commutators for $\mu,\nu\in\{0,1\}$, in the following way. Let's define (here $\mu\in\{0,1\}$)
\ba
\label{eq342cmu01}
c^{a\pm}_\mu(\tau) &\coloneqq& c_\mu^a e^{i\omega \tau} \pm c_\mu^{a*} e^{-i\omega\tau} \\
&=& c^{a\pm}_\mu + ic_\mu^{a\mp}\omega\,\tau + \OO\lb \omega^2 \rb, \nn
\ea
so that the time-independent $c^{a\pm}_\mu$'s defined above are just
\be
c^{a\pm}_\mu(\tau=0) = c^{a\pm}_\mu.
\ee
Expanding the commutators to linear order in $\omega$, we have
\ba
\lcb c^{a+}_{\mu}(\tau) , c^{b+}_{\nu}(\tau) \rcb &=& \lcb c^{a+}_{\mu} + i \omega \tau c_\mu^{a-}, c^{b+}_{\nu} + i \omega \tau c_\nu^{b-} \rcb + \dots \\
&=& \lcb c^{a+}_{\mu} , c^{b+}_{\nu} \rcb +i\omega \tau \lb \lcb c^{a+}_\mu , c^{b-}_\nu \rcb + \lcb c^{a-}_\mu , c^{b+}_\nu \rcb \rb + \dots \nn\\
\lcb c^{a+}_{\mu}(\tau) , c^{b-}_{\nu}(\tau) \rcb &=& \lcb c^{a+}_{\mu} + i \omega \tau c_\mu^{a-}, c^{b-}_{\nu} + i \omega \tau c_\nu^{b+} \rcb + \dots \nn\\
&=& \lcb c^{a+}_{\mu} , c^{b-}_{\nu} \rcb +i\omega \tau \lb \lcb c^{a+}_\mu , c^{b+}_\nu \rcb + \lcb c^{a-}_\mu , c^{b-}_\nu \rcb \rb + \dots, \nn
\ea
and a similar expression holds for $\lcb c^{a-}_{\mu}(\tau) , c^{b-}_{\nu}(\tau) \rcb$, with the $\dots$ denoting higher terms. In order for the computation above yielding \eqref{bigPoisson} to remain valid, we must have the linear term not contaminate the leading piece. This is a nontrivial condition only for the $\lcb c^{a+}_{\mu}(\tau) , c^{b-}_{\nu}(\tau) \rcb$ commutator, yielding
\be
\lcb c^{a-}_\mu, c^{b-}_\nu\rcb = \OO(1),
\ee
for $\mu,\nu\in\{0,1\}$. We have finished proving Theorem \ref{andthisisthm5}.

\subsection{Promoting to canonical commutators}
\label{subsecpromoting}

We now promote the Poisson brackets from Theorem \ref{andthisisthm4} to commutators, remembering that when doing so an extra $i$ appears, $\{\cdot,\cdot\}\to -i \lsb \cdot,\cdot \rsb$.

\begin{remark}
Away from the zero mode the commutators are real and equal
\ba
\label{eq176}
\lsb c_{\mu}^{a}, c_{\nu}^{b\dagger}\rsb &=& \frac{1}{2} \frac{i\eta^{ab}}{ e^{\frac{i\omega_\mu}{2}} - e^{-\frac{i\omega_\mu}{2}}} \Delta_{\mu,\nu} \\
&=& \frac{\eta^{ab}}{4\sin \frac{\omega_\mu}{2}} \Delta_{\mu,\nu}, \nn\\
\lsb c^a_{\mu}, c^b_{\nu} \rsb &=& \lsb c^{a\dagger}_{\mu}, c^{b\dagger}_{\nu} \rsb = 0, \nn
\ea
where
\be
\label{eq349pairing}
\Delta_{\mu,\nu} \coloneqq\begin{cases}
\delta_{\mu+\nu-1}, \qquad \mu,\nu \in (0,1) \\
\delta_{\mu-\nu}, \qquad \mu,\nu \in \lcb \frac{1}{2} + i t, t\in\mathbb{R} \rcb
\end{cases}.
\ee
\end{remark}

\begin{remark}
The zero mode commutators are 
\ba
\label{zerozmcommutators}
& &\lsb c_0^{a+},\mathfrak{p}^b \rsb = - i \alpha\sqrt{p-1}\eta^{ab},\\
& &\lsb c^{a+}_{0} , c^{b+}_{0} \rsb = \lsb c^{a+}_{0} , c^{b+}_{1} \rsb = \lsb c^{a+}_{1} , c^{b+}_{1}\rsb = \lsb \mathfrak{p}^a,\mathfrak{p}^b \rsb = \lsb \mathfrak{p}^a , c_1^{b+}\rsb = 0, \nn\\
& & \lsb c^{a+}_0 , c^{b-}_0 \rsb   = \frac{\alpha\eta^{ab}}{\otoz} , \nn\\
& & \lsb  c^{a+}_{0} , c^{b-}_{1} \rsb  = \lsb c^{a+}_{1} , c^{b-}_{0} \rsb  = - \lsb c_1^{a+}, c^{b-}_1\rsb  =  \frac{-\eta^{ab}}{\otoz},\nn\\
& & \lsb \mathfrak{p}^a, c_0^{b-} \rsb= \lsb \mathfrak{p}^a, c_0^{b-} + c_1^{b-} \rsb = \lsb c^{a-}_\mu, c^{b-}_\nu\rsb = \OO(1). \nn
\ea
\end{remark}

We will leave parameter $\alpha$ undetermined, and as we will see, none of the physical results (Poincar\'e invariance, etc.) will depend on it.\footnote{Note that from the 4th line in Eq. \eqref{zerozmcommutators} the modes at $\mu=0$ and $\mu=1$ are coupled, however one can un-couple them by considering instead the linear combinations (say for $\alpha=1$) $c_0^{a\pm} + c_1^{a\pm}$, $c_0^{a\pm} - c_1^{a\pm}$.}

\begin{remark} 
We will interpret the operators $c_\mu^{a}$, $c_\mu^{a\dagger}$ on the $(0,1)$ interval and critical line as annihilation and creation operators, respectively. In order for this interpretation to make sense, the right-hand side of Eq. \eqref{eq176} needs to be positive, to avoid negative norm states in the spatial directions oscillations. This holds, because from Lemma \ref{lemmaomeganuprop} we have $0<\omega_\mu<\pi$ for all unitary loci in the complex plane. Note also that from Eq. \eqref{eq349pairing} the creation/annihilation pairing is $\mu\leftrightarrow\mu$ on the critical line, but $\mu\leftrightarrow 1-\mu$ on the $(0,1)$ interval (so that the two pairings agree at $\mu=1/2$).
\end{remark}

\begin{remark}[Rescaling the $c$'s]
We note that
\be
\label{eqomegamurel349}
\lb e^{\frac{i\omega_\mu}{2}} - e^{-\frac{i\omega_\mu}{2}} \rb^2 = \frac{2}{v_p} \lb p^\mu + p^{1-\mu} - p - 1 \rb
\ee 
so we can rewrite the nonvanishing commutation relations~as
\ba
\lsb c_{\mu}^{a}, c_{\nu}^{b\dagger}\rsb &=& \frac{1}{2} \frac{\eta^{ab}}{ \sqrt{\frac{2}{v_p}\lb 1 + p - p^\mu - p^{1-\mu} \rb} } \Delta_{\mu,\nu}.
\ea
We can thus normalize the states by introducing operators $b_\mu^a$, $b_\mu^{a\dagger}$ such that
\ba
c^a_\mu &\eqqcolon& \frac{1}{\sqrt 2} \frac{b_\mu^a}{\lsb \frac{2}{v_p} \lb 1 + p - p^\mu - p^{1-\mu} \rb \rsb^\frac{1}{4}}, \quad \mu\neq 0,1, \\  
c^{a\dagger}_\mu &\eqqcolon& \frac{1}{\sqrt 2} \frac{b_\mu^{a\dagger}}{\lsb \frac{2}{v_p} \lb 1 + p - p^\mu - p^{1-\mu} \rb \rsb^\frac{1}{4}}\nn,
\ea
so that we have
\be
\lsb b_\mu^a, b_\nu^{b\dagger} \rsb = \eta^{ab}\Delta_{\mu,\nu}, \quad \mu,\nu\neq 0,1.
\ee
\end{remark}

\begin{remark}[$p2$-brane sectors] 
Writing $\mu$ in the complementary and critical sectors as $\mu=1/2+t$ and $\mu=1/2+it$ respectively, each sector will have two sub-sectors. For the $(0,1)$ interval:
\begin{itemize}
\item Raising operator $b^\dagger_{\mu}$, lowering operator $b_{1-\mu}$, for $t>0$.
\item Raising operator $b^\dagger_{\mu}$, lowering operator $b_{1-\mu}$, for $t<0$.
\end{itemize}
And for the critical line:
\begin{itemize}
\item Raising operator $b^\dagger_{\mu}$, lowering operator $b_{\mu}$, for $t>0$.
\item Raising operator $b^\dagger_{\mu}$, lowering operator $b_{\mu}$, for $t<0$.
\end{itemize}
\end{remark}

\subsection{Interpretation of the commutation relations}
\label{sec36intep}
The commutation relations derived above, as well as the position-momentum commutator relations and so on, are equations in the operator sense, by acting on appropriate test functions (that is on functions which admit an expansion in Fourier modes on the tree). We now illustate this. Consider for example the position-momentum commutator, we have the following remark.

\begin{remark}
\label{remarkopaction19}
The commutator $\lsb X_i^a,P^b \rsb$ acts as $i\eta^{ab}\delta_{i,\cdot}$ on appropriate test functions.
\end{remark}

Let's prove Remark \ref{remarkopaction19} by direct computation. The commutator separates into the three sectors as
\be
\lsb X_i^a,P_j^b \rsb = \lsb X_i^a,P_j^b \rsb_\mrm{z.m.} + \lsb X_i^a,P_j^b \rsb_\mrm{(0,1)} + \lsb X_i^a,P_j^b \rsb_\mrm{crit}.
\ee
We have
\ba
\lsb X_i^a,P_j^b \rsb_\mrm{(0,1)} &=& \sum_{\mu,\nu\in (0,1)}  \Big[ \lb c^a_\mu e^{i\omega_\mu \tau} + c_\mu^{a\dagger} e^{-i\omega_\mu \tau} \rb \phi_\mu(i) , \\
& & \lb e^\frac{i\omega_\nu}{2} - e^\frac{-i\omega_\nu}{2} \rb \lb c^b_\nu e^{i\omega_\nu \tau} - c_\nu^{b\dagger} e^{-i\omega_\nu \tau} \rb \phi_\nu(j) \Big] \nn\\
&=& \sum_{\mu,\nu\in \lb 0,1 \rb} \lb e^\frac{i\omega_\nu}{2} - e^\frac{-i\omega_\nu}{2} \rb \lb - \lsb c^a_\mu, c^{b\dagger}_\nu \rsb + \lsb c^{a\dagger}_\mu, c^{b}_\nu \rsb \rb \phi_\mu(i) \phi_\nu(j) \nn\\
&=& - i\eta^{ab} \sum_{\mu\in \lb 0,1 \rb} \phi_\mu(i)\phi_{1-\mu}(j),\nn
\ea
and similarly
\ba
\lsb X_i^a,P_j^b \rsb_\mrm{crit} &=&  \sum_{\mu,\nu\in \{\frac{1}{2}+it|t\in\mathbb{R}\}}  \Big[ \lb c^a_\mu e^{i\omega_\mu \tau} + c_{1-\mu}^{a\dagger} e^{-i\omega_\mu \tau} \rb \phi_\mu(i) , \\
& & \lb e^\frac{i\omega_\nu}{2} - e^\frac{-i\omega_\nu}{2} \rb \lb c^b_\nu e^{i\omega_\nu \tau} - c_{1-\nu}^{b\dagger} e^{-i\omega_\nu \tau} \rb \phi_\nu(j) \Big] \nn\\
&=& - i\eta^{ab} \sum_{\mu\in \{\frac{1}{2}+it|t\in\mathbb{R}\}} \phi_\mu(i)\phi_{1-\mu}(j),\nn
\ea
and for the zero mode
\ba
\lsb X_i^a,P_j^b \rsb_\mrm{z.m.} &=&  \Big[ c^{a+}_0 \phi_0(i) + c^{a+}_1 \phi_1(i), \mathfrak{p}^b  + i\otoz \lb c^{b-}_0 \phi_0(j) +  c^{b-}_1 \phi_1(j) \rb \Big] \nn \\
&=& - i \eta^{ab} \phi_0(i)\phi_1(j) - i \eta^{ab}\phi_1(i)\phi_0(j) + i\eta^{ab}\phi_1(i)\phi_1(j) 
\ea

Now consider a function $F:V\lb T_p \rb\to \mathbb{R}$ on the tree. We can expand this function in Fourier modes as
\be
F(j) = \sum_\nu f_\nu \phi_\nu(j).
\ee
The position-momentum commutator $\lsb X_i^a,P^b \rsb$ acts on $F$ as
\ba
\lsb X_i^a,P^b \rsb \cdot F &=& \sum_{j\in V(T_p)} \lsb X_i^a,P_j^b \rsb F(j)\\
&=& \sum_{j\in V(T_p)} \lb  \lsb X_i^a,P_j^b \rsb_\mrm{z.m.} + \lsb X_i^a,P_j^b \rsb_\mrm{(0,1)} + \lsb X_i^a,P_j^b \rsb_\mrm{crit} \rb F(j), \nn
\ea
and for the three sectors we have
\ba
\sum_{j\in V(T_p)} \lsb X_i^a,P_j^b \rsb_\mrm{(0,1)} F(j) &=& - i \eta^{ab} \sum_{\mu\in(0,1)} \sum_\nu \sum_{j\in V(T_p)}  \phi_\mu(i)\phi_{1-\mu}(j) f_\nu \phi_\nu(j) \\
&=& i\eta^{ab} \sum_{\mu\in(0,1)} f_\mu \phi_\mu(i),\nn\\
\sum_{j\in V(T_p)} \lsb X_i^a,P_j^b \rsb_\mrm{crit} F(j) &=& - i \eta^{ab} \sum_{\mu\in \{\frac{1}{2}+it|t\in\mathbb{R}\} } \sum_\nu \sum_{j\in V(T_p)}  \phi_\mu(i)\phi_{1-\mu}(j) f_\nu \phi_\nu(j)\nn \\
&=& i\eta^{ab} \sum_{\mu\in \{\frac{1}{2}+it|t\in\mathbb{R}\} } f_\mu \phi_\mu(i),\nn
\ea
and finally
\ba
\sum_{j\in V(T_p)} \lsb X_i^a,P_j^b \rsb_\mrm{z.m.} F(j) &=& -i\eta^{ab}\sum_\nu \sum_{j\in V(T_p)} \lb \phi_0(i)\phi_1(j) + \phi_1(i)\phi_0(j) - \phi_1(i)\phi_1(j) \rb \times\nn\\
& &\times f_\nu \phi_\nu(j) \nn \\
&=& i \eta^{ab}\lsb  f_0 \phi_0(i) + \phi_1(i) \lb f_0+f_1 \rb - \phi_1(i) f_0 \rsb \nn\\
&=& i\eta^{ab}\lb f_0\phi_0(i) +  f_1\phi_1(i) \rb.
\ea
Thus we have obtained
\ba
\lsb X_i^a,P^b \rsb \cdot F &=& i\eta^{ab} \lsb f_0\phi_0(i) +  f_1\phi_1(i) + \lb \sum_{\mu\in (0,1) } + \sum_{\mu\in \{\frac{1}{2}+it|t\in\mathbb{R}\} } \rb f_\mu\phi_\mu(i)  \rsb\nn\\
&=& i\eta^{ab}\delta_{i,\cdot} \cdot F,
\ea
as advertised.

\section{Conserved quantities mode expansion}
\label{sec5}

In this section we will compute the total momentum $\PP^a$, mass squared $\PP^2$, Hamiltonian $H$ and angular momentum operators $J^{ab}$, in terms of the creation and annihilation operators.

\begin{remark}[The role of the zero mode cutoff $\otoz$ in the conserved quantities]
Because of the zero mode contributions, momentum $\PP^a$, Hamiltonian $H$ and angular momentum $J^{ab}$ will all have dependence on the zero mode cutoff $\otoz$. As might be expected, $\PP^a$, $H$ and $J^{ab}$ will all have finite limits when the cutoff $\otoz$ is taken to zero, and in fact these expressions will closely resemble the bosonic string ones. However, there is a further subtlety involving the zero mode cutoff $\otoz$ that needs to be taken into account: because some of the \eqref{zerozmcommutators} commutators diverge in $\otoz$, in checking the Poincar\'e algebra in Section \ref{sec8} one needs to take the subleading $\otoz$ terms in $\PP^a$, $H$ and $J^{ab}$ into account, otherwise cross-terms could end up contaminating the leading order result. For this reason, we will keep track of the first subleading (linear) pieces in $\otoz$ throughout this section.
\end{remark}

\begin{remark}
Because there is no worldsheet Noether theorem, it is not evident that $\PP^2$, $H$ and $J^{ab}$ are worldsheet time $\tau$ independent. This will indeed turn out to be the case, reliant on the Fourier modes.
\end{remark}

\subsection{The total momentum operator}

\begin{defn}
The total momentum operator is
\be
\mathcal{P}^a \coloneqq \sum_{i\in V\lb T_p \rb} P_i^a.
\ee
\end{defn}

\begin{lemma} 
The total momentum operator equals
\be
\label{eq72}
\mathcal{P}^a = - \frac{\mathfrak{p}^a}{p-1}.
\ee
\end{lemma}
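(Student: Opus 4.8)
The plan is to compute $\PP^a=\sum_{i\in V\lb T_p\rb}P_i^a$ directly from the Fourier expansion of $P_i^a$, using the orthogonality relation \eqref{eq223} to collapse the vertex sum. The only input needed is that, since $\phi_0(i)=\sqrt{p-1}\,p^{0\cdot\langle i,\infty\rangle}=\sqrt{p-1}$ is the constant function, summing any $\phi_\mu$ over the tree is controlled by \eqref{eq223}:
\be
\sum_{i\in V\lb T_p\rb}\phi_\mu(i)=\frac{1}{\sqrt{p-1}}\sum_{i\in V\lb T_p\rb}\phi_0^*(i)\phi_\mu(i)=\frac{1}{\sqrt{p-1}}\langle\phi_0|\phi_\mu\rangle=-\frac{1}{\sqrt{p-1}}\lb\delta_\mu+\delta_{\mu-1}\rb,
\ee
using $0^*=0$. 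In particular, taking $\mu=0$ gives the regularized vertex count $\sum_{i\in V\lb T_p\rb}1=\langle\phi_0|\phi_0\rangle/(p-1)=-1/(p-1)$, which is already the coefficient appearing in \eqref{eq72}.

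Next I would substitute the momentum expansion (with $\ell=1$)
\be
P_i^a(\tau)=\mathfrak{p}^a+\sum_{\mu\in[0,1]}\lb e^{\frac{i\omega_\mu}{2}}-e^{-\frac{i\omega_\mu}{2}}\rb\lb c_\mu^a e^{i\omega_\mu\tau}-c_\mu^{a*}e^{-i\omega_\mu\tau}\rb\phi_\mu(i)+\sum_{\mu\in\{1/2+it\,|\,t\in\mathbb{R}\}}(\cdots)\phi_\mu(i)
\ee
into $\PP^a=\sum_i P_i^a$ and exchange the sums. The constant term yields $\mathfrak{p}^a\sum_i1=-\mathfrak{p}^a/(p-1)$. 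Every oscillator term is proportional to $\sum_i\phi_\mu(i)$, which by the identity above vanishes unless $\mu\in\{0,1\}$; but at $\mu=0$ and $\mu=1$ one has $\omega_\mu=0$ by Lemma \ref{lemmaomeganuprop}, so the prefactor $e^{i\omega_\mu/2}-e^{-i\omega_\mu/2}$ vanishes and these terms drop out as well. The critical-line modes contribute nothing since $\delta_\mu=\delta_{\mu-1}=0$ for $\mu=1/2+it$. Hence $\PP^a=-\mathfrak{p}^a/(p-1)$, which is moreover manifestly $\tau$-independent, consistent with the conservation statements of Section \ref{sec5}.

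Two points deserve care rather than genuine difficulty. First, the vertex sums above are divergent and are defined through the analytic continuation underlying \eqref{eq221}; the computation is consistent precisely because the same regularization that produces $\langle\phi_0|\phi_0\rangle=-1$ feeds into both the $\mathfrak{p}^a\sum_i1$ term and the oscillator terms. Second, once the infrared cutoff $\otoz$ is reinstated the prefactor at $\mu=0,1$ is $e^{i\otoz/2}-e^{-i\otoz/2}=O(\otoz)$ rather than exactly zero, and $\sum_i\phi_1(i)=-1/\sqrt{p-1}\neq0$, so the zero-mode oscillators contribute a term of order $\otoz$ to $\PP^a$; I would check this is not secretly promoted to $O(1)$ by any hidden $1/\otoz$ enhancement, after which it vanishes in the limit $\otoz\to0$ and \eqref{eq72} holds on the nose. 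This last check is the only place where I expect to have to think, and it is routine given the structure of the zero-mode commutators \eqref{zerozmcommutators}.
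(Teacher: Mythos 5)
Your proposal is correct and follows essentially the same route as the paper: expand $P_i^a$ in Fourier modes, collapse the vertex sum using the regularized orthogonality relations (so only $\mu=0,1$ survive, with $\sum_i 1 = 1/(1-p)$ giving the $-\mathfrak{p}^a/(p-1)$ term), and observe that the surviving zero-mode oscillator terms carry a prefactor of order $\otoz$ and drop out in the limit. The paper simply records that $O(\otoz)$ remainder explicitly, as $-i\otoz\lb c_0^{a-}+c_1^{a-}\rb/\sqrt{p-1}$ in Eq. \eqref{eq4p4Pab}, for later use in the Poincar\'e algebra check — exactly the caveat you flagged at the end.
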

\begin{proof}
We expand in Fourier modes, noting that only the $\{0,1\}$ sector contributes to the sum over the tree vertices,
\ba
\mathcal{P}^a &=& \sum_{i\in V\lb T_p \rb} \lsb \mathfrak{p}^a + \sum_{\mu\in[0,1]} \lb e^\frac{i\omega_\mu}{2} - e^{-\frac{i\omega_\mu}{2}} \rb\lb c_\mu^a e^{i\omega_\mu\tau} - c_\mu^{a\dagger} e^{-i\omega\tau} \rb \phi_\mu(i) \rsb \\
&=& - \frac{\mathfrak{p}^a}{p-1} - \sum_{\mu\in[0,1]} \lb e^\frac{i\omega_\mu}{2} - e^{-\frac{i\omega_\mu}{2}} \rb\lb c_\mu^a e^{i\omega_\mu\tau} - c_\mu^{a\dagger} e^{-i\omega\tau} \rb \lb \delta_\mu + \delta_{\mu-1} \rb \frac{1}{\sqrt{p-1}}.\nn
\ea
To leading order in $\otoz$ this is just 
\be
\label{eq4p4Pab}
\mathcal{P}^a = - \frac{\mathfrak{p}^a}{p-1} - \frac{i\otoz\lb c_0^{a-} + c_1^{a-} \rb}{\sqrt{p-1}}.
\ee
\end{proof}

\subsection{The Hamiltonian}
\label{secHamicons}

We now expand the Hamiltonian in the creation and annihilation operators. Our aim will be to prove the following result.

\begin{thm}
The Fourier mode expansion of the Hamiltonian is
\ba
\label{eq44Hami}
H &=& \frac{1}{1-p} \frac{\mathfrak{p}^2}{2}  -\sqrt{\frac{2}{v_p}} \sum_{\mu\in [0,1]-\{1/2\}} \sqrt{ 1+ p - p^\mu - p^{1-\mu} }  b^{a\dagger}_{\mu} b^a_{1-\mu}\\
& &-\sqrt{\frac{2}{v_p}} \sum_{\mu\in \{1/2+it,\, t\in\mathbb{R}\}} \sqrt{ 1+ p - p^\mu - p^{1-\mu} }  b^{a\dagger}_{\mu} b^a_{\mu}. \nn
\ea
\end{thm}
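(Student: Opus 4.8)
The plan is to substitute the Fourier expansions of $P_i^a$ and $X_i^a$ (Eq.~\eqref{exp249} and the momentum expansion following it, with $\ell=1$) into $H=\sum_i (P_i^a)^2/2 + \sum_{\ipj}(X_i^a-X_j^a)^2/v_p$ and to collapse every sum over tree vertices and edges using two inputs. First, from \eqref{eq223} and $\langle i,\infty\rangle\in\mathbb{Z}$ one has $\sum_{i\in V(T_p)}\phi_\mu(i)\phi_\nu(i)=-\delta_{\mu+\nu}-\delta_{\mu+\nu-1}$. Second, the graph Dirichlet-form identity on the $(p+1)$-regular tree, $\sum_{\ipj}(\phi_\mu(i)-\phi_\mu(j))(\phi_\nu(i)-\phi_\nu(j))=-\sum_i\phi_\mu(i)(\Delta\phi_\nu)(i)$, together with Remark~\ref{rmk5}, gives $\sum_{\ipj}(\phi_\mu(i)-\phi_\mu(j))(\phi_\nu(i)-\phi_\nu(j))=\lambda_\mu(\delta_{\mu+\nu}+\delta_{\mu+\nu-1})$ (the formal manipulation being justified under the same analytic-continuation regularization that defines \eqref{eq221}). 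Since the Kronecker deltas pair a mode at $\mu$ only with one at $1-\mu$ (or, in the zero sector, at $-\mu$), and the three frequency bands are disjoint by Lemma~\ref{lemmaomeganuprop}, $H$ separates into the zero-mode, $(0,1)$ and critical-line sectors with no cross terms (the absence of cross terms also uses $\sum_i\phi_0(i)\phi_\nu(i)=\sum_i\phi_1(i)\phi_\nu(i)=0$ for $\nu$ in the oscillator bands). For the oscillator sectors one further uses $\lambda_{1-\mu}=\lambda_\mu$ and $\omega_{1-\mu}=\omega_\mu$.

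For the $(0,1)$ sector the double mode sum collapses to the pairing $\nu=1-\mu$: the kinetic term contributes $-\tfrac12(e^{i\omega_\mu/2}-e^{-i\omega_\mu/2})^2$ times $(c^a_\mu e^{i\omega_\mu\tau}-c^{a*}_\mu e^{-i\omega_\mu\tau})(c^a_{1-\mu}e^{i\omega_\mu\tau}-c^{a*}_{1-\mu}e^{-i\omega_\mu\tau})$, the potential term $\tfrac{\lambda_\mu}{v_p}$ times the same product with all relative signs positive. Invoking $(e^{i\omega_\mu/2}-e^{-i\omega_\mu/2})^2=\tfrac{2}{v_p}\lambda_\mu$ (Eq.~\eqref{eq120}, equivalently \eqref{eqomegamurel349}) makes the two contributions have the same coefficient up to sign, so that the $e^{\pm2i\omega_\mu\tau}$ terms, which appear identically in both products, cancel — this is the $\tau$-independence promised in Section~\ref{secHamicons} — leaving $\sum_{\mu\in(0,1)}(e^{i\omega_\mu/2}-e^{-i\omega_\mu/2})^2(c^a_\mu c^{a*}_{1-\mu}+c^{a*}_\mu c^a_{1-\mu})$. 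The critical-line sector is the same computation with the reality-condition pairing $c^a_\mu\leftrightarrow c^{a*}_{1-\mu}=c^{a*}_{\mu^*}$, producing $\sum_{\mu\in\{1/2+it\}}(e^{i\omega_\mu/2}-e^{-i\omega_\mu/2})^2(c^a_\mu c^{a*}_{\mu}+c^{a*}_{1-\mu}c^a_{1-\mu})$. Substituting the rescaling $c^a_\mu=\tfrac{1}{\sqrt2}b^a_\mu/[\tfrac{2}{v_p}(1+p-p^\mu-p^{1-\mu})]^{1/4}$ and using $(e^{i\omega_\mu/2}-e^{-i\omega_\mu/2})^2=-\tfrac{2}{v_p}(1+p-p^\mu-p^{1-\mu})$ turns each bilinear into $-\tfrac12\sqrt{\tfrac{2}{v_p}(1+p-p^\mu-p^{1-\mu})}$ times a $b^\dagger b$ term; relabeling $\mu\mapsto1-\mu$ in half of the terms and reordering into $b^\dagger$-to-the-left form yields precisely the two oscillator sums in \eqref{eq44Hami}, with $\mu=1/2$ counted once inside the critical-line sum (hence the range $[0,1]-\{1/2\}$; the would-be endpoint contributions at $\mu=0,1$ vanish identically because $1+p-p^\mu-p^{1-\mu}=0$ there).

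For the zero-mode sector, note that $\phi_0$ is constant while $\phi_1(i)=\sqrt{p-1}\,p^{\langle i,\infty\rangle}$ is not, so the potential term naively receives a $\mu=1$ contribution; but its coefficient is $\lambda_1=0$ (equivalently $\sum_i\phi_1(i)^2=-\delta_2-\delta_1=0$), so the zero-sector potential vanishes. In the kinetic term, $P_i^a|_{\mrm{z.m.}}=\mathfrak{p}^a+i\otoz(c_0^{a-}\phi_0(i)+c_1^{a-}\phi_1(i))+\OO(\otoz^2)$; squaring and summing, the $(\mathfrak{p}^a)^2$ piece is the regularized $\tfrac12(\mathfrak{p}^a)^2\sum_i 1=\tfrac{1}{1-p}\tfrac{\mathfrak{p}^2}{2}$ (Eq.~\eqref{eq221} at $\mu=\nu=0$ gives $\sum_i1=1/(1-p)$), and all remaining pieces are $\OO(\otoz)$ and drop in the cutoff-removal limit. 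Adding the three sectors produces \eqref{eq44Hami}.

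The main obstacle is the zero-mode bookkeeping: one must regularize the manifestly divergent $\sum_i1$ via the analytic continuation built into \eqref{eq221}, handle the infrared cutoff $\otoz$ consistently (and, for the Poincar\'e-algebra check in Section~\ref{sec8}, retain the linear-in-$\otoz$ pieces that \eqref{eq44Hami} suppresses), and confirm that no $\otoz$-suppressed or cross-sector term contaminates the leading result. A secondary issue is the normal-ordering $c$-number generated when the symmetric combination $\tfrac12(b^a_\mu b^{a\dagger}_{1-\mu}+b^{a\dagger}_\mu b^a_{1-\mu})$ is rewritten with $b^\dagger$ to the left: this is a formally divergent constant proportional to $(\tr\,\eta)\sum_\mu\sqrt{1+p-p^\mu-p^{1-\mu}}$, which is absent at the classical (Poisson) level and is absorbed into the normal-ordering constant $\mathfrak{a}$ of Section~\ref{sec7}.
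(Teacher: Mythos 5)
Your proposal is correct and follows essentially the same route as the paper's proof: expand $X$ and $P$ in the modes, collapse the vertex sums with the regularized orthogonality relation \eqref{eq223}, use the dispersion relation \eqref{eq120}/\eqref{eqomegamurel349} so the $e^{\pm 2i\omega_\mu\tau}$ pieces cancel between kinetic and potential terms, and then rescale $c\to b$ and relabel $\mu\to 1-\mu$. The one step you handle differently is the potential term: you use the graph summation-by-parts identity together with the eigenvalue equation of Remark \ref{rmk5}, whereas the paper counts explicitly how $\phi_\mu$ changes on the $p+1$ neighbors ($p$ factors of $p^{-\mu}$, one of $p^{\mu}$) and only then applies \eqref{eq223}; both give the coefficient $\lambda_\mu\,C^a_\mu C^a_{1-\mu}$, and the paper's direct count is in effect the verification that your formal Dirichlet-form manipulation survives the analytic-continuation regularization, which you assert rather than check. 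Your zero-mode treatment (keeping only $\mathfrak{p}^2/(2(1-p))$ and dropping the $\OO(\otoz)$ remainder) matches the theorem as stated, while the paper's proof retains that remainder as $H_\mrm{zr}$ because it is needed for the Poincar\'e-algebra check --- a point you correctly flag, along with the semi-classical nature of the ordering.
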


Note that Eq. \eqref{eq44Hami} is a semi-classical result, in that it does not include information on the operator ordering. We will discuss discuss operator ordering below in Section~\ref{sec7}. Furthermore, Eq. \eqref{eq44Hami} does not include the contribution of the zero mode which vanishes in the limit $\otoz\to 0$; for this contribution see Eqs. \eqref{eqqq4182}, \eqref{eqqq418} below.

\begin{proof}
From Eq.~\eqref{eqhami} with $\ell=1$ we have
\be
H = \sum_{i\in V\lb T_p \rb} \frac{\lb P_i^a \rb^2 }{2} + \sum_{\ipj \in E\lb T_p \rb} \frac{\lb X^a_i-X^a_j\rb^2}{v_p}.
\ee
We will expand the two terms above separately in the Fourier modes. Let
\be
\DD \coloneqq [0,1] \cup \{1/2+it|t\in\mathbb{R}\},
\ee
and denote the coefficients of $\phi_\mu(i)$ entering the $X^a_i$ and $P^a_i$ expansions by $C^a_\mu$ and $D_\mu^a$ respectively, that is
\ba
\label{eqs196}
C^a_\mu(\tau) \coloneqq \begin{cases}
 c^a_\mu e^{i\omega_\mu\tau} + c^{a\dagger}_\mu e^{-i\omega_\mu\tau} \qquad\ \ \mu\in [0,1] \\
 c^a_\mu e^{i\omega_\mu\tau} + c^{a\dagger}_{1-\mu} e^{-i\omega_\mu\tau} \qquad \mu\in \{\frac{1}{2} + it|t\in\mathbb{R}\}
\end{cases},
\ea
and
\ba
\label{eqs198}
D^a_\mu(\tau) \coloneqq \lb e^{\frac{i\omega_\mu}{2}} - e^{-\frac{i\omega_\mu}{2}} \rb \begin{cases}
 c^a_\mu e^{i\omega_\mu\tau} - c^{a\dagger}_\mu e^{-i\omega_\mu\tau} \qquad\ \ \mu\in [0,1] \\
 c^a_\mu e^{i\omega_\mu\tau} - c^{a\dagger}_{1-\mu} e^{-i\omega_\mu\tau} \qquad \mu\in \{\frac{1}{2} + it|t\in\mathbb{R}\}
\end{cases},
\ea
so that
\ba
X_i^a(\tau) &=& \mathfrak{p}^a\tau+ \sum_{\mu\in \DD} C^a_\mu(\tau) \phi_\mu(i), \\
P_i^a(\tau) &=& \mathfrak{p}^a + \sum_{\mu\in \DD} D^a_\mu(\tau) \phi_\mu(i). \nn
\ea
We now compute directly. Using $\sum_{\ipj \in E\lb T_p \rb} = 1/2\sum_{i\in V(T_p)}\sum_{j\sim i}$, and that there are $p$ neighbors of vertex $i$ where the eigenvalue changes by $p^{-\mu}$ and one where it changes by $p^\mu$, we have 
\ba
\sum_{\ipj \in E\lb T_p \rb} \lb X_i^a - X_j^a \rb^2 &=& \sum_{\ipj \in E\lb T_p \rb} \sum_{\mu,\nu\in \DD} C_\mu^a C_\nu^a \lb \phi_\mu(i) - \phi_\mu(j) \rb \lb \phi_\nu(i) - \phi_\nu(j) \rb \nn \\
&=& \frac{1}{2} \sum_{i\in V\lb T_p \rb} \sum_{\mu,\nu\in \DD} C^a_\mu C^a_\nu \Big[  p \lb 1 - p^{-\mu} \rb 
\lb 1 - p^{-\nu}\rb \nn\\
& &+ \lb 1 - p^\mu \rb\lb 1 - p^\nu \rb \Big] \phi_\mu(i) \phi_\nu(i).
\ea
Before proceeding further we need to understand the $\delta_{\mu+\nu}$ term in the Fourier mode orthogonality relations \eqref{eq221}. Because this term only matters for the zero mode, and in particular it only applies for $\mu=\nu=0$ (and the time $\tau$ evolution is not unitary for $\mu<0$), it should be understood as $\delta_{\mu,0}\delta_{\nu,0}$. Then we have
\ba
\sum_{\ipj \in E\lb T_p \rb} \lb X_i^a - X_j^a \rb^2 &=& - \frac{1}{2} \sum_{\mu,\nu\in \DD} C^a_\mu C^a_\nu \lsb  p \lb p^{-\mu} -1 \rb 
\lb p^{-\nu} -1 \rb + \lb p^\mu-1 \rb\lb p^\nu-1 \rb \rsb \times\nn\\
& &\times \lb \delta_{\mu}\delta_{\nu} +\delta_{\mu+\nu-1} \rb 
\ea
The square bracket contribution at $\mu=\nu=0$ is order $\OO(\mu\nu)$, which from Eq. \eqref{eqomegamurel349} translates to order $\OO\lb\otoz^4\rb$, so we will ignore it. We have thus obtained
\be
\sum_{\ipj \in E\lb T_p \rb} \lb X_i^a - X_j^a \rb^2 = - \sum_{\mu \in \DD} C^a_\mu C^a_{1-\mu} \left(1-p^{\mu}\right) \left(1-p^{1-\mu}\right),
\ee
where we should note that $\DD$ also includes a zero mode contribution at $\mu=0,1$.

Similarly, to second order in $\otoz$ we have
\ba
\sum_{i \in V\lb T_p \rb} \lb P_i^a \rb^2 &=& \lb \sum_{i\in V\lb T_p \rb} 1 \rb\mathfrak{p}^2 + 2 \mathfrak{p}^a \sum_{i \in V\lb T_p \rb} \sum_{\mu\in \DD} D_\mu^a \phi_\mu(i) \\
& & + \sum_{i \in V\lb T_p \rb} \sum_{\mu,\nu\in \DD} D_\mu^a D_\nu^a \phi_\mu(i) \phi_\nu(i)\nn \\
&=& \frac{\mathfrak{p}^2}{1-p}  - \frac{2}{\sqrt{p-1}} \mathfrak{p}^a \lb D_0^a + D_1^a \rb - \sum_{\mu,\nu\in \DD} D_\mu^a D_\nu^a \lb \delta_{\mu}\delta_{\nu} + \delta_{\mu+\nu-1} \rb \nn \\
&=& \frac{\mathfrak{p}^2}{1-p}  - \frac{2i\otoz}{\sqrt{p-1}} \mathfrak{p}^a \lb c_0^{a-} + c_1^{a-} \rb + \otoz^2 c_0^{a-}c_0^{a-} -  \sum_{\mu\in \DD} D_\mu^a D_{1-\mu}^a, \nn
\ea
where in the zero mode contributions we have kept all terms that may potentially contribute once the divergences in the zero mode commutators are taken into account. 

Then the Hamiltonian is
\be
\label{Hexp}
H = \frac{1}{1-p} \frac{\mathfrak{p}^2}{2} + H_\mrm{zr}  - \frac{1}{2} \sum_{\mu\in \DD} D_\mu^a D_{1-\mu}^a - \frac{1}{v_p}\sum_{\mu\in \DD} \left(1-p^{\mu}\right) \left(1-p^{1-\mu}\right)  C^a_\mu C^a_{1-\mu},
\ee
where (but note that the sum over $\DD$ also contains part of the zero sector contribution, when~$\mu=0,1$)
\be
\label{eqqq4182}
H_\mrm{zr} \coloneqq - \frac{i\omega_{\mu=0}}{\sqrt{p-1}} \mathfrak{p}^a \lb c_0^{a-} + c_1^{a-} \rb + \frac{1}{2}\omega_{\mu=0}^2 c_0^{a-}c_0^{a-}.
\ee

We now employ Eqs. \eqref{eqs196}, \eqref{eqs198}, remembering that $e^{i\omega_\mu}$ satisfies (from Eq. \eqref{eqomegamurel349})
\be
\label{eq616}
e^{i\omega_\mu} + e^{-i\omega_\mu} - 2 = \frac{2}{v_p} \lb p^\mu + p^{1-\mu} - p -1 \rb,
\ee
and we obtain
\ba
\label{eqqq418}
H&=& \frac{1}{1-p} \frac{\mathfrak{p}^2}{2} + H_\mrm{zr} + \frac{2}{v_p}\sum_{\mu\in [0,1]-\{1/2\}}\lb p^\mu + p^{1-\mu} - p - 1 \rb \lb c^a_\mu c^{a\dagger}_{1-\mu}+c^a_{1-\mu}c^{a\dagger}_\mu \rb\nn\\
& & +  \frac{2}{v_p}\sum_{\mu\in \{1/2+it,\, t\in\mathbb{R}\}}\lb p^\mu + p^{1-\mu} - p - 1 \rb \lb c^a_\mu c^{a\dagger}_{\mu}+c^a_{1-\mu}c^{a\dagger}_{1-\mu} \rb\nn \\
&=& \frac{1}{1-p} \frac{\mathfrak{p}^2}{2} + H_\mrm{zr} -\frac{1}{2} \sqrt{\frac{2}{v_p}} \sum_{\mu\in [0,1]-\{1/2\}} \sqrt{ 1+ p - p^\mu - p^{1-\mu} } \lb b^a_\mu b^{a\dagger}_{1-\mu}+b^a_{1-\mu}b^{a\dagger}_\mu \rb\nn\\
& &-\frac{1}{2} \sqrt{\frac{2}{v_p}} \sum_{\mu\in \{1/2+it,\, t\in\mathbb{R}\}} \sqrt{ 1+ p - p^\mu - p^{1-\mu} } \lb b^a_\mu b^{a\dagger}_{\mu}+b^a_{1-\mu}b^{a\dagger}_{1-\mu} \rb\nn\\
\label{eq1109}
&=& \frac{1}{1-p} \frac{\mathfrak{p}^2}{2} + H_\mrm{zr} -\sqrt{\frac{2}{v_p}} \sum_{\mu\in [0,1]-\{1/2\}} \sqrt{ 1+ p - p^\mu - p^{1-\mu} }  b^{a\dagger}_{\mu} b^a_{1-\mu}\nn\\
& &-\sqrt{\frac{2}{v_p}} \sum_{\mu\in \{1/2+it,\, t\in\mathbb{R}\}} \sqrt{ 1+ p - p^\mu - p^{1-\mu} }  b^{a\dagger}_{\mu} b^a_{\mu},
\ea
where we have used the change of variables $\mu\to1-\mu$ in some of the sums.
\end{proof}
Note that Eq. \eqref{eqqq418} is a semi-classical result, i.e. we have discared the constant from ordering $b_\mu$, $b_\nu^\dagger$ past each other, and thus it does not contain information on the operator ordering. We will address this in Section~\ref{sec7} below.

\begin{remark} Hamiltonian $H$ is worldsheet time $\tau$ independent. The derivation of this depended on the momentum operator eigenvalue squared equaling the Laplacian eigenvalue.
\end{remark}

\subsection{Angular momentum}

We now decompose the angular momentum operator in creation and annihilation operators.

\begin{defn}
Let the angular momentum operator be
\be
\label{eq435Jab}
J^{ab} \coloneqq \sum_{i\in V\lb T_p \rb}  \lb X_i^a P_i^b - X_i^b P_i^a \rb.
\ee
\end{defn}
Note that there is no ordering ambiguity in Eq. \eqref{eq435Jab}.

\begin{thm}
In terms of the Fourier modes, the angular momentum $J^{ab}$ equals
\ba
-J^{ab} &=& \frac{1}{\sqrt{p-1}}\lsb \lb c_0^{a+} + c_1^{a+} \rb \mathfrak{p}^b - \lb c_0^{b+} + c_1^{b+} \rb \mathfrak{p}^a \rsb\nn\\
& &+ 2\sum_{\mu\in (0,1)-\{1/2\}} \lb e^\frac{i\omega_\mu}{2} - e^{-\frac{i\omega_\mu}{2}} \rb \lb c_\mu^{a\dagger} c_{1-\mu}^b - c_{1-\mu}^{b\dagger} c_{\mu}^a \rb\nn\\
& & +2\sum_{\mu\in \{1/2+it,\, t\in\mathbb{R}\}} \lb e^\frac{i\omega_\mu}{2} - e^{-\frac{i\omega_\mu}{2}} \rb \lb c_\mu^{a\dagger} c_{\mu}^b - c_{\mu}^{b\dagger} c_{\mu}^a \rb,\nn
\ea 
and is worldsheet time $\tau$ independent.
\end{thm}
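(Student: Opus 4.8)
The plan is to imitate the proof of the Hamiltonian expansion \eqref{eq44Hami}: substitute the Fourier expansions into the definition \eqref{eq435Jab} of $J^{ab}$, collapse all sums over the tree vertices $i$ using the orthonormality relation \eqref{eq223} (together with its degenerate consequences $\sum_i \phi_\mu(i) = -(\delta_\mu + \delta_{\mu-1})/\sqrt{p-1}$ and $\sum_i 1 = 1/(1-p)$), then organize by the three sectors, antisymmetrize in $a\leftrightarrow b$, and finally take $\otoz\to 0$.

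\textbf{Collapsing the vertex sums.} Writing $X_i^a = \mathfrak{p}^a\tau + \sum_{\mu\in\DD} C_\mu^a(\tau)\phi_\mu(i)$ and $P_i^a = \mathfrak{p}^a + \sum_{\mu\in\DD} D_\mu^a(\tau)\phi_\mu(i)$ with $C_\mu^a$, $D_\mu^a$ as in \eqref{eqs196}, \eqref{eqs198}, and expanding $\sum_i (X_i^a P_i^b - X_i^b P_i^a)$, the vertex sums produce three groups of terms: (i) a term proportional to $\mathfrak{p}^a\mathfrak{p}^b\tau/(1-p)$, which is symmetric in $a\leftrightarrow b$ and so drops out of $J^{ab}$; (ii) zero-mode cross terms $-[\mathfrak{p}^b(C_0^a + C_1^a) + \mathfrak{p}^a\tau(D_0^b + D_1^b)]/\sqrt{p-1}$ (antisymmetrized), in which $\sum_i\phi_\mu(i)$ forces $\mu=0,1$; and (iii) the mode bilinears $-\sum_{\mu\in\DD} C_\mu^a D_{1-\mu}^b$ coming from the $\delta_{\mu+\nu-1}$ pairing, plus a single $-C_0^aD_0^b$ from the $\delta_{\mu+\nu}$ pairing (which, exactly as in the Hamiltonian computation, should be read as $\delta_{\mu,0}\delta_{\nu,0}$, since $\mu<0$ is non-unitary). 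Note that throughout, the symmetric c-number pieces generated by commutators such as \eqref{eq176} drop from the antisymmetric $J^{ab}$, so all $c$'s may be treated as commuting.

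\textbf{Antisymmetrization, $\tau$-dependence, and the $\otoz\to 0$ limit.} Expanding $C_\mu^a D_{1-\mu}^b$ on the $[0,1]$ locus with $\omega_{1-\mu}=\omega_\mu$ (Lemma \ref{lemmaomeganuprop}) splits it into $\tau$-independent cross pieces $\propto(e^{i\omega_\mu/2}-e^{-i\omega_\mu/2})(-c_\mu^a c_{1-\mu}^{b\dagger}+c_\mu^{a\dagger}c_{1-\mu}^b)$ and $\tau$-dependent pieces $\propto e^{\pm 2i\omega_\mu\tau}$ built from $c_\mu^a c_{1-\mu}^b$ and $c_\mu^{a\dagger}c_{1-\mu}^{b\dagger}$. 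Since $c_\mu^a$ and $c_{1-\mu}^b$ commute exactly (no c-number), relabeling $\mu\to 1-\mu$ in the $a\leftrightarrow b$-swapped copy shows the $e^{\pm2i\omega_\mu\tau}$ contributions are symmetric in $a\leftrightarrow b$ and hence cancel in $J^{ab}$; the same $\mu\to1-\mu$ relabeling symmetrizes the surviving cross terms into $2\sum_{\mu\in[0,1]}(e^{i\omega_\mu/2}-e^{-i\omega_\mu/2})(c_\mu^{a\dagger}c_{1-\mu}^b - c_{1-\mu}^{b\dagger}c_\mu^a)$. The critical-line contribution is identical with $C_\mu^a = c_\mu^a e^{i\omega_\mu\tau}+c_{1-\mu}^{a\dagger}e^{-i\omega_\mu\tau}$ and $\omega_{1/2+it}=\omega_{1/2-it}$, yielding $2\sum_{\mu\in\{1/2+it\}}(e^{i\omega_\mu/2}-e^{-i\omega_\mu/2})(c_\mu^{a\dagger}c_\mu^b - c_\mu^{b\dagger}c_\mu^a)$. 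Finally, because $\omega_{\mu=0}=\omega_{\mu=1}=\otoz$, one has $C_0^a\to c_0^{a+}$, $C_1^a\to c_1^{a+}$ and $D_0^a,D_1^a=O(\otoz)$, so group (ii) collapses to $[(c_0^{a+}+c_1^{a+})\mathfrak{p}^b - (c_0^{b+}+c_1^{b+})\mathfrak{p}^a]/\sqrt{p-1}$ and the $C_0^aD_0^b$ term vanishes; the $\mu=0,1$ endpoints drop from the $[0,1]$ sum automatically because their prefactor $e^{i\omega_\mu/2}-e^{-i\omega_\mu/2}\to 0$, and $\mu=1/2$ is assigned to the critical-line sum (the two pairings agreeing there). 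Assembling with an overall sign gives the stated expression for $-J^{ab}$. Worldsheet time independence is then immediate from the computation: every $\tau$-dependent term either cancelled by $a\leftrightarrow b$ antisymmetry (the $e^{\pm 2i\omega_\mu\tau}$ bilinears) or was $O(\otoz)$ (the explicit-$\tau$ zero-mode pieces), so the $\otoz\to 0$ limit is the manifestly $\tau$-free expression.

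\textbf{Main obstacle.} The delicate point is the zero-mode bookkeeping: interpreting the $\delta_{\mu+\nu}$ term in \eqref{eq221} as $\delta_{\mu,0}\delta_{\nu,0}$, correctly identifying which $\otoz$-suppressed terms may be discarded, and checking that the explicit $\mathfrak{p}^a\tau$ pieces leave no residual $\tau$-dependence — together with verifying that the double-frequency bilinears cancel under $a\leftrightarrow b$ \emph{exactly} (using $[c_\mu^a,c_{1-\mu}^b]=0$), rather than only up to an $\otoz$-divergent commutator. The remaining algebraic reorganization into the two advertised sums is routine, exactly paralleling the Hamiltonian case.
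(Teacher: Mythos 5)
Your proposal is correct and follows essentially the same route as the paper's own proof: direct substitution of the $C_\mu^a$, $D_\mu^a$ expansions into \eqref{eq435Jab}, collapsing the vertex sums via \eqref{eq223}, splitting into zero-mode, complementary and critical sectors, and using the $\mu\to1-\mu$ relabeling together with the exact vanishing of $\lsb c_\mu^a,c_\nu^b\rsb$ and $\lsb c_\mu^{a\dagger},c_\nu^{b\dagger}\rsb$ to cancel the $e^{\pm2i\omega_\mu\tau}$ bilinears. Your handling of the zero mode (reading $\delta_{\mu+\nu}$ as $\delta_{\mu,0}\delta_{\nu,0}$, treating $D_{0,1}^a=\OO(\otoz)$, and noting the residual explicit-$\tau$ pieces are $\OO(\otoz)$) matches the paper, which merely retains those $\OO(\otoz)$ terms explicitly in Eq. \eqref{eq441Jab} for later use in the Poincar\'e algebra check.
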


\begin{proof}
We proceed by direct computation, by expanding in the Fourier sectors. Noting that the cross-terms across sectors cancel, we have
\ba 
-J^{ab} &=& - \sum_{i\in V\lb T_p \rb} \lsb \lb \mathfrak{p}^a\tau + \sum_{\mu\in \DD} C^a_\mu \phi_\mu(i) \rb \lb \mathfrak{p}^b + \sum_{\nu\in \DD} D^b_\nu \phi_\nu(i) \rb - \lb a \leftrightarrow b \rb \rsb \nn\\
&\eqqcolon&-J_\mrm{zm} + \sum_{\mu\in (0,1)-\{1/2\}} \lb C_\mu^a D_{1-\mu}^b - C_\mu^b D_{1-\mu}^a \rb \\
& &+ \sum_{\mu\in \{1/2+it,t\in\mathbb{R}\}} \lb C_\mu^a D_{1-\mu}^b - C_\mu^b D_{1-\mu}^a\rb\nn\\
&\eqqcolon& -J_\mrm{zm}^{ab} - J_\mrm{comp}^{ab} - J_\mrm{crit}^{ab}, \nn
\ea
where (remembering Eqs. \eqref{eq342cmu01} for the definitions of $c_\mu^{a\pm}(\tau)$)
\ba
-J_\mrm{zm}^{ab} &=& - \sum_{i\in V\lb T_p \rb}  \lsb \mathfrak{p}^a\tau + c_0^{a+}(\tau) \phi_0(i) + c_1^{a+}(\tau) \phi_1(i) \rsb \big[ \mathfrak{p}^b + i\otoz c_0^{b-}(\tau)\phi_0(i)\nn\\
& & + i\otoz c_1^{b-}(\tau) \phi_1(i) \big] - \lb a \leftrightarrow b \rb \\
&=& \frac{1}{\sqrt{p-1}} \Big[ \mathfrak{p}^a\tau i\otoz \lb c_0^{b-} + c_1^{b-} \rb + c_0^{a+}(\tau) \Big( \mathfrak{p}^b + i\otoz \sqrt{p-1} c_0^{b-} \nn\\
& &+i\otoz \sqrt{p-1} c_1^{b-} \Big) +c_1^{a+}(\tau) \lb \mathfrak{p}^b + i\otoz \sqrt{p-1} c_0^{b-} \rb \Big] - \lb a \leftrightarrow b \rb \nn \\
&=& \frac{1}{\sqrt{p-1}}\lsb \lb c_0^{a+} + c^{a+}_1 \rb \mathfrak{p}^b - \lb c_0^{b+} +c_1^{b+} \rb \mathfrak{p}^a \rsb+ \frac{i\otoz}{\sqrt{p-1}} \big[ \mathfrak{p}^a\tau(c_0^{b-}+c_1^{b-})\nn\\
& &+ (c_0^{a-}+c_1^{a-})\mathfrak{p}^b\tau + \sqrt{p-1} \lb c_0^{a+} c_0^{b-} + c_0^{a+} c_1^{b-} + c_1^{a+} c_0^{b-}\rb - \lb a \leftrightarrow b \rb \big], \nn
\ea
and we have kept terms to first order in $\otoz$. 

We now compute the contributions of the complementary and critical sectors. We~have
\ba
-J_\mrm{comp}^{ab}&=& \sum_{\mu\in (0,1)-\{1/2\}} \lb e^\frac{i\omega_\mu}{2} - e^{-\frac{i\omega_\mu}{2}} \rb \Big[ \lb c_\mu^a e^{i\omega_\mu \tau } + c_\mu^{a\dagger} e^{-i\omega_\mu \tau } \rb \lb c_{1-\mu}^b  e^{i\omega_\mu \tau } - c_{1-\mu}^{b\dagger} e^{-i\omega_\mu \tau } \rb \nn\\
& & - \lb c_\mu^b e^{i\omega_\mu \tau } + c_\mu^{b\dagger} e^{-i\omega_\mu \tau } \rb \lb c_{1-\mu}^a e^{i\omega_\mu \tau } - c_{1-\mu}^{a\dagger} e^{-i\omega_\mu \tau } \rb \Big] \nn\\
&=& \sum_{\mu\in (0,1)-\{1/2\}} \lb e^\frac{i\omega_\mu}{2} - e^{-\frac{i\omega_\mu}{2}} \rb \lb -c_\mu^a c_{1-\mu}^{b\dagger} + c_\mu^{a\dagger} c_{1-\mu}^b + c_\mu^b c_{1-\mu}^{a\dagger} - c_\mu^{b\dagger} c_{1-\mu}^a \rb \nn\\
&=& 2\sum_{\mu\in (0,1)-\{1/2\}} \lb e^\frac{i\omega_\mu}{2} - e^{-\frac{i\omega_\mu}{2}} \rb \lb c_\mu^{a\dagger} c_{1-\mu}^b - c_{1-\mu}^{b\dagger} c_{\mu}^a \rb,
\ea
where in the last line we have used the commutation relations and performed the variable change $\mu\to1-\mu$.

Remembering that the critical line sector is the same as the unit interval sector, except for the index change $c_\mu^{a\dagger}\to c_{1-\mu}^{a\dagger}$, we have
\be
-J_\mrm{crit}^{ab} =  2\sum_{\mu\in \{1/2+it,\, t\in\mathbb{R}\}} \lb e^\frac{i\omega_\mu}{2} - e^{-\frac{i\omega_\mu}{2}} \rb \lb c_\mu^{a\dagger} c_{\mu}^b - c_{\mu}^{b\dagger} c_{\mu}^a \rb.
\ee
Putting everything together, we have obtained
\ba
\label{eq441Jab}
-J^{ab}&=& -J_\mrm{zm}^{ab} - J_\mrm{comp}^{ab} - J_\mrm{crit}^{ab}\\
&=& \frac{1}{\sqrt{p-1}}\lsb \lb c_0^{a+} + c_1^{a+} \rb \mathfrak{p}^b - \lb c_0^{b+} + c_1^{b+} \rb \mathfrak{p}^a \rsb\nn\\
& &+ \frac{i\otoz}{\sqrt{p-1}} \big[ \mathfrak{p}^a\tau(c_0^{b-}+c_1^{b-}) + (c_0^{a-}+c_1^{a-})\mathfrak{p}^b\tau + \sqrt{p-1} ( c_0^{a+} c_0^{b-} + c_0^{a+} c_1^{b-} \nn\\
& & + c_1^{a+} c_0^{b-}) - \lb a \leftrightarrow b \rb \big] + 2\sum_{\mu\in (0,1)-\{1/2\}} \lb e^\frac{i\omega_\mu}{2} - e^{-\frac{i\omega_\mu}{2}} \rb \lb c_\mu^{a\dagger} c_{1-\mu}^b - c_{1-\mu}^{b\dagger} c_{\mu}^a \rb\nn\\
& & +2\sum_{\mu\in \{1/2+it,\, t\in\mathbb{R}\}} \lb e^\frac{i\omega_\mu}{2} - e^{-\frac{i\omega_\mu}{2}} \rb \lb c_\mu^{a\dagger} c_{\mu}^b - c_{\mu}^{b\dagger} c_{\mu}^a \rb.\nn
\ea
\end{proof}

Eq. \eqref{eq441Jab} is the angular momentum in terms of the Fourier modes, to first order in $\otoz$. Note that the time dependent piece at order $O(\otoz)$ would cancel if we strengthened the $[\mathfrak{p}^a,c_0^{b-}+c_1^{b-}]$ commutator to $[\mathfrak{p}^a,c_0^{b-}+c_1^{b-}]=0$, however we will not need to make this choice in the rest of the paper.

\section{Poincar\'e algebra}
\label{sec8}

In this section we verify the Poincar\'e algebra by direct computation.

As explained above, because some of the creation-annihilation commutators diverge in the zero mode cutoff $\otoz$, when computing the zero mode contributions to the Poincar\'e algebra we must keep the subleading terms $\otoz$ in $\PP^a$ and $J^{ab}$, and send $\otoz\to 0$ at the very end.

Because the worldsheet is discrete, it is not given that our target space theory has Poincar\'e invariance. The Poincar\'e invariance of the target is nontrivial and must be checked carefully, and in fact it will not hold unless the $\delta_{\mu+\nu}$ term is present in the orthogonality relations (see Remark \ref{Remark26Poin} below).

\begin{remark}
Once the zero mode cutoff $\otoz$ is removed (that is in the limit $\otoz\to0$), the 
\be
[\PP^a,\PP^b] =0 
\ee
part of the Poincar\'e algebra follows trivially from Eq. \eqref{eq4p4Pab} and the the second line of Eq. \eqref{zerozmcommutators}.
\end{remark}

\begin{lemma}
The Poincar\'e algebra commutator 
\be
\lsb J^{ab}, \PP^c \rsb =  i \lb \eta^{ac} \PP^b - \eta^{bc} \PP^a \rb
\ee
is obeyed, in the $\otoz\to 0$ limit.
\end{lemma}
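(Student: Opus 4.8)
The plan is to evaluate $[J^{ab},\PP^c]$ directly from the Fourier mode expansions \eqref{eq4p4Pab} for $\PP^c$ and \eqref{eq441Jab} for $J^{ab}$, keeping the first subleading power of $\otoz$ throughout, and sending $\otoz\to 0$ only at the very end. The key structural simplification is that $\PP^c$ is built entirely out of the zero mode operators $\mathfrak{p}^c$, $c_0^{c-}$ and $c_1^{c-}$; since the three sectors mutually commute, the entire oscillator part of $J^{ab}$ — the sums over $\mu\in(0,1)$ and over the critical line — drops out of the commutator, and only the zero mode piece $-J^{ab}_\mrm{zm}$ can contribute.

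I would then organize the computation by powers of $\otoz$. Write $\PP^c = \PP^c_{(0)} + \PP^c_{(1)}$ with $\PP^c_{(0)} = -\mathfrak{p}^c/(p-1)$ and $\PP^c_{(1)} = -i\otoz(c_0^{c-}+c_1^{c-})/\sqrt{p-1}$, and similarly split $-J^{ab}_\mrm{zm}$ into its $\OO(1)$ part $\frac{1}{\sqrt{p-1}}\big[(c_0^{a+}+c_1^{a+})\mathfrak{p}^b - (a\leftrightarrow b)\big]$ and an $\OO(\otoz)$ remainder. Expanding into the four cross terms, the $\OO(\otoz)\times\OO(\otoz)$ term is harmless even with the $1/\otoz$ enhancement of the zero mode brackets, and the $\big(\OO(\otoz)\text{ part of }J\big)\times\PP^c_{(0)}$ term is also $\OO(\otoz)$, because $[\,\cdot\,,\mathfrak{p}^c]$ of the operators appearing there carries no $1/\otoz$ enhancement (only $[c_0^{\bullet+},\mathfrak{p}^c]\propto\alpha$, which is finite, and $[c^{\bullet-},\mathfrak{p}^c]=\OO(1)$). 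Two terms survive. The first, $\big[-J^{ab}_\mrm{zm,(0)},\PP^c_{(0)}\big]$, is evaluated from $[c_0^{a+},\mathfrak{p}^b] = -i\alpha\sqrt{p-1}\,\eta^{ab}$, $[c_1^{a+},\mathfrak{p}^b]=0$, $[\mathfrak{p}^a,\mathfrak{p}^b]=0$, giving $\frac{i\alpha}{p-1}\big(\eta^{ac}\mathfrak{p}^b-\eta^{bc}\mathfrak{p}^a\big)$. The second, $\big[-J^{ab}_\mrm{zm,(0)},\PP^c_{(1)}\big]$, uses the divergent zero mode brackets $[c_0^{a+},c_0^{c-}]=\alpha\eta^{ac}/\otoz$, $[c_0^{a+},c_1^{c-}]=[c_1^{a+},c_0^{c-}]=-\eta^{ac}/\otoz$, $[c_1^{a+},c_1^{c-}]=\eta^{ac}/\otoz$, which sum to $(\alpha-1)\eta^{ac}/\otoz$ and cancel the explicit $\otoz$ in $\PP^c_{(1)}$, producing $-\frac{i(\alpha-1)}{p-1}\big(\eta^{ac}\mathfrak{p}^b-\eta^{bc}\mathfrak{p}^a\big)$ in the limit.

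Adding the two, the $\alpha$-dependence cancels and one is left with $[J^{ab},\PP^c] = -\frac{i}{p-1}\big(\eta^{ac}\mathfrak{p}^b-\eta^{bc}\mathfrak{p}^a\big)+\OO(\otoz) = i\big(\eta^{ac}\PP^b-\eta^{bc}\PP^a\big)$ in the $\otoz\to 0$ limit, using $\PP^{a,b}=-\mathfrak{p}^{a,b}/(p-1)+\OO(\otoz)$ on the right. The one point that needs genuine care — and the reason the naively vanishing $\OO(\otoz)$ pieces of $\PP^c$ and $J^{ab}$ must be retained — is exactly this second surviving term: the product of the $\OO(\otoz)$ zero mode contribution in $\PP^c$ with the $\OO(1/\otoz)$ zero mode commutators is finite, and is precisely what removes the spurious $\alpha$ that would otherwise pollute the leading answer. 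Beyond that, the proof is bookkeeping: checking that every remaining cross term is really $\OO(\otoz)$ and hence drops in the limit.
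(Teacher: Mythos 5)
Your proposal is correct and takes essentially the same route as the paper: restrict attention to the zero-mode part of $J^{ab}$, retain the $\OO\lb\otoz\rb$ piece of $\PP^c$, and observe that the finite $\alpha$-dependent contribution from $\lsb c_0^{a+},\mathfrak{p}^c\rsb$ cancels against the $(\alpha-1)$ contribution produced by the $1/\otoz$-enhanced bracket $\lsb c_0^{a+}+c_1^{a+},\,c_0^{c-}+c_1^{c-}\rsb$ of Eq.~\eqref{thisiseq54}. Your explicit bookkeeping of the discarded $\OO\lb\otoz\rb$ cross terms (and of why the complementary and critical sectors drop out) just makes explicit what the paper leaves implicit, and your signs are consistent with the paper's conventions.
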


\begin{proof}
We proceed by direct computation. We have (from Eqs. \eqref{eq4p4Pab} and \eqref{eq441Jab})
\ba
\lsb J_\mrm{zm}^{ab}, \PP^c \rsb &=& \Bigg[ \frac{1}{\sqrt{p-1}}\lsb \lb c_0^{a+} + c^{a+}_1 \rb \mathfrak{p}^b - \lb c_0^{b+} +c_1^{b+} \rb \mathfrak{p}^a \rsb + \frac{i\otoz}{\sqrt{p-1}} \times\\
& &\times \big[ \mathfrak{p}^a\tau(c_0^{b-}+ c_1^{b-}) + (c_0^{a-}+c_1^{a-})\mathfrak{p}^b\tau + \sqrt{p-1} ( c_0^{a+} c_0^{b-} + c_0^{a+} c_1^{b-} \nn\\
& & + c_1^{a+} c_0^{b-} ) - \lb a \leftrightarrow b \rb \big] , \frac{\mathfrak{p}^c}{p-1} + \frac{i\otoz\lb c_0^{c-} + c_1^{c-} \rb}{\sqrt{p-1}} \Bigg] \nn\\
&=& \frac{1}{\sqrt{p-1}} \lsb c_0^{a+} \mathfrak{p}^b - c_0^{b+}  \mathfrak{p}^a, \frac{\mathfrak{p}^c}{p-1}  \rsb \nn \\
& &+ \frac{i\otoz}{p-1} \lsb \lb c_0^{a+} + c^{a+}_1 \rb \mathfrak{p}^b - \lb c_0^{b+} +c_1^{b+} \rb \mathfrak{p}^a,  c_0^{c-} + c_1^{c-}  \rsb, \nn
\ea
where in the second equality we have kept only the terms that could potentially contribute to the leading piece. Now note that
\ba
\label{thisiseq54}
\lsb c_0^{a+} + c_1^{a+}, c_0^{c-}+c_1^{c-} \rsb = \frac{\eta^{ac}}{\otoz} \lb \alpha -1  \rb
\ea
from expanding into the four commutators and Eq. \eqref{zerozmcommutators}. Then we have
\be
\label{JabPccommutans}
\lsb J_\mrm{zm}^{ab}, \PP^c \rsb = i \lb \eta^{ac} \PP^b - \eta^{bc} \PP^a \rb,
\ee
where parameter $\alpha$ has dropped out.

\end{proof}

\begin{remark}
\label{Remark26Poin}
The Poincar\'e algebra does not depend only on the structure of the flat target space, but also on the structure of the Bruhat-Tits tree, and so it needs to be checked carefully, as it is not automatic that it will hold. In particular, commutator \eqref{JabPccommutans} above delicately depends on the presence of the $\delta_{\mu+\nu}$ contribution in the tree Fourier mode orthogonality conditions \eqref{eq223}. To see this, consider for instance a version of the orthogonality relations where the $\delta_{\mu+\nu}$ term has been replaced by $q\delta_{\mu+\nu}$, for a parameter $q$. Then the last line of Eq. \eqref{eq336split} changes to $-q\eta^{ab}$ on the left-hand side, which propagates into the $[c^{a+}_1,c^{b-}_1]$ commutator as (no other commutators are affected)
\be
[c^{a+}_1,c^{b-}_1] = \frac{(2-q)\eta^{ab}}{\otoz}.
\ee
With this modification Eq. \eqref{thisiseq54} changes into
\be
\lsb c_0^{a+} + c_1^{a+}, c_0^{c-}+c_1^{c-} \rsb = \frac{\eta^{ac}}{\otoz} \lb \alpha - q \rb,
\ee
so that the Poincar\'e algebra commutator \eqref{JabPccommutans} modifies as 
\be
\lsb J_\mrm{zm}^{ab}, \PP^c \rsb = i q \lb \eta^{ac} \PP^b - \eta^{bc} \PP^a \rb.
\ee
Thus the Poincar\'e algebra holds only for $q=1$. It can also be checked that $q\neq 1$ breaks the zero mode contribution in the $[J^{ab},J^{cd}]$ Poincar\'e algebra commutator (see Remark \ref{appremark} in Appendix \ref{zerosectorappendix} below).
\end{remark}

In the remainder of this section we will prove the remaining Poincar\'e algebra commutator, which will be more involved.

\begin{lemma}
\label{lemmaintermstep}
For the open unit interval we have
\be
\lsb c_\mu^{a\dagger} c_{1-\mu}^b , c_\nu^{c\dagger} c_{1-\nu}^d  \rsb = \frac{1}{2} \frac{\eta^{bc}c_\mu^{a\dagger}c_{1-\nu}^d-\eta^{ad}c_\nu^{c\dagger}c_{1-\mu}^b}{\sqrt{\frac{2}{v_p}\lb 1+p-p^\mu - p^{1-\mu} \rb}}\delta_{\mu-\nu}.
\ee
\end{lemma}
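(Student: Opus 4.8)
The plan is to reduce everything to the elementary commutation relations \eqref{eq176} on the $(0,1)$ interval by expanding the commutator of the two bilinears with the Leibniz rule. Setting $A=c_\mu^{a\dagger}$, $B=c_{1-\mu}^b$, $C=c_\nu^{c\dagger}$, $D=c_{1-\nu}^d$, I would use the identity
\be
\lsb AB,CD \rsb = AC\lsb B,D \rsb + A\lsb B,C \rsb D + C\lsb A,D \rsb B + \lsb A,C \rsb D B .
\ee
On the $(0,1)$ interval the only nonvanishing elementary commutator is $\lsb c_\rho^{a}, c_\sigma^{b\dagger} \rsb$, so the two terms $\lsb A,C \rsb=\lsb c_\mu^{a\dagger},c_\nu^{c\dagger}\rsb$ and $\lsb B,D \rsb=\lsb c_{1-\mu}^b,c_{1-\nu}^d\rsb$ drop out, leaving only $A\lsb B,C \rsb D + C\lsb A,D \rsb B$.

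Next I would evaluate the two surviving pieces from \eqref{eq176}, using $\Delta_{\rho,\sigma}=\delta_{\rho+\sigma-1}$ on the interval and the symmetry $\omega_{1-\rho}=\omega_\rho$ from Lemma \ref{lemmaomeganuprop}. This gives $\lsb B,C \rsb=\lsb c_{1-\mu}^b,c_\nu^{c\dagger}\rsb=\frac{\eta^{bc}}{4\sin\frac{\omega_\mu}{2}}\,\delta_{\mu-\nu}$ and $\lsb A,D \rsb=\lsb c_\mu^{a\dagger},c_{1-\nu}^d\rsb=-\frac{\eta^{ad}}{4\sin\frac{\omega_\nu}{2}}\,\delta_{\mu-\nu}$, where the minus sign comes from $\lsb c^{\dagger},c\rsb=-\lsb c,c^{\dagger}\rsb$. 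The Kronecker delta forces $\mu=\nu$, so $\omega_\mu=\omega_\nu$ on its support, and the commutator collapses to
\be
\lsb c_\mu^{a\dagger}c_{1-\mu}^b, c_\nu^{c\dagger}c_{1-\nu}^d \rsb = \frac{1}{4\sin\frac{\omega_\mu}{2}}\lb \eta^{bc}c_\mu^{a\dagger}c_{1-\nu}^d - \eta^{ad}c_\nu^{c\dagger}c_{1-\mu}^b \rb\delta_{\mu-\nu}.
\ee

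Finally I would identify the prefactor with the claimed one. From \eqref{eqomegamurel349}, $\lb e^{i\omega_\mu/2}-e^{-i\omega_\mu/2}\rb^2=-4\sin^2\tfrac{\omega_\mu}{2}=\tfrac{2}{v_p}\lb p^\mu+p^{1-\mu}-p-1\rb$, hence $2\sin\tfrac{\omega_\mu}{2}=\sqrt{\tfrac{2}{v_p}\lb 1+p-p^\mu-p^{1-\mu}\rb}$ (the positive root, since $0<\omega_\mu<\pi$ by Lemma \ref{lemmaomeganuprop}), so that $\frac{1}{4\sin\frac{\omega_\mu}{2}}=\frac12\big(\tfrac{2}{v_p}(1+p-p^\mu-p^{1-\mu})\big)^{-1/2}$, which is exactly the coefficient in the statement. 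There is essentially no serious obstacle: the only points requiring care are keeping track of which elementary commutators vanish, the sign in $\lsb A,D\rsb$, and using $\delta_{\mu-\nu}$ to replace $\omega_\nu$ by $\omega_\mu$ before rewriting the coefficient in closed form.
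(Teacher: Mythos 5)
Your proof is correct and follows essentially the same route as the paper: expand the commutator of bilinears by the Leibniz rule, drop the $\lsb c^\dagger,c^\dagger\rsb$ and $\lsb c,c\rsb$ terms, insert the $(0,1)$-sector relations with $\Delta_{\mu,\nu}=\delta_{\mu+\nu-1}$ and $\omega_{1-\mu}=\omega_\mu$, and use Eq.~\eqref{eqomegamurel349} to identify $4\sin\frac{\omega_\mu}{2}$ with $2\sqrt{\tfrac{2}{v_p}\lb 1+p-p^\mu-p^{1-\mu}\rb}$. The paper simply works with the rescaled commutator coefficient directly instead of converting from the sine form at the end, which is a purely cosmetic difference.
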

\begin{proof}
By direct computation,
\ba
\lsb c_\mu^{a\dagger} c_{1-\mu}^b , c_\nu^{c\dagger} c_{1-\nu}^d  \rsb &=& c_\mu^{a\dagger} \lsb c_{1-\mu}^b , c_\nu^{c\dagger} c_{1-\nu}^d  \rsb + \lsb c_\mu^{a\dagger} , c_\nu^{c\dagger} c_{1-\nu}^d  \rsb c_{1-\mu}^b \\
&=& c_\mu^{a\dagger} \lsb c_{1-\mu}^b , c_\nu^{c\dagger} \rsb c_{1-\nu}^d + c_\nu^{c\dagger} \lsb c_\mu^{a\dagger} , c_{1-\nu}^d  \rsb c_{1-\mu}^b \nn \nn\\
&=& c_\mu^{a\dagger} \frac{1}{2} \frac{\eta^{bc}}{\sqrt{\frac{2}{v_p}\lb 1+p-p^\mu - p^{1-\mu} \rb}}\delta_{\mu-\nu} c_{1-\nu}^d \nn \nn\\
& & - c_\nu^{c\dagger} \frac{1}{2} \frac{\eta^{ad}}{\sqrt{\frac{2}{v_p}\lb 1+p-p^\mu - p^{1-\mu} \rb}}\delta_{\mu-\nu} c_{1-\mu}^b. \nn 
\ea
\end{proof}

\begin{lemma}
\label{lemmaintermstep2}
For the critical line sector we have
\be
\lsb c_\mu^{a\dagger} c_{\mu}^b , c_\nu^{c\dagger} c_{\nu}^d  \rsb = \frac{1}{2} \frac{\eta^{bc}c_\mu^{a\dagger}c_{\nu}^d-\eta^{ad}c_\nu^{c\dagger}c_{\mu}^b}{\sqrt{\frac{2}{v_p}\lb 1+p-p^\mu - p^{1-\mu} \rb}}\delta_{\mu-\nu}.
\ee
\end{lemma}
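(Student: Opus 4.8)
The plan is to follow the proof of Lemma~\ref{lemmaintermstep} essentially verbatim, substituting the $(0,1)$-sector pairing $\mu\leftrightarrow 1-\mu$ by the critical-line pairing $\mu\leftrightarrow\mu$ recorded in Eq.~\eqref{eq349pairing}. Concretely, I would first apply the operator identity
\[
[AB,CD]=A\bigl([B,C]D+C[B,D]\bigr)+\bigl([A,C]D+C[A,D]\bigr)B
\]
with $A=c_\mu^{a\dagger}$, $B=c_\mu^{b}$, $C=c_\nu^{c\dagger}$, $D=c_\nu^{d}$, exactly as in the previous lemma.

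Next I would insert the critical-line commutators from Eq.~\eqref{eq176}. Since $[c^a_\mu,c^b_\nu]=[c^{a\dagger}_\mu,c^{b\dagger}_\nu]=0$, the terms $[A,C]$ and $[B,D]$ drop out, leaving only
\[
\lsb c_\mu^{a\dagger}c_\mu^b,c_\nu^{c\dagger}c_\nu^d\rsb=c_\mu^{a\dagger}\lsb c_\mu^b,c_\nu^{c\dagger}\rsb c_\nu^d+c_\nu^{c\dagger}\lsb c_\mu^{a\dagger},c_\nu^d\rsb c_\mu^b .
\]
Then I would substitute $\lsb c_\mu^b,c_\nu^{c\dagger}\rsb=\tfrac12\,\eta^{bc}\bigl(4\sin\tfrac{\omega_\mu}{2}\bigr)^{-1}\delta_{\mu-\nu}$ and $\lsb c_\mu^{a\dagger},c_\nu^d\rsb=-\tfrac12\,\eta^{ad}\bigl(4\sin\tfrac{\omega_\mu}{2}\bigr)^{-1}\delta_{\mu-\nu}$; the Kronecker delta $\delta_{\mu-\nu}$ forces $\omega_\nu=\omega_\mu$, so a single frequency appears throughout. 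Finally, rewriting $4\sin\tfrac{\omega_\mu}{2}=2\sqrt{\tfrac{2}{v_p}\lb 1+p-p^\mu-p^{1-\mu}\rb}$ via Eq.~\eqref{eqomegamurel349} collects the terms into the claimed expression.

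There is no genuine obstacle here: the computation is a one-line rearrangement once the vanishing commutators are identified. The only point requiring a little care is the sign in $\lsb c_\mu^{a\dagger},c_\nu^d\rsb=-\lsb c_\nu^d,c_\mu^{a\dagger}\rsb$, and the observation that, in contrast to the $(0,1)$ case, the creation/annihilation partner of $c_\mu^{a\dagger}$ on the critical line is $c_\mu^a$ itself (not $c_{1-\mu}^a$), which is exactly why the right-hand side contains $c_\mu^{a\dagger}c_\nu^d$ and $c_\nu^{c\dagger}c_\mu^b$ rather than their $1-\mu$ counterparts. For the same reason no change of variables $\mu\to 1-\mu$ is needed at the end, unlike in the complementary-series computation of Lemma~\ref{lemmaintermstep}.
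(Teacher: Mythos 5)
Your route is the same as the paper's: apply the Leibniz expansion of $\lsb AB,CD\rsb$, drop the $\lsb c,c\rsb$ and $\lsb c^\dagger,c^\dagger\rsb$ terms, and substitute the critical-line commutator with its $\delta_{\mu-\nu}$ pairing, so structurally the proof is fine and matches the paper's verbatim.

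There is, however, a factor-of-two slip in the value you substitute. From Eq.~\eqref{eq176}, $\lsb c_\mu^{b}, c_\nu^{c\dagger}\rsb = \frac{1}{2}\,\frac{i\eta^{bc}}{e^{i\omega_\mu/2}-e^{-i\omega_\mu/2}}\,\delta_{\mu-\nu} = \frac{\eta^{bc}}{4\sin\frac{\omega_\mu}{2}}\,\delta_{\mu-\nu}$: the overall $\tfrac12$ is already absorbed when passing from $i/(e^{i\omega_\mu/2}-e^{-i\omega_\mu/2})$ to $1/(2\sin\frac{\omega_\mu}{2})$. Writing instead $\tfrac12\,\eta^{bc}\bigl(4\sin\tfrac{\omega_\mu}{2}\bigr)^{-1}\delta_{\mu-\nu}$, as you do (and likewise for $\lsb c_\mu^{a\dagger},c_\nu^{d}\rsb$), doubles that suppression, and after your (correct) conversion $4\sin\tfrac{\omega_\mu}{2}=2\sqrt{\tfrac{2}{v_p}\lb 1+p-p^\mu-p^{1-\mu}\rb}$ from Eq.~\eqref{eqomegamurel349} you would land on a prefactor $\tfrac14$ rather than the claimed $\tfrac12$, so the computation as written does not reproduce the lemma. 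With the commutator taken as $\eta^{bc}\delta_{\mu-\nu}/\bigl(4\sin\tfrac{\omega_\mu}{2}\bigr) = \tfrac12\,\eta^{bc}\delta_{\mu-\nu}/\sqrt{\tfrac{2}{v_p}\lb 1+p-p^\mu-p^{1-\mu}\rb}$, everything else in your argument (the sign from antisymmetry, the $\mu\leftrightarrow\mu$ pairing on the critical line, no relabeling needed) is correct and the stated formula follows.
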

\begin{proof}
By direct computation,
\ba
\lsb c_\mu^{a\dagger} c_{\mu}^b , c_\nu^{c\dagger} c_{\nu}^d  \rsb &=& c_\mu^{a\dagger} \lsb c_{\mu}^b , c_\nu^{c\dagger} c_{\nu}^d  \rsb + \lsb c_\mu^{a\dagger} , c_\nu^{c\dagger} c_{\nu}^d  \rsb c_{\mu}^b \\
&=& c_\mu^{a\dagger} \lsb c_{\mu}^b , c_\nu^{c\dagger} \rsb c_{\nu}^d + c_\nu^{c\dagger} \lsb c_\mu^{a\dagger} , c_{\nu}^d  \rsb c_{\mu}^b \nn\\
&=& c_\mu^{a\dagger} \frac{1}{2} \frac{\eta^{bc}}{\sqrt{\frac{2}{v_p}\lb 1+p-p^\mu - p^{1-\mu} \rb}}\delta_{\mu-\nu} c_{\nu}^d \nn\\
& & - c_\nu^{c\dagger} \frac{1}{2} \frac{\eta^{ad}}{\sqrt{\frac{2}{v_p}\lb 1+p-p^\mu - p^{1-\mu} \rb}}\delta_{\mu-\nu} c_{\mu}^b. \nn
\ea
\end{proof}

\begin{lemma}
\label{lemma6}
The non-zero mode part of the angular momentum obeys the Poincar\'e algebra commutators.
\end{lemma}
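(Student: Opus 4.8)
The plan is to verify the only Poincar\'e bracket that has a nontrivial non-zero-mode part, namely $\lsb J^{ab},J^{cd}\rsb$. Indeed, the total momentum $\PP^a$ of Eq.~\eqref{eq4p4Pab} is built entirely from the zero-mode operators $\mathfrak{p}^a,c_0^{a\pm},c_1^{a\pm}$, so the complementary and critical parts of $J^{ab}$ commute with $\PP^c$, and $\lsb\PP^a,\PP^b\rsb$ carries no non-zero-mode content. Since the three sectors mutually commute (their $\omega_\mu$-ranges are disjoint, Lemma~\ref{lemmaomeganuprop}), the cross-terms drop from $\lsb J^{ab},J^{cd}\rsb$, and the non-zero-mode claim reduces to
\be
\lsb J^{ab}_\mrm{comp},J^{cd}_\mrm{comp}\rsb + \lsb J^{ab}_\mrm{crit},J^{cd}_\mrm{crit}\rsb = i\lb \eta^{ac}J^{bd}_\mrm{nz} - \eta^{ad}J^{bc}_\mrm{nz} - \eta^{bc}J^{ad}_\mrm{nz} + \eta^{bd}J^{ac}_\mrm{nz} \rb,
\ee
where $J^{ab}_\mrm{nz}\coloneqq J^{ab}_\mrm{comp}+J^{ab}_\mrm{crit}$ and the sign convention is the one fixed by the $\lsb J^{ab},\PP^c\rsb$ relation established above. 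It then suffices to treat each sector on its own.

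For the $(0,1)$ sector I would first relabel $\mu\to1-\mu$ in half of the sum (legitimate since $\omega_\mu=\omega_{1-\mu}$ and the range $(0,1)-\{1/2\}$ is invariant) to bring the angular momentum to the manifestly $a\leftrightarrow b$ antisymmetric form
\be
J^{ab}_\mrm{comp} = -2\sum_{\mu\in(0,1)-\{1/2\}} \lb e^{\frac{i\omega_\mu}{2}}-e^{-\frac{i\omega_\mu}{2}} \rb \lb c_\mu^{a\dagger}c_{1-\mu}^b - c_\mu^{b\dagger}c_{1-\mu}^a \rb.
\ee
Expanding $\lsb J^{ab}_\mrm{comp},J^{cd}_\mrm{comp}\rsb$ by bilinearity produces a double sum over $\mu,\nu$ of four commutators of the bilinears $c_\mu^{a\dagger}c_{1-\mu}^b$ (with relabeled indices), each evaluated by Lemma~\ref{lemmaintermstep}. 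The Kronecker delta $\delta_{\mu-\nu}$ collapses the double sum to a single one, and the two prefactors $e^{i\omega_\mu/2}-e^{-i\omega_\mu/2}$ combine with the denominator in Lemma~\ref{lemmaintermstep} via the identity
\be
\frac{\lb e^{\frac{i\omega_\mu}{2}}-e^{-\frac{i\omega_\mu}{2}} \rb^2}{\sqrt{\tfrac{2}{v_p}\lb 1+p-p^\mu-p^{1-\mu} \rb}} = i\lb e^{\frac{i\omega_\mu}{2}}-e^{-\frac{i\omega_\mu}{2}} \rb,
\ee
a consequence of Eq.~\eqref{eqomegamurel349} together with $0<\omega_\mu<\pi$ (Lemma~\ref{lemmaomeganuprop}). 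This leaves exactly one power of the prefactor, matching the normalization of $J_\mrm{comp}$; grouping the four surviving structures by their metric factor and recognizing each coefficient as a multiple of one of $J^{bd}_\mrm{comp},J^{bc}_\mrm{comp},J^{ad}_\mrm{comp},J^{ac}_\mrm{comp}$ then reproduces the complementary-sector part of the right-hand side above.

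The critical-line sector is handled in exactly the same way: the angular momentum is already antisymmetric, $J^{ab}_\mrm{crit}=-2\sum_\mu (e^{i\omega_\mu/2}-e^{-i\omega_\mu/2})(c_\mu^{a\dagger}c_\mu^b-c_\mu^{b\dagger}c_\mu^a)$, Lemma~\ref{lemmaintermstep2} plays the role of Lemma~\ref{lemmaintermstep} with the lowering index $1-\mu$ replaced by $\mu$, and the prefactor identity is unchanged, so the computation goes through verbatim. Adding the two sectors gives the lemma. I expect the only real obstacle to be organizational bookkeeping: one must expand the four-fold commutator while keeping all eight index labels straight, apply the antisymmetry $J^{ab}=-J^{ba}$ and the $\mu\leftrightarrow1-\mu$ relabeling consistently, and then match the resulting index structures term by term against the four pieces $\eta^{ac}J^{bd}$, $\eta^{ad}J^{bc}$, $\eta^{bc}J^{ad}$, $\eta^{bd}J^{ac}$, so that no spurious terms remain.
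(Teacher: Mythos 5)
Your proposal is correct and follows essentially the same route as the paper: it verifies $\lsb J^{ab}_\mrm{comp},J^{cd}_\mrm{comp}\rsb$ and $\lsb J^{ab}_\mrm{crit},J^{cd}_\mrm{crit}\rsb$ by direct expansion, using Lemmas \ref{lemmaintermstep} and \ref{lemmaintermstep2} to evaluate the bilinear commutators, collapsing the double sum with the Kronecker delta, and fixing the square-root branch via $0\leq\omega_\mu\leq\pi$ so that $\lb e^{i\omega_\mu/2}-e^{-i\omega_\mu/2}\rb^2/\sqrt{\tfrac{2}{v_p}\lb 1+p-p^\mu-p^{1-\mu}\rb}=i\lb e^{i\omega_\mu/2}-e^{-i\omega_\mu/2}\rb$, exactly as in the paper's proof. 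The preliminary $\mu\to1-\mu$ relabeling you use to antisymmetrize $J^{ab}_\mrm{comp}$ is equivalent to the form in Eq.~\eqref{eq441Jab}, so no gap arises.
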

\begin{proof}
This is just direct computation. Remembering that the complementary and critical sectors commute with one another and using Eq. \eqref{eq441Jab} for the unit interval contributions, we have
\ba
\label{eqJ2811}
& &\lsb J_\mrm{comp}^{ab}, J_\mrm{comp}^{cd} \rsb = 4\sum_{\mu,\nu\in (0,1)-\{1/2\}} \lb e^\frac{i\omega_\mu}{2} - e^{-\frac{i\omega_\mu}{2}} \rb \lb e^\frac{i\omega_\nu}{2} - e^{-\frac{i\omega_\nu}{2}} \rb \times\\
& &\times \lsb c_\mu^{a\dagger} c_{1-\mu}^b - c_{1-\mu}^{b\dagger} c_{\mu}^a , c_\nu^{c\dagger} c_{1-\nu}^d - c_{1-\nu}^{d\dagger} c_{\nu}^c \rsb\nn\\
&=& 2 \sum_{\mu\in (0,1)-\{1/2\}} \frac{\lb e^\frac{i\omega_\mu}{2} - e^{-\frac{i\omega_\mu}{2}} \rb^2}{\sqrt{\frac{2}{v_p}\lb 1+p-p^\mu - p^{1-\mu} \rb}}\Big(\eta^{bc} c_\mu^{a\dagger} c^d_{1-\mu} - \eta^{ad} c^{c\dagger}_{\mu} c_{1-\mu}^b -\eta^{bd} c_\mu^{a\dagger} c^c_{1-\mu}\nn\\ 
& & + \eta^{ac} c^{d\dagger}_{\mu} c_{1-\mu}^b -\eta^{ac} c_\mu^{b\dagger} c^d_{1-\mu} + \eta^{bd} c^{c\dagger}_{\mu} c_{1-\mu}^a+\eta^{ad} c_\mu^{b\dagger} c^c_{1-\mu} - \eta^{bc} c^{d\dagger}_{\mu} c_{1-\mu}^a \Big),\nn
\ea
where we used Lemma \ref{lemmaintermstep} in the second equality. We now need to be careful with square root branches. Eq. \eqref{eq616} is
\be
\lb e^\frac{i\omega_\mu}{2} - e^{-i\frac{\omega_\mu}{2}} \rb^2 = \frac{2}{v_p} \lb p^\mu + p^{1-\mu} - p -1 \rb,
\ee
so that remembering that in our conventions $\omega_\mu\geq 0$, we have
\be
e^\frac{i\omega_\mu}{2} - e^{-i\frac{\omega_\mu}{2}} = 2i\sin \frac{\omega_\mu}{2} = i \sqrt{\frac{2}{v_p}\lb 1+p-p^\mu - p^{1-\mu} \rb}.
\ee
Then Eq. \eqref{eqJ2811} assembles into
\ba
& &\lsb J_\mrm{comp}^{ab}, J_\mrm{comp}^{cd} \rsb = 2i \sum_{\mu\in(0,1)-\{1/2\}}\lb e^\frac{i\omega_\mu}{2} - e^{-\frac{i\omega_\mu}{2}} \rb\Big[\eta^{bc} \lb c_\mu^{a\dagger} c^d_{1-\mu} - c^{d\dagger}_{\mu} c_{1-\mu}^a \rb \\ 
& &+ \eta^{ad} \lb c_\mu^{b\dagger} c^c_{1-\mu} - c^{c\dagger}_{\mu} c_{1-\mu}^b \rb +\eta^{bd} \lb c^{c\dagger}_{\mu} c_{1-\mu}^a - c_\mu^{a\dagger} c^c_{1-\mu} \rb + \eta^{ac} \lb c^{d\dagger}_{\mu} c_{1-\mu}^b -c_\mu^{b\dagger} c^d_{1-\mu} \rb \Big],\nn
\ea
so that
\be
\lsb J_\mrm{comp}^{ab}, J_\mrm{comp}^{cd} \rsb = i\lb  \eta^{ac} J_\mrm{comp}^{bd}  - \eta^{ad} J_\mrm{comp}^{bc} - \eta^{bc} J_\mrm{comp}^{ad} + \eta^{bd} J_\mrm{comp}^{ac}  \rb, 
\ee
as desired. Next we need to compute the critical line sector commutators,
\ba
\lsb J_\mrm{crit}^{ab}, J_\mrm{crit}^{cd} \rsb &=& 4\sum_{\mu,\nu\in\{1/2+it,\,t\in\mathbb{R}\}} \lb e^\frac{i\omega_\mu}{2} - e^{-\frac{i\omega_\mu}{2}} \rb \lb e^\frac{i\omega_\nu}{2} - e^{-\frac{i\omega_\nu}{2}} \rb \lsb c_\mu^{a\dagger} c_{\mu}^b - c_{\mu}^{b\dagger} c_{\mu}^a , c_\nu^{c\dagger} c_{\nu}^d - c_{\nu}^{d\dagger} c_{\nu}^c \rsb\nn\\
&=& 2 \sum_{\mu\in \{1/2+it,\,t\in\mathbb{R}\}} \frac{\lb e^\frac{i\omega_\mu}{2} - e^{-\frac{i\omega_\mu}{2}} \rb^2}{\sqrt{\frac{2}{v_p}\lb 1+p-p^\mu - p^{1-\mu} \rb}}\Big(\eta^{bc} c_\mu^{a\dagger} c^d_{\mu} - \eta^{ad} c^{c\dagger}_{\mu} c_{\mu}^b \\ 
& &-\eta^{bd} c_\mu^{a\dagger} c^c_{\mu} + \eta^{ac} c^{d\dagger}_{\mu} c_{\mu}^b -\eta^{ac} c_\mu^{b\dagger} c^d_{\mu} + \eta^{bd} c^{c\dagger}_{\mu} c_{\mu}^a\ +\eta^{ad} c_\mu^{b\dagger} c^c_{\mu} - \eta^{bc} c^{d\dagger}_{\mu} c_{\mu}^a \Big)\nn\\
&=& 2i \sum_{\mu\in\{1/2+it,\,t\in\mathbb{R}\}}\lb e^\frac{i\omega_\mu}{2} - e^{-\frac{i\omega_\mu}{2}} \rb\Big[\eta^{bc} \lb c_\mu^{a\dagger} c^d_{\mu} - c^{d\dagger}_{\mu} c_{\mu}^a \rb + \eta^{ad} \lb c_\mu^{b\dagger} c^c_{\mu} - c^{c\dagger}_{\mu} c_{\mu}^b \rb \nn \\
& &+\eta^{bd} \lb c^{c\dagger}_{\mu} c_{\mu}^a - c_\mu^{a\dagger} c^c_{\mu} \rb + \eta^{ac} \lb c^{d\dagger}_{\mu} c_{\mu}^b -c_\mu^{b\dagger} c^d_{\mu} \rb \Big]
\ea
where we used Lemma \ref{lemmaintermstep2} in the second equality. This is just
\be
\lsb J_\mrm{crit}^{ab}, J_\mrm{crit}^{cd} \rsb = i\lb  \eta^{ac} J_\mrm{crit}^{bd}  - \eta^{ad} J_\mrm{crit}^{bc} - \eta^{bc} J_\mrm{crit}^{ad} + \eta^{bd} J_\mrm{crit}^{ac} \rb.
\ee
\end{proof}

Finally, we consider the zero-mode contribution to the $[J^{ab}, J^{cd}]$ Poincar\'e algebra commutator.

\begin{lemma}
\label{lemma7}
The zero mode contribution to the angular momentum obeys the Poincar\'e algebra, for any value of $\alpha$.
\end{lemma}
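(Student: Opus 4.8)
The plan is to compute $\lsb J_\mrm{zm}^{ab}, J_\mrm{zm}^{cd} \rsb$ directly, organizing every term by its power of the zero mode cutoff $\otoz$. First I would take $J_\mrm{zm}^{ab}$ from Eq.~\eqref{eq441Jab} and split it as $J_\mrm{zm}^{ab} = \mathcal{J}^{ab} + \otoz\,\mathcal{K}^{ab}$, where
\be
\mathcal{J}^{ab} \coloneqq -\frac{1}{\sqrt{p-1}}\lb A^a \mathfrak{p}^b - A^b \mathfrak{p}^a \rb, \qquad A^a \coloneqq c_0^{a+} + c_1^{a+} ,
\ee
is the $\otoz$-independent leading part and $\otoz\,\mathcal{K}^{ab}$ collects the remaining, manifestly $O(\otoz)$, pieces (the bilinears $c_0^{a+}c_0^{b-}+c_0^{a+}c_1^{b-}+c_1^{a+}c_0^{b-}$ and the $\tau$-dependent term). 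Since $J_\mrm{zm}^{ab}\to\mathcal{J}^{ab}$ as $\otoz\to0$, it suffices to show $\lsb J_\mrm{zm}^{ab},J_\mrm{zm}^{cd}\rsb \to i\lb \eta^{ac}\mathcal{J}^{bd} - \eta^{ad}\mathcal{J}^{bc} - \eta^{bc}\mathcal{J}^{ad} + \eta^{bd}\mathcal{J}^{ac}\rb$.

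Next I would collect from \eqref{zerozmcommutators} the facts that drive the computation: (i) $A^a$ and $\mathfrak{p}^b$ form a Heisenberg pair, $\lsb A^a,\mathfrak{p}^b\rsb = -i\alpha\sqrt{p-1}\,\eta^{ab}$, with $\lsb A^a, A^b\rsb = \lsb \mathfrak{p}^a,\mathfrak{p}^b\rsb = 0$, and these are exact c-number identities; and (ii) the $1/\otoz$-divergent brackets are encoded by $\lsb A^a, c_0^{b-}\rsb = (\alpha-1)\eta^{ab}/\otoz$ (this is Eq.~\eqref{thisiseq54}) and $\lsb A^a, c_1^{b-}\rsb = 0$, whereas $\lsb \mathfrak{p}^a, c_\mu^{b-}\rsb$ and $\lsb c_\mu^{a-}, c_\nu^{b-}\rsb$ are only $O(1)$.

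The computation then breaks into three parts. First, $\lsb\mathcal{J}^{ab},\mathcal{J}^{cd}\rsb$ is the textbook Heisenberg-algebra bracket built on $\lsb A,\mathfrak{p}\rsb\propto\alpha$, which I expect to yield $i\alpha\lb \eta^{ac}\mathcal{J}^{bd} - \eta^{ad}\mathcal{J}^{bc} - \eta^{bc}\mathcal{J}^{ad} + \eta^{bd}\mathcal{J}^{ac}\rb$ --- with a spurious factor of $\alpha$. Second, the cross terms $\lsb\mathcal{J}^{ab},\otoz\mathcal{K}^{cd}\rsb + \lsb\otoz\mathcal{K}^{ab},\mathcal{J}^{cd}\rsb$: here the $\otoz$ prefactor can only survive against a $1/\otoz$-divergent commutator, and the sole source of those is $\lsb A^a, c_0^{b-}\rsb = (\alpha-1)\eta^{ab}/\otoz$ hitting the bilinears in $\mathcal{K}$; carrying this through and re-assembling via $A^x\mathfrak{p}^y - A^y\mathfrak{p}^x = -\sqrt{p-1}\,\mathcal{J}^{xy}$ should give exactly $i(1-\alpha)\lb \eta^{ac}\mathcal{J}^{bd} - \eta^{ad}\mathcal{J}^{bc} - \eta^{bc}\mathcal{J}^{ad} + \eta^{bd}\mathcal{J}^{ac}\rb$. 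Third, I would check that everything else vanishes as $\otoz\to0$: $\otoz^2\lsb\mathcal{K}^{ab},\mathcal{K}^{cd}\rsb$ is $O(\otoz)$ because the bilinears produce only single contractions (no $1/\otoz^2$), and the $\tau$-dependent part of $\mathcal{K}^{ab}$ --- which after commuting $\mathfrak{p}$ through is $\tau\lb\lsb B^a,\mathfrak{p}^b\rsb - \lsb B^b,\mathfrak{p}^a\rsb\rb$ with $B^a \coloneqq c_0^{a-}+c_1^{a-}$ --- is itself $O(\otoz)$ and has only an $O(1)$ commutator with $\mathcal{J}$, because $\lsb A^a,\mathfrak{p}^b\rsb$ and $\lsb A^a, B^b\rsb$ are c-numbers, so the Jacobi identity kills the would-be divergence in $\lsb\mathcal{J}^{ab},\lsb B^c,\mathfrak{p}^d\rsb\rsb$.

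Adding the first two parts, the $\alpha$-dependence cancels and I would obtain $\lsb J_\mrm{zm}^{ab},J_\mrm{zm}^{cd}\rsb \to i\lb \eta^{ac}J_\mrm{zm}^{bd} - \eta^{ad}J_\mrm{zm}^{bc} - \eta^{bc}J_\mrm{zm}^{ad} + \eta^{bd}J_\mrm{zm}^{ac}\rb$ as $\otoz\to0$, for any $\alpha$. The hard part will be the $\otoz$-bookkeeping in the last two parts: correctly deciding which products of an $O(\otoz)$ factor with a (potentially) divergent commutator contribute in the limit, making sure no term secretly carries a double $1/\otoz$, and verifying that the $\tau$-dependent terms really drop out for any admissible value of the $O(1)$ brackets $\lsb\mathfrak{p}^a, c_0^{b-}\rsb$, etc.
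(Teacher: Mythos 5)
Your proposal is correct and follows essentially the same route as the paper's Appendix A proof: the same split into a leading Heisenberg-type piece giving $i\alpha\lb\eta^{ac}J^{bd}_\mrm{zm}-\dots\rb$, a cross term between that piece and the $O(\otoz)$ bilinears giving $i(1-\alpha)\lb\cdots\rb$ via $\lsb c_0^{a+}+c_1^{a+},c_0^{b-}\rsb=(\alpha-1)\eta^{ab}/\otoz$, and the check that the $\tau$-dependent and $\otoz^2$ pieces cannot contribute, so that $\alpha$ cancels in the sum. The only minor difference is in dispatching the $\tau$-dependent cross terms: the paper computes them explicitly and finds they cancel by index antisymmetry, while you rewrite them as $\otoz\,\tau\lb\lsb\mathfrak{p}^a,B^b\rsb-\lsb\mathfrak{p}^b,B^a\rsb\rb$, which is $O(\otoz)$ by Theorem \ref{andthisisthm5}, and use Jacobi to exclude a compensating $1/\otoz$ — both arguments work (and note your divergent bracket is really Eq. \eqref{thisiseq54} combined with $\lsb c_0^{a+}+c_1^{a+},c_1^{b-}\rsb=0$, which gives the same value).
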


The proof of this lemma is long (if straightforward), and is given in Appendix~\ref{zerosectorappendix}.

We have thus obtained the following theorem.

\begin{thm}
The angular momentum
\be
J^{ab} = J^{ab}_\mrm{zm} + J^{ab}_\mrm{comp} +  J^{ab}_\mrm{crit}
\ee
obeys the Poincar\'e algebra
\be
\lsb J^{ab}, J^{cd} \rsb = i \lb \eta^{ac} J^{bd} - \eta^{ad} J^{bc} - \eta^{bc} J^{ad} + \eta^{bd} J^{ac} \rb.
\ee 
\end{thm}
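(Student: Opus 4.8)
Given the two preceding lemmas, the statement is essentially a bookkeeping corollary: the plan is to expand the commutator over the three sectors, discard the cross terms, and reassemble. First I would use bilinearity of the commutator to write $\lsb J^{ab},J^{cd}\rsb$ as the sum of the nine pieces $\lsb J^{ab}_X, J^{cd}_Y\rsb$ with $X,Y\in\{\mrm{zm},\mrm{comp},\mrm{crit}\}$.

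Second, I would show the six off-diagonal pieces ($X\neq Y$) vanish identically. Each $J^{ab}_X$ is a (possibly formal) linear combination of monomials in the operators attached to sector $X$ only: for $X=\mrm{zm}$ these are $\mathfrak{p}^a$, $c^{a\pm}_0$, $c^{a\pm}_1$ and their products, cf. Eq. \eqref{eq441Jab}; for $X=\mrm{comp}$ they are the $c^a_\mu,c^{a\dagger}_\mu$ with $\mu\in(0,1)$; for $X=\mrm{crit}$ the $c^a_\mu,c^{a\dagger}_\mu$ with $\Re(\mu)=1/2$. Any generator of sector $X$ commutes with any generator of sector $Y\neq X$: for the oscillators this is the sector-decoupling remark (the $\omega_\mu$ ranges of the three sectors are disjoint by Lemma \ref{lemmaomeganuprop}, and in the inversion formulas the Kronecker deltas $\delta_{\mu+\nu-1}$, $\delta_{\mu+\nu}$ can only pair modes lying in the same sector), while $\lsb \mathfrak{p}^a, c^b_\mu\rsb=\lsb \mathfrak{p}^a, c^{b\dagger}_\mu\rsb=0$ for $\mu$ in the open interval or on the critical line follows from Eqs. \eqref{commut1}--\eqref{commut3} specialized to $\nu\in\{0,1\}$, where the right-hand sides are identically zero. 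By the Leibniz rule the same then holds for arbitrary monomials, so each mixed commutator is exactly $0$; in particular this is true already at finite $\otoz$, which matters because $J^{ab}_\mrm{zm}$ is itself $\otoz$-divergent.

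Third, for the three diagonal pieces: Lemma \ref{lemma6} gives $\lsb J^{ab}_\mrm{comp}, J^{cd}_\mrm{comp}\rsb = i(\eta^{ac}J^{bd}_\mrm{comp} - \eta^{ad}J^{bc}_\mrm{comp} - \eta^{bc}J^{ad}_\mrm{comp} + \eta^{bd}J^{ac}_\mrm{comp})$ and the same with $\mrm{comp}\to\mrm{crit}$, while Lemma \ref{lemma7} gives it with $\mrm{comp}\to\mrm{zm}$. Adding the three identities and using $J^{pq}=J^{pq}_\mrm{zm}+J^{pq}_\mrm{comp}+J^{pq}_\mrm{crit}$ for every index pair reconstitutes $i(\eta^{ac}J^{bd}-\eta^{ad}J^{bc}-\eta^{bc}J^{ad}+\eta^{bd}J^{ac})$; sending $\otoz\to0$ at the very end removes the regulator, and the $O(\otoz)$ $\tau$-dependent pieces of $J^{ab}_\mrm{zm}$ drop out in that limit.

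The only genuinely hard input is Lemma \ref{lemma7}, whose proof (deferred to the appendix) is where the work lies: because several zero-mode commutators in \eqref{zerozmcommutators} diverge like $1/\otoz$, one must retain the linear-in-$\otoz$ subleading terms in $J^{ab}_\mrm{zm}$ recorded in \eqref{eq441Jab} and check that the $1/\otoz$ poles and the $\alpha$-dependence cancel between these subleading terms and the leading ones, leaving a finite, $\alpha$-independent result with exactly the Poincar\'e structure constants; this is also the place where the $\delta_{\mu+\nu}$ contribution in \eqref{eq221} is indispensable (cf.\ Remark \ref{Remark26Poin}). Once that lemma is in hand, the cross-term vanishing and the reassembly above are routine.
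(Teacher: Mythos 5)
Your proposal is correct and follows essentially the same route as the paper: the paper's proof likewise combines Lemma \ref{lemma6} (complementary and critical sectors), Lemma \ref{lemma7} (zero-mode sector, with the $\otoz$ and $\alpha$ cancellations handled in the appendix), and the fact that the three sectors mutually commute, so the cross terms drop and the diagonal pieces reassemble into the full algebra. Your additional detail on why the mixed commutators vanish just makes explicit what the paper asserts via the sector-commutation remark.
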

\begin{proof}
Immediate from Lemmas \ref{lemma6}, \ref{lemma7} and the fact that the different sector contributions commute with each other.
\end{proof}

\section{The Archimedean spectrum}
\label{sec7}

So far the treatment in Sections \ref{sec2} -- \ref{sec8} above has been semi-classical, and at one place~$p$ only. We now put all places together, into an adelic construction. In order to do this, we will impose two additional constraints, that have not been required until~now. We will not determine the entire spectrum on the Archimedean side, rather we will focus on the first excited level, by making use of the light-cone gauge. With a suitable condition on parameter $v_p$, we will show that the spectrum can be made to include a photon and graviton, similarly to the usual bosonic string. However, we should emphasize that this Archimedean theory that we recover from the $p$-adic side is not bosonic string theory around the usual unstable vacuum, although we will propose in Section~\ref{secconc} how it may be related to it.

A considerably simpler example of an Euler product of spectra has been constructed in \cite{Huang:2020vwx}, for the case of the quantum mechanical particle-in-a-box, treated at the Archimedean place and $p$-adically. In that case the Euler product can be proven to hold, by computing both sides separately.

\begin{remark}
In order to obtain an adelic spectrum interpretation we will require the following conditions:
\begin{itemize}
\item $b^{a=+}_\mu=0$ in the lightcone gauge.
\item {\rm(The Hamiltonian constraint.)} $H-\mathfrak{a}=0$, for some constant $\mathfrak{a}$. Note that the constant $\mathfrak{a}$ takes care of possible ordering ambiguities in $H$.
\end{itemize}
\end{remark}

\textbf{The lightcone gauge condition:} The first condition is imposing the lightcone gauge on the target space. Note that the lightcone $a=\pm$ coordinates defined as
\be
x^\pm = x^0\pm x^1
\ee
on the target space are not related to the zero sector $\pm$ notation defined as $b_{\mu=0,1}^{a\pm}=b^{a}_{\mu=0,1}\pm b^{a\dagger}_{\mu=0,1}$ for all values of spacetime index $a$.

In the case of the lightcone quantization of the usual bosonic string, after imposing the lightcone gauge one needs to demand that the Virasoro constraints are still satisfied. This is a nontrivial condition, eventually leading to the result that the critical dimension is $D=26$. In our case, because the worldsheet is discrete, we do not have an analogue of the Virasoro constraints.\footnote{The study of various notions of gravity on the worldsheet, in the spirit of \cite{Gubser:2016htz}, and of the constraint equations arising from them, is an important endeavour, that we will however not attempt in the present paper. For some recent developments in the case of graph gravity based on the Lin-Lu-Yau curvature see e.g. \cite{Huang:2020qik}.} Nonetheless, we will impose a certain Hamiltonian constraint, coming from a geometric condition, explained below. This condition will lead not to a determination of the spacetime dimension, but rather to the result that for the first excited level of the $p2$-brane to be massless, the parameter $\mu$ should be localized at the critical zeta zeros.

\textbf{The Hamiltonian constraint:} The second condition is a $p$-adic analogue of the mass-shell condition $L_0-\mathfrak{a}=0$ for the physical states in the usual bosonic string theory, with $\mathfrak{a}$ the normal ordering constant. It follows from imposing a geometric symmetry on the action, as we will explain in Section \ref{sec61} below. A difference between the $p2$-brane and bosonic string theory is that while in the bosonic string theory case having the first excited level be a massless vector forces $\mathfrak{a}=1$, for the $p2$-brane the same requirement will enforce $\mathfrak{a}=0$.

\subsection{A geometric symmetry of the action}
\label{sec61}

Our action from Eq. \eqref{Seq}, with the edge lengths written explicitly, is
\be
\label{eq62S}
S = \sum_{T_1} \lb \frac{1}{2} \sum_{i\in V\lb T_p \rb} \frac{\lb \pd_{\tau'} X_i^a \rb^2}{4\ell^2_\tau}  - \sum_{\ipj \in E\lb T_p \rb} \frac{\lb X^a_i-X^a_j\rb^2}{v_p \ell^2}\rb,
\ee
where $\pd_\tau'$ is now the finite difference between the values of $X_i^a$ at different time steps, not divided by $4\ell_\tau$ (this factor has been pulled in front of the derivative in Eq. \eqref{eq62S}). The variation of this action w.r.t. the edge lengths is
\be
-\frac{1}{2}\delta S = \sum_{T_1} \lb \frac{1}{2} \sum_{i\in V\lb T_p \rb} \frac{\lb \pd_{\tau'} X_i^a \rb^2}{4\ell^3_\tau} \delta \ell_\tau  - \sum_{\ipj \in E\lb T_p \rb} \frac{\lb X^a_i-X^a_j\rb^2}{v_p \ell^3} \delta \ell \rb.
\ee
Then for a variation such that 
\be
\label{vardeltaell}
\delta \ell_\tau = - \frac{1}{2}\delta \ell
\ee
we will have that $\delta S = 0$ iff $H=0$, around the edge length configuration $2\ell_\tau = \ell$. Thus the Hamiltonian constraint is equivalent to the action being stationary around $\ell_\tau = \ell$ for the variation \eqref{vardeltaell}; note that $\ell$, $\ell_\tau$ in Eq. \eqref{vardeltaell} are still functions of time $\tau$, so that we have different amounts of variation at different time steps.

Note that $H=0$ is the semi-classical constraint. Once ordering ambiguities are taken into account, the constraint becomes 
\be
\label{Hconst62}
H - \mathfrak{a} = 0,
\ee
just as in the usual bosonic string theory.

\subsection{The first excited level and spectrum Euler product}

The Hamiltonian constraint \eqref{Hconst62} in the lightcone gauge is
\ba
\label{Hconst66}
\frac{1}{1-p} \frac{\mathfrak{p}^2}{2} &=& \sqrt{\frac{2}{v_p}} \sum_{\mu\in [0,1]-\{1/2\}} \sqrt{ 1+ p - p^\mu - p^{1-\mu} }  b^{i\dagger}_{\mu} b^i_{1-\mu}\\
&+&\sqrt{\frac{2}{v_p}} \sum_{\mu\in \{1/2+it,\, t\in\mathbb{R}\}} \sqrt{ 1+ p - p^\mu - p^{1-\mu} }  b^{i\dagger}_{\mu} b^i_{\mu} +\mathfrak{a}. \nn
\ea
where we have used Eq. \eqref{eq44Hami} and dropped the $H_\mrm{zr}$ contribution since in this section we will only be concerned with the critical and complementary sectors. Using that $m^2 = \PP^2$, from Eq. \eqref{eq4p4Pab} we have (where again we drop the zero mode terms)
\be
\label{Eq67}
m^2 = \frac{\mathfrak{p}^2}{\lb p-1\rb^2}.
\ee

By analogy with the bosonic string (or with any quantum harmonic oscillator), we introduce the ground state $| 0\rangle$ such that
\be
b_\mu^{i} |0\rangle = 0, \qquad \mu \in (0,1) \cup \lcb 1/2 + i t | t\in \mathbb{R} \rcb,
\ee
and $b^{i\dagger}_\mu |0\rangle $ represents the creation of a nontrivial excitation.

\begin{remark}
From the commutation relations \eqref{eq176} and Eq. \eqref{eq349pairing} we take the conjugate of $b_\mu^{i\dagger}|0\rangle$ to be $\langle 0|b_{\mu}^i$ on the critical line, but $\langle 0|b_{1-\mu}^i$ on the $(0,1)$ interval.
\end{remark}

\begin{lemma}
\label{lemma9hereitis}
The states
\be
| \psi^j_\nu\rangle \coloneqq b^{j\dagger}_\nu |0\rangle, \qquad \nu \in (0,1) \cup \lcb 1/2 + i t | t\in \mathbb{R} \rcb
\ee
have mass squared
\be
m_{(p)}^2 = \frac{2\mathfrak{a}^2}{1-p} + \frac{2}{1-p} \sqrt{\frac{2}{v_p}} \sqrt{1+p-p^\nu-p^{1-\nu}}.
\ee
\end{lemma}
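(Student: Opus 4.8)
The plan is to impose the Hamiltonian constraint \eqref{Hconst62} directly on the state $|\psi^j_\nu\rangle=b^{j\dagger}_\nu|0\rangle$, extract from it the value of $\mathfrak{p}^2$ on this state, and then convert to a mass via $m^2_{(p)}=\PP^2=\mathfrak{p}^2/(p-1)^2$ (Eq.~\eqref{Eq67}). Throughout I work in the lightcone gauge and in the $\otoz\to0$ limit, so that the zero-mode pieces of $\PP^a$ and $H$ drop out, $\PP^a=-\mathfrak{p}^a/(p-1)$, and $H$ reduces to the right-hand side of \eqref{eq44Hami} without $H_\mathrm{zr}$, i.e. to the operator appearing in \eqref{Hconst66}.

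First I would compute the action of the oscillator part of $H$ on $|\psi^j_\nu\rangle$ using only the ground-state condition $b^i_\mu|0\rangle=0$ together with the commutator $[b^a_\mu,b^{b\dagger}_\nu]=\eta^{ab}\Delta_{\mu,\nu}$ from \eqref{eq176}--\eqref{eq349pairing}. Moving the annihilation operator to the right past $b^{j\dagger}_\nu$: in the complementary sector the term $\sqrt{2/v_p}\sqrt{1+p-p^\mu-p^{1-\mu}}\,b^{i\dagger}_\mu b^i_{1-\mu}$ yields $b^i_{1-\mu}b^{j\dagger}_\nu|0\rangle=\eta^{ij}\delta_{\mu-\nu}|0\rangle$, which collapses the sum over $\mu$ to $\mu=\nu$ and leaves $b^{j\dagger}_\nu|0\rangle$; in the critical sector the pairing $\delta_{\mu-\nu}$ does the same. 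Hence $|\psi^j_\nu\rangle$ is an eigenstate of the oscillator part of $H$ with eigenvalue $\sqrt{2/v_p}\sqrt{1+p-p^\nu-p^{1-\nu}}$ (the square root being real on both loci, since $1+p-p^\nu-p^{1-\nu}=(1-p^\nu)(1-p^{1-\nu})\ge 0$ for $\nu\in(0,1)$ and $>0$ on $\Re\nu=1/2$). Feeding this eigenvalue into \eqref{Hconst66} solves for $\mathfrak{p}^2$ on $|\psi^j_\nu\rangle$, and dividing by $(p-1)^2$ produces the stated $m^2_{(p)}$.

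The one genuinely delicate point is the bookkeeping around the two different creation/annihilation pairings. On the $(0,1)$ interval the conjugate of $b^{j\dagger}_\nu|0\rangle$ is $\langle 0|b^j_{1-\nu}$, not $\langle 0|b^j_\nu$ (the two coincide only at $\nu=1/2$), so I must check that the oscillator term really returns $b^{j\dagger}_\nu|0\rangle$ and not $b^{j\dagger}_{1-\nu}|0\rangle$ after the contraction; this is exactly what makes the single, $\nu\leftrightarrow 1-\nu$ symmetric factor $\sqrt{1+p-p^\nu-p^{1-\nu}}$ come out identically in both sectors. I would also note that, because the lemma concerns only the first excited level, no ordering constant beyond $\mathfrak{a}$ is generated: the normal-ordering ambiguity in reordering $b^{i\dagger}_\mu b^i_\mu$ shifts the vacuum energy only, and that shift has already been absorbed into $\mathfrak{a}$ in \eqref{Hconst62}.
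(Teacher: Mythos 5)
Your proposal is correct and follows essentially the same route as the paper's proof: you contract the annihilation operator through $b^{j\dagger}_\nu$ using $\lsb b^i_\mu, b^{j\dagger}_\nu \rsb = \eta^{ij}\Delta_{\mu,\nu}$ so the oscillator sum in \eqref{Hconst66} collapses to the single eigenvalue $\sqrt{2/v_p}\,\sqrt{1+p-p^\nu-p^{1-\nu}}$, then solve the constraint for $\mathfrak{p}^2$ and convert to mass via \eqref{Eq67}, with your explicit check of the $(0,1)$ pairing being exactly the ``parallel computation'' the paper invokes for that sector. The only remark worth adding is that this derivation (yours and the paper's alike) produces a term linear in the ordering constant, $2\mathfrak{a}/(1-p)$, so the $\mathfrak{a}^2$ appearing in the lemma's statement is a typo, immaterial because $\mathfrak{a}=0$ is imposed in the subsequent application.
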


\begin{proof}
This is immediate computation. Remembering that
\be
\lsb b_\mu^{i}, b_\nu^{j\dagger} \rsb = \eta^{ij}\Delta_{\mu,\nu},
\ee
for $\nu$ on the critical line we have
\be
\sum_{\mu\in \{1/2+it| t\in\mathbb{R}\}}  \sqrt{1+p-p^\mu-p^{1-\mu}} b_\mu^{i\dagger}b_\mu^{i} b_\nu^{j\dagger} | 0 \rangle =  \sqrt{1+p-p^\nu-p^{1-\nu}} b_\nu^{j\dagger} | 0 \rangle
\ee
so that from Eq. \eqref{Hconst66} we obtain
\be
\label{eqcrit01613}
\langle \psi_\nu^j | \frac{\mathfrak{p}^2}{2} | \psi^j_\nu\rangle = \mathfrak{a} + \sqrt{\frac{2}{v_p}} \sqrt{1+p-p^\nu-p^{1-\nu}}.
\ee
A parallel computation shows that Eq. \eqref{eqcrit01613} also holds on the $(0,1)$ interval. Then from Eq. \eqref{Eq67} the mass squared of state $|\psi_\nu^j\rangle$ at place $p$ is
\be
\label{mp614}
m_{(p)}^2 = \frac{2\mathfrak{a}^2}{1-p} + \frac{2}{1-p} \sqrt{\frac{2}{v_p}} \sqrt{1+p-p^\nu-p^{1-\nu}},
\ee
as advertised.
\end{proof}

\begin{remark}
Up to now parameter $v_p$ has been any real number, with the property $v_p\geq\lb \sqrt p + 1 \rb^2/2$. In the remainder of the paper we will pick the potential term parameter to be $v_p=8p$, noting that indeed $8p \geq \lb \sqrt p + 1 \rb^2/2$ for $p\geq 1/9$. The parameter $v_p$ controls the overall prefactor in the Euler product, and $v_p=8p$ corresponds to a certain value of this prefactor, which will be explained below.
\end{remark}

\begin{remark}
The value $v_p=8p$ may seem arbitrary, but is in fact natural and can be understood in the following way. First note that in order for the potential term in Eq. \eqref{Seq} to not dominate the kinetic term at large $p$, $v_p$ should be order $\OO(p)$, because for a fixed vertex $i$ the $j$ sum over its neighbors is of order $\OO(p)$. Thus, a natural choice would be $v_p=2p$, for two reasons: i) in the $p\to 1$ limit corresponding to an Archimedean theory which will not be considered in this paper the kinetic and potential term prefactors both become $1/2$, and ii) for $v_p=2p$ the inequality $v_p\geq \lb \sqrt p + 1 \rb^2/2$ holds precisely for $p\geq 1$. Now, the choice $v_p=8p$ is related to $v_p=2p$ by a factor of $2$ outside the square root in Eq. \eqref{mp614}. This factor of $2$ corresponds to parameter $\mu$ double-counting the $(0,1)$ interval and critical line modes in Eq. \eqref{eqqq418} and above, by running over the entire $(0,1)$ interval and critical line ``cross,'' as opposed to the ``inverted L'' $(0,1/2)\cup \{1/2+it|t\geq 0\}$. Restricting $\mu$ to not double count (as in \cite{Huang:2019nog}) would fix $v_p=2p$ in line with the discussion above, however we will choose conventions so that we keep the double counting and use $v=8p$ instead. Note that irrespective of the double-counting, all points above and below $1/2$ on the critical line (and to the left and right of $1/2$ on the $(0,1)$ interval) get included in the modes, both as creation and as annihilation operators.
\end{remark}

For the masses $m_{(p)}$ defined by Eq. \eqref{mp614} at the finite places, we define an Archimedean mass squared by Euler product~as 
\be
m_{(\infty)}^2 \coloneqq \prod_p m_{(p)}^{-2}.
\ee
The following lemma holds.

\begin{lemma} 
Pick $v_p=8p$. For $\mathfrak{a}=0$ and $\nu$ at a critical line zeta zero we have~$m_{(\infty)}=~0$.
\end{lemma}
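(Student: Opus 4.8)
The plan is to substitute the prescribed values $\mathfrak{a}=0$, $v_p=8p$ into the local mass formula of Lemma~\ref{lemma9hereitis}, recognize the resulting expression as the local Euler factor of $\zeta$ up to a place‑independent constant, and then invoke the assumed Euler product relation \eqref{eq12}.

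First I would simplify $m_{(p)}^2$. With $\mathfrak{a}=0$ the first term in \eqref{mp614} drops, and $\sqrt{2/v_p}=\sqrt{2/(8p)}=1/(2\sqrt p)$, so
\[
m_{(p)}^2=\frac{1}{(1-p)\sqrt p}\,\sqrt{1+p-p^\nu-p^{1-\nu}}.
\]
The key algebraic observation is the factorization $1+p-p^\nu-p^{1-\nu}=(1-p^\nu)(1-p^{1-\nu})$, checked by expanding the right side. Since $\nu=\tfrac12+it$ lies on the critical line, $p^{1-\nu}=\overline{p^{\nu}}$, so $(1-p^\nu)(1-p^{1-\nu})=|1-p^\nu|^2\ge0$ and the square root equals $|1-p^\nu|$. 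Writing $1-p^\nu=-p^\nu(1-p^{-\nu})$ and using $|p^\nu|=p^{\Re\nu}=\sqrt p$ gives $|1-p^\nu|=\sqrt p\,|1-p^{-\nu}|$, hence
\[
m_{(p)}^2=\frac{|1-p^{-\nu}|}{1-p}.
\]
Thus the choice $v_p=8p$ is precisely what makes the finite‑place mass squared equal the local zeta factor $|1-p^{-\nu}|$ divided by the universal factor $1-p$.

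Next I would assemble the Euler product. By the definition $m_{(\infty)}^2=\prod_p m_{(p)}^{-2}$ we get $m_{(\infty)}^2=\bigl(\prod_p(1-p)\bigr)\prod_p |1-p^{-\nu}|^{-1}$. The assumed Euler product relation, in its analytically continued form valid for $\Re\nu=\tfrac12$, gives $\prod_p(1-p^{-\nu})^{-1}=\zeta(\nu)$, so $\prod_p|1-p^{-\nu}|^{-1}=|\zeta(\nu)|$; the remaining factor $\prod_p(1-p)$ is the universal regularized constant, equal to $\zeta(-1)^{-1}=-12$ under the same continuation of the Euler product. Therefore $m_{(\infty)}^2=-12\,|\zeta(\nu)|$. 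If $\nu$ is a zero of $\zeta$ on the critical line then $\zeta(\nu)=0$, whence $m_{(\infty)}^2=0$ and $m_{(\infty)}=0$.

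The main obstacle is not the (short) algebra but the analytic meaning of the products appearing at the end: neither $\prod_p(1-p^{-\nu})$ on $\Re\nu=\tfrac12$ nor $\prod_p(1-p)$ converges in the ordinary sense, and both must be read through analytic continuation of the Euler product — which is exactly the Euler product relation \eqref{eq12} postulated throughout Section~\ref{sec7}. Granting that, the genuine content of the statement is the identification $m_{(p)}^2=(1-p)^{-1}|1-p^{-\nu}|$: up to a place‑independent constant the local mass is the local zeta factor, so a critical‑line zero of $\zeta$ forces the first excited state to become massless at the Archimedean place.
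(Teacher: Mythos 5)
Your proposal is correct and follows essentially the same route as the paper: substitute $\mathfrak{a}=0$, $v_p=8p$ into Eq.~\eqref{mp614}, factor $1+p-p^\nu-p^{1-\nu}=(1-p^\nu)(1-p^{1-\nu})$, identify the regularized Euler product with zeta values, and conclude vanishing at a critical zero. The only cosmetic difference is that you use $\Re\nu=1/2$ early to write everything as $|1-p^{-\nu}|$ and obtain $m_{(\infty)}^2=-12\,|\zeta(\nu)|$, whereas the paper keeps the symmetric form $\sqrt{\zeta(\nu)\zeta(1-\nu)}/\zeta(-1)$, which coincides with yours on the critical line since $\zeta(1-\nu)=\overline{\zeta(\nu)}$ there.
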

\begin{proof}
If $\mathfrak{a}=0$ and $v_p=8p$ then from Eq. \eqref{mp614} we have
\ba
\label{Eulerproduct616}
\prod_p m_{(p)}^2 &=& \prod_p \frac{1}{1-p} \sqrt{\left(1-p^{\nu -1}\right) \left(1-p^{-\nu }\right)} \\
&=& \frac{\zeta(-1)}{\sqrt{\zeta(\nu)\zeta(1-\nu)}},\nn
\ea
so that
\ba
m_{(\infty)}^{2} = \frac{\sqrt{\zeta(\nu)\zeta(1-\nu)}}{\zeta(-1)}.
\ea
Then $m_{(\infty)}=0$ if $\nu$ is a critical zeta zero.
\end{proof}

The Euler product \eqref{Eulerproduct616} holds in the sense of analytic continuation.  Note that in the usual bosonic string theory the zeta value $\zeta(-1)=-1/12$ gives rise to the critical dimension $D=26$.

Condition $m_{(\infty)}=0$ is a consistency condition for the first excited level to be in the irreducible vector representation of $SO(D-2)$, on the Archimedean side. With this condition the $(0,1)$ sector cannot contribute to the massless Archimedean spectrum, because there are no zeta zeros on the $(0,1)$ interval.

With the choices above, the $p2$-brane spectrum contains the following:

\begin{itemize}
\item The ground state $|0\rangle$. With $\mathfrak{a}=0$ in Eq. \eqref{Hconst66} the mass of the ground state will be
$m_{(p)}^2 = 0$, i.e. the $p$-adic ground state is massless. We will comment on the adelic interpretation of this in Section \ref{secconc} below.
\item States $b_\nu^{i\dagger}|0\rangle$ at place $p$, with $\nu$ a critical zeta zero, corresponding to a \emph{photon} transforming as a vector irreducible representation of $SO(D-2)$ at the Archimedean place.
\end{itemize}
In addition to the above, one can also consider the following states:
\begin{itemize}
\item States $b_\nu^{i\dagger}b_{1-\nu}^{j\dagger}|0\rangle$ at place $p$, with $\nu$ a critical zeta zero. In the usual bosonic string such states are in the closed sector. Remembering that in the usual bosonic string case the closed sector mass squared operator has an extra factor of $2$ relative to the open sector mass squared operator, we normalize the $v_p$ coefficients as $v^\mrm{closed}_p=4v^\mrm{open}_p$ (because the $p$-adic mass squared is proportional to $\sqrt{1/v_p}$). Then a computation analogous to Lemma \ref{lemma9hereitis} shows that the $p$-adic mass of these excitations is
\be
\label{padicmass618}
\frac{1}{1-p} \sqrt{\left(1-p^{\nu -1}\right) \left(1-p^{-\nu }\right)}.
\ee 
These excitations will thus have Archimedean mass proportional to $\zeta(\nu)\zeta(1- \nu)$, so they will again be massless at the Archimedean place. Thus, these modes will decompose into a graviton, antisymmetric tensor and dilaton, similar to the $N=2$ level in the usual closed bosonic string theory.
\end{itemize}

\begin{remark} 
The derivation of the photon and graviton from the critical zeta zeros presented here is perfectly analogous to the derivation of the photon and graviton from the critical $D=26$ dimension condition in the usual bosonic string theory. That is, in the usual bosonic string theory the mass of the photon is proportional to $26-D$, so that \emph{choosing} the photon to be a massless vector mode is equivalent to setting $D=26$, otherwise the theory is a non-critical string theory. Similarly, in the case of the p2-brane \emph{choosing} the photon to be a massless vector mode is equivalent to fixing parameter~$\nu$ at the critical zeta zeros.
\end{remark}

\section{A conjecture: Berkovich flow and tachyon decay}
\label{secconc}

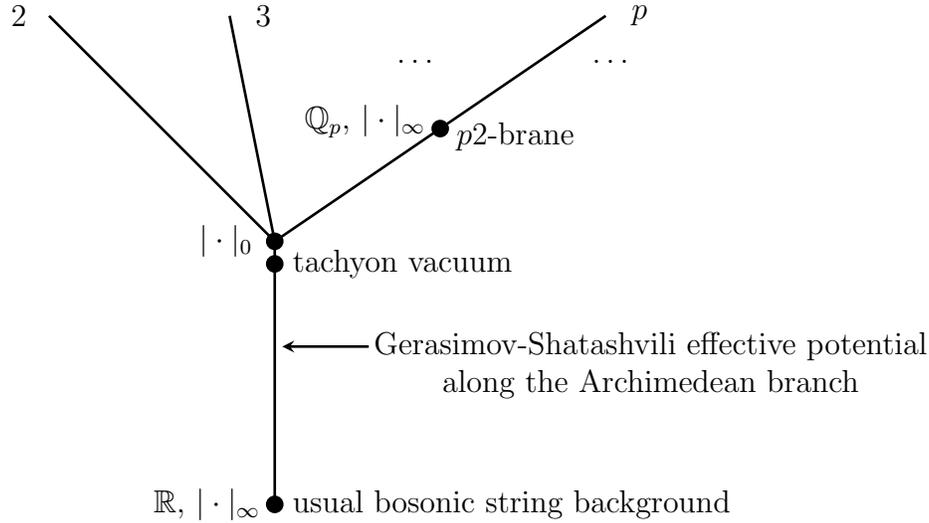
\begin{figure}[h]
\centering
\begin{tikzpicture}
\node [style=simple] (u0) at (0,0) {};
\node [style=simple] (u1)  at (0,-3.5) {};
\node [style=simple] (u1epsilon)  at (0,-0.3) {};
\node [style=simple] (u8)  at (2.2,1.5) {};
\node [style=none] (u3) at (-3,3) {};
\node [style=none] (u4) at (4.4,3) {};
\node [style=none] (u5) at (-0.6,3) {};
\node [style=none] (l2) at (1.2,1.6) {$\mathbb{Q}_p,\, |\cdot|_\infty$};
\node [style=none] (l2prime) at (3.2,1.4) {$p2$-brane};
\node [style=none] (l3) at (-0.15,3) {$3$};
\node [style=none] (l4) at (-3.4,3) {$2$};
\node [style=none] (l5) at (4.85,3) {$p$};
\node [style=none] (l6) at (1.9,2.4) {$\dots$};
\node [style=none] (l6d) at (4.5,2.4) {$\dots$};
\node [style=none] (l7) at (-0.65,0) {$|\cdot|_0$};
\node [style=none] (l9) at (-0.9,-3.5) {$\mathbb{R},\, |\cdot|_\infty$};
\node [style=none] (l9prime) at (3.15,-3.5) {usual bosonic string background};
\node [style=none] (l9epsilon) at (1.7,-0.3) {tachyon vacuum};
\node [style=none] (lGS) at (5.0,-1.4) {Gerasimov-Shatashvili effective potential};
\node [style=none] (lGS) at (5.0,-1.9) {along the Archimedean branch};

\draw [style=simple,stealth-](0.1,-1.4) -- (1.25,-1.4);

\draw [style=simple] (u0.center) to (u1.center);
\draw [style=simple] (u0.center) to (u3.center);
\draw [style=simple] (u0.center) to (u4.center);
\draw [style=simple] (u0.center) to (u5.center);
\end{tikzpicture}
\caption{Proposed relation between the $p2$-brane and the usual unstable tachyon vacuum in bosonic string theory, via Berkovich RG flow. The spectrum Euler product is a local condition around $|\cdot|_0$.}
\label{figberk}
\end{figure}

We will now conclude with a conjectural proposal on how the $p2$-brane fits in the broader landscape of string theory.

Let's first comment on the Archimedean mass of the $p$-adic ground state $|0\rangle$. As explained above, for normal ordering constant $\mathfrak{a}=0$ the $p$-adic ground state is massless. Then the Euler product argument suggests that the Archimedean mass is infinite, and the Archimedean state corresponding to $|0\rangle$ is not a ground state at all.

This observation motivates the following proposal, building on the ideas in \cite{Huang:2020vwx}. Paper \cite{Huang:2020vwx} showed that an Euler product relation is obeyed between the spectra of the $p$-adic and Archimedean particle-in-a-box theories. Furthemore, this Euler product was interpreted as a local condition around the Gauss point $|\cdot|_0$ in the Berkovich space $\MM(\mathbb{Z})$ (for review on Berkovich spaces see e.g. \cite{PoineauTurchettiBerkSchot,baker,jonsson}). This Berkovich space $\MM(\mathbb{Z})$ is a graph with an infinite number of branches, corresponding to places $(p)$ and the Archimedean place $(\infty)$. The point in \cite{Huang:2020vwx} was that along the branches the flow exhibits trivial scaling (because the theory is free), and the Euler product constrains the spectrum around the point $|\cdot|_0$.

We can apply the same ideas here, in the case of the $p2$-brane/bosonic string. Our conjecture will be the following (see Figure \ref{figberk}). The $p2$-brane theory lives at the $|\cdot|_p$ points on the $p$-adic branches, and the usual unstable vacuum of bosonic string theory is at the $|\cdot|_\infty$ point on the Archimedean branch. Along the $p$-adic branches the theory scales trivially, because $p$-adic theories are rigid (a similar point of view was presented in \cite{Ghoshal:2006zh}). The spectrum Euler product of Section \ref{sec7} is a local condition around the Gauss point $|\cdot|_0$, so that the Archimedean spectrum obtained from the product corresponds to a theory right under the Gauss point $|\cdot|_0$ on the Archimedean branch, however in this case the flow along the Archimedean branch will be nontrivial. We furthermore conjecture that this flow corresponds to tachyon decay. Gerasimov-Shatashvili \cite{gershata} (see also \cite{Witten:1992qy,Ghoshal:2000dd,Ghoshal:2000gt}) considered the effective potential for tachyon decay in open-string field theory. They found that at the end of the potential roll the tachyon acquires infinite mass, corresponding to the decay of the space-filling brane. We propose to identify this infinite mass with the infinite Archimedean mass of the $p$-adic ground state $|0\rangle$. Thus, the background where the space-filling brane has decayed would in our picture correspond to the point on the Archimedean branch right under (i.e. infinitesimally away from) the Gauss point $|\cdot|_0$. The Euler product provides information about the theory at this point by flowing down from the $p$-adic branches and going around the $|\cdot|_0$ point, whereas Gerasimov-Shatashvili \cite{gershata} have integrated up along the Archimedean branch. To recover the usual bosonic string unstable tachyon vacuum from our $p2$-brane construction, one would then have to further flow down the Archimedean branch to~$|\cdot|_\infty$.

This interpretation provides a conjecture for the open string tachyon decay. It is known that since the space-filling brane decays, the usual fields arising from the string excitations can no longer be supported on it. However, we have found that such fields can still be supported from the $p$-adic 2-branes, at the point right under $|\cdot|_0$ on the Archimedean branch. Thus, one could conjecture that the space-filling brane decays into $p2$-branes (or rather their scaled analogues on the $p$-adic branches, infinitesimally away from the Gauss point $|\cdot|_0$). 

We should emphasize that in order to try to make the above precise, one will need a larger Berkovich space than the space $\MM\lb \mathbb{Z}\rb$ considered in \cite{Huang:2020vwx}, such as the Berkovich affine line.

The interpretation above also explains why our approach in this paper has functioned. A priori, it is not obvious that for a $p$-adic theory one should have the same Poisson brackets and commutators as in the Archimedean case, and use those to determine the Fourier mode expansion and the operators in the theory. However, if the $p$-adic theories are indeed connected to the Archimedean theories in this manner, it is plausible that having the correct Poisson brackets and commutators at the $\mathbb{Q}_p$ points is a necessary condition in order for them to hold at the Archimean theories also. Similarly, obtaining the Poincar\'e algebra correctly (including the subtle zero mode contributions) at the $\mathbb{Q}_p$ points is likely necessary in order for it to hold at the point on the Archimedean branch right under $|\cdot|_0$, and then further down at the point $|\cdot|_\infty$. 

\section*{Acknowledgments}

We would like to thank John Joseph Carrasco, Jonathan Heckman, Matilde Marcolli, Maria Nastasescu, and Sarthak Parikh for useful discussions. B.S. is indebted to Djordje Radicevic, Matt Headrick, and the other participants at the virtual Brandeis Meetings, where an in-progress version of this work was presented. The work of A.H. was supported by the Simons Collaboration for Mathematicians. The work of B.S. was supported by the Department of Energy under Award Number DE-SC0021485. The work of X.Z. was supported in part by a Summer Undergraduate Research Fellowship at Northwestern University.

\appendix

\section{Zero sector contributions to the Poincar\'e algebra angular momentum commutators}
\label{zerosectorappendix}

We now prove Lemma \ref{lemma7}. It needs to be approached in several steps.

Remember we have (from Eq. \eqref{eq441Jab})
\ba
-J_\mrm{zm}^{ab} &=& \frac{1}{\sqrt{p-1}}\lsb \lb c_0^{a+} + c_1^{a+} \rb \mathfrak{p}^b - \lb c_0^{b+} + c_1^{b+} \rb \mathfrak{p}^a \rsb + \frac{i\otoz}{\sqrt{p-1}} \Big[ \mathfrak{p}^a\tau(c_0^{b-}+c_1^{b-})\nn\\
&+& (c_0^{a-}+c_1^{a-})\mathfrak{p}^b\tau + \sqrt{p-1} (c_0^{a+} c_0^{b-} +  c_0^{a+} c_1^{b-} + c_1^{a+} c_0^{b-}) - \lb a \leftrightarrow b \rb \Big].
\ea

There will be several types of contributions to the $[J^{ab}, J^{cd}]$ commutator. Throughout the computation, we will only keep terms in $\otoz$ that can contribute to the leading order piece.

\textbf{The leading piece contribution:} 

\begin{remark}
We have
\ba
\label{eq832}
& &\lsb \lb c_0^{a+} + c_1^{a+} \rb \mathfrak{p}^b , \lb c_0^{c+} + c_1^{c+} \rb \mathfrak{p}^d \rsb = \\
&=&\lb c_0^{a+} + c_1^{a+} \rb \lsb \mathfrak{p}^b , \lb c_0^{c+} + c_1^{c+} \rb \mathfrak{p}^d \rsb + \lsb c_0^{a+} + c_1^{a+} , \lb c_0^{c+} + c_1^{c+} \rb \mathfrak{p}^d \rsb \mathfrak{p}^b\nn\\
&=& \lb c_0^{a+} + c_1^{a+} \rb \lsb \mathfrak{p}^b , c_0^{c+}\rsb \mathfrak{p}^d + \lb c_0^{c+} + c_1^{c+} \rb \lsb  c_0^{a+} , \mathfrak{p}^d \rsb \mathfrak{p}^b \nn\\
&=& i \alpha \sqrt{p-1} \lsb  \eta^{bc} \lb c_0^{a+} + c_1^{a+} \rb \mathfrak{p}^d - \eta^{ad}\lb c_0^{c+} + c_1^{c+} \rb \mathfrak{p}^b \rsb. \nn
\ea
\end{remark}

With this result we have
\ba
\lsb J_\mrm{zm}^{ab}, J_\mrm{zm}^{cd} \rsb_\mrm{ldn} &\coloneqq& \frac{1}{p-1} \lsb \lb c_0^{a+} + c_1^{a+} \rb \mathfrak{p}^b - \lb c_0^{b+} + c_1^{b+} \rb \mathfrak{p}^a, \lb c_0^{c+} + c_1^{c+} \rb \mathfrak{p}^d - \lb c_0^{d+} + c_1^{d+} \rb \mathfrak{p}^c \rsb \nn \\
&=& \frac{i\alpha}{\sqrt{p-1}}\Big[ - \eta^{ad}\lb c_0^{c+} + c_1^{c+} \rb \mathfrak{p}^b  + \eta^{bc} \lb c_0^{a+} + c_1^{a+} \rb \mathfrak{p}^d \\
& & + \eta^{ac}\lb c_0^{d+} + c_1^{d+} \rb \mathfrak{p}^b  - \eta^{bd} \lb c_0^{a+} + c_1^{a+} \rb \mathfrak{p}^c\nn \\
& & +\eta^{bd}\lb c_0^{c+} + c_1^{c+} \rb \mathfrak{p}^a  - \eta^{ac} \lb c_0^{b+} + c_1^{b+} \rb \mathfrak{p}^d\nn \\
& & - \eta^{bc}\lb c_0^{d+} + c_1^{d+} \rb \mathfrak{p}^a  + \eta^{ad} \lb c_0^{b+} + c_1^{b+} \rb \mathfrak{p}^c \Big]\nn\\
&=& - \frac{i\alpha}{\sqrt{p-1}} \Big\{ \eta^{ac} \lsb \lb c_0^{b+} + c_1^{b+} \rb \mathfrak{p}^d - \lb c_0^{d+} + c_1^{d+} \rb \mathfrak{p}^b \rsb \nn\\
& &+  \eta^{ad} \lsb \lb c_0^{c+} + c_1^{c+} \rb \mathfrak{p}^b - \lb c_0^{b+} + c_1^{b+} \rb \mathfrak{p}^c\rsb\nn\\
& &+  \eta^{bc} \lsb \lb c_0^{d+} + c_1^{d+} \rb \mathfrak{p}^a - \lb c_0^{a+} + c_1^{a+} \rb \mathfrak{p}^d\rsb\nn\\ 
& &+  \eta^{bd} \lsb \lb c_0^{a+} + c_1^{a+} \rb \mathfrak{p}^c - \lb c_0^{+c} + c_1^{+c} \rb \mathfrak{p}^a\rsb\Big\}, \nn
\ea
which is just
\be
\label{appJ1}
\lsb J_\mrm{zm}^{ab}, J_\mrm{zm}^{cd} \rsb_\mrm{ldn} = i\alpha \lb \eta^{ac} J_\mrm{zm}^{bd} - \eta^{ad} J_\mrm{zm}^{bc} - \eta^{bc} J_\mrm{zm}^{ad} + \eta^{bd} J_\mrm{zm}^{ac} \rb.
\ee

\textbf{The time-dependent piece:}

As might be expected, this piece will cancel in the Poincar\'e algebra commutator.

\begin{remark}
Keeping track only of terms that will contribute at order $1/\otoz$ and using Eq. \eqref{thisiseq54}, we have
\ba
& &\lsb \lb c_0^{a+} + c_1^{a+} \rb \mathfrak{p}^b, \mathfrak{p}^c(c_0^{d-}+c_1^{d-}) \rsb =  \lb c_0^{a+} + c_1^{a+} \rb \lsb \mathfrak{p}^b, \mathfrak{p}^c(c_0^{d-}+c_1^{d-}) \rsb \\
& &+ \lsb c_0^{a+} + c_1^{a+} , \mathfrak{p}^c(c_0^{d-}+c_1^{d-}) \rsb \mathfrak{p}^b = \lb \alpha -1  \rb \frac{\eta^{ad}}{\otoz} \mathfrak{p}^c \mathfrak{p}^b. \nn
\ea
\end{remark}

With this result the contribution to the $\lsb J_\mrm{zm}^{ab}, J_\mrm{zm}^{cd} \rsb$ commutator can be calculated. There will be two contributions, one of which is
\ba
\label{appJ2}
\lsb J_\mrm{zm}^{ab}, J_\mrm{zm}^{cd} \rsb_\mrm{td,1} &\coloneqq& \frac{i\otoz \tau}{p-1} \big[ \lb c_0^{a+} + c_1^{a+} \rb \mathfrak{p}^b - \lb c_0^{b+} + c_1^{b+} \rb \mathfrak{p}^a , \mathfrak{p}^c(c_0^{d-}+c_1^{d-})\\
& & + (c_0^{c-}+c_1^{c-})\mathfrak{p}^d - \lb c \leftrightarrow d \rb \big] \nn \\
&=& \frac{i(\alpha-1) \tau}{p-1} \lsb \eta^{ad}\mathfrak{p}^b\mathfrak{p}^c + \eta^{ac}\mathfrak{p}^b \mathfrak{p}^d - \eta^{bd}\mathfrak{p}^a\mathfrak{p}^c - \eta^{bc}\mathfrak{p}^a \mathfrak{p}^d - \lb c \leftrightarrow d \rb \rsb \nn \\
&=&0, \nn
\ea
so it vanishes. The second contribution is obtained by swapping indices $\{ab\}$ and $\{cd\}$ and introducing a minus sign, so it vanishes as well.

\textbf{The cross-term contribution:}

\begin{remark}
We have the following commutators (only to leading order in $1/\otoz$):
\ba
\label{eqA7}
& &\lsb \lb c_0^{a+} + c_1^{a+} \rb \mathfrak{p}^b , c_0^{c+} c_0^{d-} \rsb = c_0^{c+} \lsb c_0^{a+} + c_1^{a+}, c_0^{d-} \rsb \mathfrak{p}^b = \frac{\alpha-1}{\otoz} \eta^{ad} c_0^{c+} \mathfrak{p}^b\\
& &\lsb \lb c_0^{a+} + c_1^{a+} \rb \mathfrak{p}^b , c_0^{c+} c_1^{d-} \rsb = c_0^{c+} \lsb c_0^{a+} + c_1^{a+}, c_1^{d-} \rsb \mathfrak{p}^b = 0\nn\\
& &\lsb \lb c_0^{a+} + c_1^{a+} \rb \mathfrak{p}^b , c_1^{c+} c_0^{d-} \rsb = c_1^{c+} \lsb c_0^{a+} + c_1^{a+}, c_0^{d-} \rsb \mathfrak{p}^b = \frac{\alpha-1}{\otoz} \eta^{ad} c_1^{c+} \mathfrak{p}^b\nn
\ea
\end{remark}

Then there will be two contributions to the Poincar\'e algebra commutator, one of which is
\ba
\label{ctm1}
[J_\mrm{zm}^{ab},J_\mrm{zm}^{cd}]_\mrm{ctm,1} &\coloneqq& \frac{i\otoz}{\sqrt{p-1}}\big[ \lb c_0^{a+} + c_1^{a+} \rb \mathfrak{p}^b - \lb c_0^{b+} + c_1^{b+} \rb \mathfrak{p}^a, c_0^{c+} c_0^{d-} + c_0^{c+} c_1^{d-} \nn\\
& & + c_1^{c+} c_0^{d-} - \lb c \leftrightarrow d \rb \big] \\
&=& \frac{i(\alpha-1)}{\sqrt{p-1}} \big[ \eta^{ad} \lb c_0^{c+} +c_1^{c+} \rb \mathfrak{p}^b - \eta^{bd} \lb c_0^{c+} +c_1^{c+} \rb \mathfrak{p}^a \nn\\
& & - \eta^{ac} \lb c_0^{d+} +c_1^{d+} \rb \mathfrak{p}^b + \eta^{bc} \lb c_0^{d+} +c_1^{d+} \rb \mathfrak{p}^a  \big]\nn 
\ea
The second contribution can be obtained from the first by swapping indices $\{ab\}$ and $\{cd\}$ and introducing a minus sign:
\ba
\label{ctm2}
[J_\mrm{zm}^{ab},J_\mrm{zm}^{cd}]_\mrm{ctm,2} &=& \frac{i(\alpha-1)}{\sqrt{p-1}} \big[ - \eta^{bc} \lb c_0^{a+} +c_1^{a+} \rb \mathfrak{p}^d + \eta^{bd} \lb c_0^{a+} +c_1^{a+} \rb \mathfrak{p}^c \\
& & + \eta^{ac} \lb c_0^{b+} +c_1^{b+} \rb \mathfrak{p}^d - \eta^{ad} \lb c_0^{b+} +c_1^{b+} \rb \mathfrak{p}^c  \big] \nn
\ea
Putting Eqs. \eqref{ctm1}, \eqref{ctm2} together, we have
\ba
\label{appJ3}
[J_\mrm{zm}^{ab},J_\mrm{zm}^{cd}]_\mrm{ctm} &=& -i(\alpha-1) \lb \eta^{ac} J_\mrm{zm}^{bd} - \eta^{ad} J_\mrm{zm}^{bc} - \eta^{bc} J_\mrm{zm}^{ad} + \eta^{bd} J_\mrm{zm}^{ac} \rb.
\ea

\textbf{The second order contribution:}

Direct computation shows that the second order contribution 
\ba
\label{appJ4}
[J_\mrm{zm}^{ab}, J_\mrm{zm}^{cd} ]_\mrm{2nd} &=&  -\otoz^2 \Big[ (c_0^{a+} c_0^{b-} +  c_0^{a+} c_1^{b-} + c_1^{a+} c_0^{b-}) - \lb a \leftrightarrow b \rb, \\
& &(c_0^{c+} c_0^{d-} +  c_0^{c+} c_1^{d-} + c_1^{c+} c_0^{d-}) - \lb c \leftrightarrow d \rb \Big] \nn
\ea
cannot contaminate the leading-order piece.

\textbf{The result:}

Putting the diferent contributions \eqref{appJ1}, \eqref{appJ2}, \eqref{appJ3}, \eqref{appJ4} together, only Eqs. \eqref{appJ1}, \eqref{appJ3} contribute, the piece proportional to $\alpha$ cancels and we have obtained
\ba
\lsb J_\mrm{zm}^{ab}, J_\mrm{zm}^{cd} \rsb &=& \lsb J_\mrm{zm}^{ab}, J_\mrm{zm}^{cd} \rsb_\mrm{ldn} + [J_\mrm{zm}^{ab},J_\mrm{zm}^{cd}]_\mrm{ctm} \\
&=& i \lb \eta^{ac} J_\mrm{zm}^{bd} - \eta^{ad} J_\mrm{zm}^{bc} - \eta^{bc} J_\mrm{zm}^{ad} + \eta^{bd} J_\mrm{zm}^{ac} \rb, \nn
\ea
as advertised.

\begin{remark}
\label{appremark}
If $q\neq 1$, that is we alter the $\delta_{\mu+\nu}$ term in orthonormality conditions \eqref{eq221}, then the second commutator in Eq. \eqref{eqA7} no longer vanishes, which contaminates Eq. \eqref{ctm1} and the $[J_\mrm{zm}^{ab},J_\mrm{zm}^{cd}]$ commutator no longer closes.
\end{remark}

\section{Critical line algebras at place $p$}

As explained above, since the worldsheet is discrete one should not expect to recover the Virasoro algebra in the case of the $p2$-brane, from a single place $p$. Nonetheless, the $p2$-brane creation and annihilation operators can be combined to obtain an algebra, in a fashion analogous to how the Virasoro operators are obtained from the bosonic string creation and annihilation operators. We will explain this below.

Note that we will ignore ordering issues, so this construction can be better thought of as a $p2$-brane analogue of the Witt algebra, rather than the Virasoro algebra.

Because for our construction the set of values $t$ at which the operators are defined must be closed under addition, our algebras will be most naturally defined on the critical line, rather than on $[0,1]$ or the set of critical zeros.

\begin{defn}
Let $S$ be a discrete subset of the critical line $\{1/2+it|t\in\mathbb{R}\}$ that is closed under addition of the imaginary parts. We define
\ba
L_{t} &\coloneqq& \sum_{t'\in S} b_{\frac{1}{2}+i(t+t')}^{a\dagger} b_{\frac{1}{2}-it'}^a,\\
G_{t} &\coloneqq& \sum_{t'\in S} b_{\frac{1}{2}+i(t+t')}^{a\dagger} b_{\frac{1}{2}+it'}^a,\nn
\ea
and similarly
\ba
\LL_{t} \coloneqq \sum_{t'\in S} \sqrt{\frac{2}{v_p}\lb 1 + p - p^{\frac{1}{2}-i(t+t')} - p^{\frac{1}{2}+i(t+t')} \rb} c_{\frac{1}{2}+i(t+t')}^{a\dagger} c^a_{\frac{1}{2}-it'} \\
\GG_t \coloneqq \sum_{t'\in S} \sqrt{\frac{2}{v_p}\lb 1 + p - p^{\frac{1}{2}-i(t+t')} - p^{\frac{1}{2}+i(t+t')} \rb} c_{\frac{1}{2}+i(t+t')}^{a\dagger} c^a_{\frac{1}{2}+it'} \nn
\ea
\end{defn}

\begin{lemma}
We have the commutators
\ba
\lsb L_{t_1},L_{t_2}\rsb &=& G_{t_1-t_2}-G_{t_2-t_1},\\
\lsb G_{t_1},G_{t_2}\rsb &=&0, \nn \\
\lsb L_{t_1},G_{t_2}\rsb &=& L_{t_1-t_2}-L_{t_1+t_2},\nn
\ea
and
\ba
[ \LL_{t_1}, \LL_{t_2} ] &=& \frac{1}{2}\lb \GG_{t_1-t_2}-\GG_{t_2-t_1} \rb,\\
\lsb \GG_{t_1}, \GG_{t_2} \rsb &=& 0,\nn\\
\lsb \LL_{t_1}, \GG_{t_2} \rsb &=& \frac{1}{2} \lb \LL_{t_1-t_2}-\LL_{t_1+t_2} \rb. \nn
\ea
\end{lemma}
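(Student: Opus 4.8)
The plan is to prove all six identities by direct computation from the canonical commutation relations, using the standard bilinear identity
\[
[AB,CD] = A[B,C]D + AC[B,D] + [A,C]DB + C[A,D]B,
\]
specialised to the present situation: both $L_t$ (resp. $\LL_t$) and $G_t$ (resp. $\GG_t$) are sums of bilinears of the schematic form $b^\dagger b$ (resp. $c^\dagger c$), so the terms $[A,C]=[b^{a\dagger},b^{b\dagger}]=0$ and $[B,D]=[b^a,b^b]=0$ from \eqref{eq176} drop out, and only the two mixed $b^\dagger$–$b$ contractions survive. First I would record the inputs on the critical line: from \eqref{eq176} and \eqref{eq349pairing}, for $\mu,\nu\in\{1/2+it\}$ one has $[b^a_\mu,b^{b\dagger}_\nu]=\eta^{ab}\delta_{\mu-\nu}$ and $[c^a_\mu,c^{b\dagger}_\nu]=\tfrac{\eta^{ab}}{4\sin(\omega_\mu/2)}\delta_{\mu-\nu}$, with the spacetime index contracted by $\eta^{ab}$ throughout. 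I would also note that the re-indexings used below (e.g. $t'\mapsto -t'-t_2$) are legitimate precisely because $S$ is closed under addition of imaginary parts and, as the manipulations require, under negation — i.e. a discrete subgroup, which is the relevant case given the stated hypothesis.

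For the $b$-operators the computation is then mechanical. In $[L_{t_1},L_{t_2}]$ the two surviving terms are of the form $b^{a\dagger}_{\cdot}[b^a_\cdot,b^{b\dagger}_\cdot]b^b_\cdot$; the Kronecker delta fixes one summation variable in terms of the other, collapsing the double sum to a single sum, and after a change of variables each term reassembles into a single $G$ with the appropriate argument. Keeping track of the sign from the $[A,D]$ term gives $[L_{t_1},L_{t_2}]=G_{t_1-t_2}-G_{t_2-t_1}$. The identical manipulation, now with the index pattern of $G$ on the relevant side, gives $[G_{t_1},G_{t_2}]=0$ (the two contributions cancel, both having argument $t_1+t_2$) and $[L_{t_1},G_{t_2}]=L_{t_1-t_2}-L_{t_1+t_2}$.

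For the $c$-operators one repeats the same three steps, but must now carry the prefactors $\sqrt{\tfrac{2}{v_p}\!\left(1+p-p^{1/2-i(t+t')}-p^{1/2+i(t+t')}\right)}$ from the definitions of $\LL_t,\GG_t$ together with the factor $1/4\sin(\omega_\mu/2)$ from the commutator. Using \eqref{eqomegamurel349} I would set $4\sin^2(\omega_\mu/2)=\tfrac{2}{v_p}(1+p-p^\mu-p^{1-\mu})=:w(t)$ for $\mu=\tfrac12+it$ (well-defined, positive, with $\omega_\mu\in(0,\pi)$, by Lemma \ref{lemmaomeganuprop}), and record $w(-t)=w(t)$. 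The point is that in each mixed contraction the Kronecker delta forces the argument of the $1/\sqrt{w}$ factor to coincide with one of the two outer $\sqrt{w}$ factors, so the three square roots collapse to a single $\sqrt{w(\cdot)}/2$ — exactly the prefactor of one term of $\GG$ (resp. $\LL$). This yields $[\LL_{t_1},\LL_{t_2}]=\tfrac12(\GG_{t_1-t_2}-\GG_{t_2-t_1})$, $[\GG_{t_1},\GG_{t_2}]=0$, and $[\LL_{t_1},\GG_{t_2}]=\tfrac12(\LL_{t_1-t_2}-\LL_{t_1+t_2})$.

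The only real obstacle is bookkeeping: ensuring the summation-variable substitutions are carried out consistently across the two surviving terms so that each genuinely reassembles into the claimed operator with the correct argument, and, in the $c$-case, checking that the $w$-prefactors recombine as asserted — which rests on $w(-t)=w(t)$ and on the delta constraints identifying the relevant arguments. There is no deeper difficulty; since ordering/normal-ordering constants are ignored (as stated in the paper), no central extension arises and the resulting algebra is of Witt type.
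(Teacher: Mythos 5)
Your proposal is correct and follows essentially the same route as the paper: a direct Leibniz-rule expansion in which only the mixed $b^\dagger$--$b$ (resp.\ $c^\dagger$--$c$) contractions survive, the Kronecker deltas collapse the double sums, and a re-indexing reassembles each term into $L$, $G$, $\LL$ or $\GG$ with the stated arguments (the paper spells this out for the $b$-operators and declares the $\LL_t,\GG_t$ case ``analogous,'' which your prefactor bookkeeping via $4\sin^2(\omega_\mu/2)=\tfrac{2}{v_p}(1+p-p^\mu-p^{1-\mu})$ and $w(-t)=w(t)$ correctly fills in). Your observation that the re-indexings also require $S$ to be closed under negation is a fair sharpening of the paper's hypothesis, not a deviation from its argument.
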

\begin{proof}
We show this by direct computation. All sums will be over $S$.
\ba
\lsb L_{t_1},L_{t_2}\rsb &=& \sum_{t',t''} \lsb b_{\frac{1}{2}+i(t_1+t')}^{a\dagger} b^a_{\frac{1}{2}-it'} , b_{\frac{1}{2}+i(t_2+t'')}^{b\dagger} b^b_{\frac{1}{2}-it''} \rsb\\
&=&\sum_{t',t''} \lb b_{\frac{1}{2}+i(t_1+t')}^{a\dagger} \lsb b^a_{\frac{1}{2}-it'} , b_{\frac{1}{2}+i(t_2+t'')}^{b\dagger} b^b_{\frac{1}{2}-it''} \rsb + \lsb b_{\frac{1}{2}+i(t_1+t')}^{a\dagger}  , b_{\frac{1}{2}+i(t_2+t'')}^{b\dagger} b_{\frac{1}{2}-it''}^b \rsb b_{\frac{1}{2}-it'}^a \rb \nn\\
&=&\sum_{t',t''} \lb b_{\frac{1}{2}+i(t_1+t')}^{a\dagger} \lsb b^a_{\frac{1}{2}-it'} , b_{\frac{1}{2}+i(t_2+t'')}^{b\dagger} \rsb b^b_{\frac{1}{2}-it''}  + b_{\frac{1}{2}+i(t_2+t'')}^{b\dagger} \lsb b_{\frac{1}{2}+i(t_1+t')}^{a\dagger}  , b^b_{\frac{1}{2}-it''} \rsb b^a_{\frac{1}{2}-it'} \rb \nn \\
&=&\sum_{t',t''} \lb b_{\frac{1}{2}+i(t_1+t')}^{a\dagger} \eta^{ab}\delta_{t_2+t''+t'} b^b_{\frac{1}{2}-it''}  - b_{\frac{1}{2}+i(t_2+t'')}^{b\dagger} \eta^{ab} \delta_{t_1+t'+t''} b^a_{\frac{1}{2}-it'} \rb \nn\\
&=& \sum_{t''} b_{\frac{1}{2}+i(t_1-t_2-t'')}^{a\dagger} b^a_{\frac{1}{2}-it''}  - \sum_{t'} b_{\frac{1}{2}+i(-t_1+t_2-t')}^{a\dagger} b^a_{\frac{1}{2}-it'} \nn\\
&=& G_{t_1-t_2}-G_{t_2-t_1}. \nn
\ea

\ba
\lsb G_{t_1},G_{t_2}\rsb &=& \sum_{t',t''} \lsb b_{\frac{1}{2}+i(t_1+t')}^{a\dagger} b^a_{\frac{1}{2}+it'} , b_{\frac{1}{2}+i(t_2+t'')}^{b\dagger} b^b_{\frac{1}{2}+it''} \rsb\\
&=&\sum_{t',t''} \lb b_{\frac{1}{2}+i(t_1+t')}^{a\dagger} \lsb b^a_{\frac{1}{2}+it'} , b_{\frac{1}{2}+i(t_2+t'')}^{b\dagger} b^b_{\frac{1}{2}+it''} \rsb + \lsb b_{\frac{1}{2}+i(t_1+t')}^{a\dagger}  , b_{\frac{1}{2}+i(t_2+t'')}^{b\dagger} b_{\frac{1}{2}+it''}^b \rsb b_{\frac{1}{2}+it'}^a \rb \nn\\
&=&\sum_{t',t''} \lb b_{\frac{1}{2}+i(t_1+t')}^{a\dagger} \lsb b^a_{\frac{1}{2}+it'} , b_{\frac{1}{2}+i(t_2+t'')}^{b\dagger} \rsb b^b_{\frac{1}{2}+it''}  + b_{\frac{1}{2}+i(t_2+t'')}^{b\dagger} \lsb b_{\frac{1}{2}+i(t_1+t')}^{a\dagger}  , b^b_{\frac{1}{2}+it''} \rsb b^a_{\frac{1}{2}+it'} \rb \nn \\
&=&\sum_{t',t''} \lb b_{\frac{1}{2}+i(t_1+t')}^{a\dagger} \eta^{ab}\delta_{t_2+t''-t'} b^b_{\frac{1}{2}+it''}  - b_{\frac{1}{2}+i(t_2+t'')}^{b\dagger} \eta^{ab} \delta_{t_1+t'-t''} b^a_{\frac{1}{2}+it'} \rb \nn\\
&=& 0. \nn
\ea

\ba
\lsb L_{t_1},G_{t_2}\rsb &=& \sum_{t',t''} \lsb b_{\frac{1}{2}+i(t_1+t')}^{a\dagger} b^a_{\frac{1}{2}-it'} , b_{\frac{1}{2}+i(t_2+t'')}^{b\dagger} b^b_{\frac{1}{2}+it''} \rsb\\
&=&\sum_{t',t''} \lb b_{\frac{1}{2}+i(t_1+t')}^{a\dagger} \lsb b^a_{\frac{1}{2}-it'} , b_{\frac{1}{2}+i(t_2+t'')}^{b\dagger} b^b_{\frac{1}{2}+it''} \rsb + \lsb b_{\frac{1}{2}+i(t_1+t')}^{a\dagger}  , b_{\frac{1}{2}+i(t_2+t'')}^{b\dagger} b_{\frac{1}{2}+it''}^b \rsb b_{\frac{1}{2}-it'}^a \rb \nn\\
&=&\sum_{t',t''} \lb b_{\frac{1}{2}+i(t_1+t')}^{a\dagger} \lsb b^a_{\frac{1}{2}-it'} , b_{\frac{1}{2}+i(t_2+t'')}^{b\dagger} \rsb b^b_{\frac{1}{2}+it''}  + b_{\frac{1}{2}+i(t_2+t'')}^{b\dagger} \lsb b_{\frac{1}{2}+i(t_1+t')}^{a\dagger}  , b^b_{\frac{1}{2}+it''} \rsb b^a_{\frac{1}{2}-it'} \rb \nn \\
&=&\sum_{t',t''} \lb b_{\frac{1}{2}+i(t_1+t')}^{a\dagger} \eta^{ab}\delta_{t_2+t''+t'} b^b_{\frac{1}{2}+it''}  - b_{\frac{1}{2}+i(t_2+t'')}^{b\dagger} \eta^{ab} \delta_{t_1+t'-t''} b^a_{\frac{1}{2}-it'} \rb \nn\\
&=& \sum_{t''} b_{\frac{1}{2}+i(t_1-t_2-t'')}^{a\dagger} b^a_{\frac{1}{2}+it''}  - \sum_{t'} b_{\frac{1}{2}+i(t_1+t_2+t')}^{a\dagger} b^a_{\frac{1}{2}-it'} \nn\\
&=& L_{t_1-t_2}-L_{t_1+t_2}. \nn
\ea
The computations for the $\LL_t$ and $\GG_t$ commutators are analogous.
\end{proof}

\end{spacing}

\end{document}